\documentclass{article}

\PassOptionsToPackage{numbers, compress}{natbib}

\usepackage[preprint]{neurips_2026}

\usepackage{amsthm}
\usepackage{amsmath}
\usepackage{amssymb}
\usepackage[utf8]{inputenc} 
\usepackage[T1]{fontenc}    
\usepackage{hyperref}       
\usepackage{url}            
\usepackage{booktabs}       
\usepackage{amsfonts}       
\usepackage{nicefrac}       
\usepackage{microtype}      
\usepackage{xcolor}         
\usepackage{wrapfig}

\usepackage{tikz}
\usetikzlibrary{matrix,decorations.pathreplacing}
\usetikzlibrary{positioning,fit,arrows.meta,calc,matrix}
\usetikzlibrary{positioning,calc,fit,backgrounds}
\usepackage{pgfplots}
\pgfplotsset{compat=1.18}
\usepackage{booktabs}
\usepackage{wrapfig}

\newtheorem{theorem}{Theorem}[section]
\newtheorem{lemma}[theorem]{Lemma}

\usepackage[ruled,vlined]{algorithm2e}

\usepackage{natbib}
\usepackage{comment}
\usepackage{cleveref}

\usepackage{mathtools}
\usepackage{todonotes}

\theoremstyle{definition}

\theoremstyle{definition}
\newtheorem{definition}{Definition}

\newtheorem{remark}{Remark}
\newtheorem{example}{Example}
\newtheorem*{lemmarestatement}{Lemma}
\newtheorem*{theoremrestatement}{Theorem}

\usepackage{amsmath,amsfonts,bm}

\def\eqref#1{equation~\ref{#1}}

\def\1{\bm{1}}

\DeclareMathAlphabet{\mathsfit}{\encodingdefault}{\sfdefault}{m}{sl}
\SetMathAlphabet{\mathsfit}{bold}{\encodingdefault}{\sfdefault}{bx}{n}

\newcommand{\E}{\mathbb{E}}

\newcommand{\XSpace}{\mathcal{X}}

\newcommand{\dd}{\,\mathrm{d}}
\title{Multi-Marginal Couplings for Metropolis--Hastings}

\author{%
    Buu Phan \\
   University of Toronto \\
   \texttt{truong.phan@mail.utoronto.ca}
   \And
   Gergely Flamich \\
   Imperial College London \\
   \texttt{g.flamich@imperial.ac.uk}
   \AND
   Ashish Khisti \\
   University of Toronto \\
   \texttt{akhisti@ece.utoronto.ca}
   \And
   Shahab Asoodeh \\
   McMaster University \\
   \texttt{asoodeh@mcmaster.ca}
}

\begin{document}

\maketitle

\begin{abstract}
Convergence diagnosis for Markov chain Monte Carlo is a matter of fundamental importance in computational statistics: it determines the resources allocated to a particular sampling problem and influences the practitioner's view of the quality of estimates obtained from a Markov chain. Motivated by this, we contribute to the emerging class of coupling-based convergence diagnostic algorithms.
Concretely, we study coupling multiple Metropolis--Hastings chains using multi-marginal coupling. We introduce a natural objective for this setting and establish lower and upper bounds by drawing connections to list-level distribution coupling and distributed pairwise-matching problems. This analysis ultimately leads to a shared-randomness Poisson Monte Carlo
construction for coupling multiple Markov chains.  In this process, we avoid a key dimension-dependent bottleneck in the runtime complexity of classical Poisson Monte Carlo by developing an adaptive rule for updating the point process, yielding significant gains in high-dimensional settings. Experiments on grand couplings of Markov chains show that our methods improve coalescence rates across dimensions, reducing meeting times by up to \(50\%\) compared with existing baselines.
\end{abstract}
\section{Introduction}
Markov chain Monte Carlo (MCMC) is a central methodology for sampling from complex probability distributions when exact simulation techniques are computationally intractable. It proceeds by constructing a Markov chain whose stationary distribution matches a target distribution of interest, allowing expectations under the target law to be approximated from the chain trajectory. Because of its broad applicability in Bayesian inference, statistical physics, and machine learning, MCMC has become one of the most widely used tools for high-dimensional probabilistic computation.

A fundamental challenge in MCMC is convergence diagnosis: determining whether the chain has run long enough for its distribution to be sufficiently close to stationarity. Classical approaches often rely on running multiple independent chains with different initializations and comparing their behavior, as in the Gelman--Rubin diagnostic and its extensions \cite{brooks1998general,gelman1992inference}. Coupling provides another principled alternative by jointly evolving chains from different initializations and using their meeting behavior as a diagnostic signal, where a meeting occurs when two or more chains coalesce in the same state \cite{biswas2019estimating,corenflos2025coupling}. As the effectiveness of such diagnostics depends on how quickly coalescence occurs, several studies have examined the design of efficient couplings, with a primary focus on two chains \cite{wang2021maximal}. By comparison, the multi-chain coupling setting has received relatively limited attention, despite the existence of convergence diagnostics based on grand couplings, which jointly evolve chains from multiple initial states until they coalesce \cite{johnson1996studying}. 
\par
Coupling multiple chains is not as simple as pairwise coupling. 
A natural first strategy is to construct a multi-chain coupling using pairwise coupling as a building block.
For example, we could have a ``server-client'' topology, in which we couple each chain to a distinguished reference chain;
alternatively, we could randomly choose coupling partners at each transition \cite{wang2021maximal,biswas2019estimating}.
Such approaches, however, can be highly suboptimal: they may spend effort attempting to couple chains whose states are far apart, and thus whose coalescence is unlikely. 
A more effective strategy should preferentially match chains that are close under the transition kernel, but identifying such pairs is challenging in high-dimensional settings. 
In this work, we develop adaptive coupling strategies in which the joint construction itself determines which chains should meet by connecting the multi-marginal setting to list-level coupling and distributed matching. 
This strategy avoids committing to fixed partners, promotes coalescence among compatible chains, and leads to a kernel-coupling construction with effective meeting behavior and favorable runtime scaling in high-dimensional multi-chain settings. In particular, our contributions are:
\begin{enumerate}
\item In \Cref{sec:multi-marginal}, we study multi-chain coupling from a multi-marginal perspective and analyze theoretical properties of the resulting meeting behavior through list-level coupling (membership inclusion), and distributed matching formulations.
\item In \Cref{sec:mh_pml}, we introduce an efficient Poisson-matching scheme for coupling multiple Metropolis--Hastings kernels. The scheme uses a flexible adaptive proposal update that yields runtime scaling primarily with the number of chains rather than the dimension.
\item In \Cref{sec:experiments}, we show experimentally that our method outperforms natural grand-coupling baselines, substantially reducing coalescence times in high-dimensional settings when using many chains.
\end{enumerate}

\section{Algorithms for Multi-Marginals Coupling }\label{sec:multi-marginal}

Before discussing the multi-marginal formulation, we recall the pairwise case. 
For probability measures \(P\) and \(Q\), the largest possible agreement probability over all couplings,
\(
\Pr(X^P=X^Q),
\)
is \(1-d_{\mathrm{TV}}(P,Q)\), where $d_{\mathrm{TV}}(P,Q)$ is the total-variation distance. A coupling attaining this value is called a maximal coupling \citep{levin2017markov}; we review its achievability in Appendix~\ref{appendix:max_coupling}.

When there are more than two marginals, there is no canonical notion of agreement. Consider \(C\) random variables \(X^1,\dots,X^C\) with marginals \(X^i\sim P^i\). Natural objectives could be to maximize the probability of full coalescence or the probability of at least one pairwise match, as studied in the multi-marginal coupling setting of \citet{angel2019pairwise}:
\[
\Pr(X^1=X^2=\cdots=X^C)
\qquad \text{or} \qquad
\Pr(\exists\, i\neq j : X^i=X^j).
\]
For coupling multiple Markov chains, coalescence is expected to occur progressively over iterations. Therefore, full coalescence is often too stringent, especially as \(C\) grows. In contrast, the existence of a single pairwise match does not distinguish between a coupling that merges two variables and one that merges many. These observations motivate the following measure of partial coalescence.

\paragraph{Problem setup.}
To quantify partial coalescence, we consider the optimization problem
\begin{equation}
\mathbf G^*
=
\inf_{\gamma\in\Gamma(P^1,\dots,P^C)}
\mathbb E_{\gamma}[G],
\qquad
G
\triangleq
\bigl|\{\text{unique values among }X^1,\dots,X^C\}\bigr|,
\label{opt_obj}
\end{equation}
where \(\Gamma(P^1,\dots,P^C)\) is the set of all couplings of \(P^1,\dots,P^C\) and \(G\) is the number of distinct realized values among the \(C\) variables. It directly measures progressive coalescence: \(G{=}C\) corresponds to complete separation, \(G{=}1\) to full coalescence, and intermediate values to partial merges.

The structure of an optimal coupling for \eqref{opt_obj}, and how to construct one, are not immediately clear. A natural first attempt is to extend maximal coupling by selecting one variable as an anchor and coupling all other variables to it. However, this construction is asymmetric and designed to maximize agreement with the chosen anchor, rather than directly minimize the number of distinct realized values. This section will show that this objective admits the following informative lower and upper bounds, shown Theorem \ref{theorem:general_bounds}.\\

\begin{theorem}\label{theorem:general_bounds}
Given probability measures \(P^1,\dots,P^C\), and a permutation $\sigma$ over $C$ elements, define
\[
\bar{\nu}^{\sigma (k)}(\mathrm dx)
\coloneqq
\frac{1}{C-k}\sum_{j=k+1}^{C} P^{\sigma(j)}(\mathrm dx),
\qquad k=1,\dots,C-1 .
\]
Moreover, let \(E_m(P\|Q)=\int \max\{P(x)-m Q(x),0\}\,\mathrm dx\) denote the Hockey-Stick divergence \citep{sason2016f} with $m \geq 1$.
Then, the optimal value \(\mathbf G^*\) in \eqref{opt_obj} satisfies
\begin{equation}
1+ \sup_{\sigma }\sum_{k=1}^{C-1}
E_{C-k}\!\left(P^{\sigma(k)}\middle\|\bar{\nu}^{\sigma (k)}\right)
\leq
\mathbf G^*
\leq
\frac{
1+\sqrt{
1+8\sum_{1\le i<j\le C}
\frac{2\,d_{\mathrm{TV}}(P^i,P^j)}
{1+d_{\mathrm{TV}}(P^i,P^j)}
}
}{2}.
\label{prop:bnd_general}
\end{equation}
\end{theorem}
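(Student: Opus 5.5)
The lower bound follows from a sequential decomposition of $G$ along a fixed ordering, combined with a list-matching bound for each term. The upper bound follows from counting mismatched pairs under a symmetric coupling that is good for every pair at once, then applying Jensen's inequality.

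\emph{Lower bound.} Fix a permutation $\sigma$ and any coupling $\gamma\in\Gamma(P^1,\dots,P^C)$. Scan the variables in the order $\sigma(1),\dots,\sigma(C)$ and charge each distinct value to the last variable carrying it. This gives the exact identity
\[
G \;=\; 1+\sum_{k=1}^{C-1}\mathbf 1\bigl\{X^{\sigma(k)}\notin\{X^{\sigma(j)}:j>k\}\bigr\},
\]
since $X^{\sigma(C)}$ always contributes the $1$. It therefore suffices to show that for each $k$
\[
\Pr\bigl(X^{\sigma(k)}\in\{X^{\sigma(j)}:j>k\}\bigr)\le \int\min\Bigl\{P^{\sigma(k)},\textstyle\sum_{j>k}P^{\sigma(j)}\Bigr\}.
\]
This is the list-level (membership) bound, and I would prove it as follows.
\begin{itemize}
\item Let $M$ be the matching event, and consider the sub-probability measure $\mu(A)=\Pr(X^{\sigma(k)}\in A,\,M)$.
\item Clearly $\mu\le P^{\sigma(k)}$.
\item By a union bound, $\mu(A)\le\sum_{j>k}\Pr(X^{\sigma(j)}\in A)$, so $\mu\le\sum_{j>k}P^{\sigma(j)}$ setwise.
\item Take densities with respect to a common dominating measure. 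A measure dominated setwise by two measures has density at most their pointwise minimum, so $\mu(\XSpace)\le\int\min\{\cdot,\cdot\}$.
\end{itemize}
Because $(C-k)\bar\nu^{\sigma(k)}=\sum_{j>k}P^{\sigma(j)}$, we have $1-\int\min\{P^{\sigma(k)},(C-k)\bar\nu^{\sigma(k)}\}=E_{C-k}(P^{\sigma(k)}\|\bar\nu^{\sigma(k)})$. Taking expectations in the identity gives $\E_\gamma[G]\ge 1+\sum_k E_{C-k}(P^{\sigma(k)}\|\bar\nu^{\sigma(k)})$. Since $\gamma$ and $\sigma$ are arbitrary, we can take the infimum over $\gamma$ and then the supremum over $\sigma$.

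\emph{Upper bound.} Let $N=\sum_{i<j}\mathbf 1\{X^i\neq X^j\}$ be the number of mismatched pairs. If there are $G$ distinct values, every pair of distinct value-classes contributes at least one mismatched pair, so $N\ge G(G-1)/2$. Solving this quadratic gives $G\le\bigl(1+\sqrt{1+8N}\bigr)/2$. The right-hand side is concave in $N$, so Jensen's inequality yields $\E[G]\le\bigl(1+\sqrt{1+8\,\E[N]}\bigr)/2$ for any coupling. It remains to exhibit one coupling with $\Pr(X^i\neq X^j)\le 2d_{\mathrm{TV}}(P^i,P^j)/(1+d_{\mathrm{TV}}(P^i,P^j))$ simultaneously for all pairs; then $\E[N]$ is bounded by the sum in the statement.

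For this I would use the shared-randomness construction of \citet{angel2019pairwise}, in its Poisson/Gumbel form. Take one Poisson process $\{(T_n,Y_n)\}$ on $\mathbb R_+\times\XSpace$ with intensity $\mathrm dt\times Q$, where $Q$ dominates all the $P^i$. Set $X^i=Y_{n_i}$ with $n_i=\arg\min_n T_n\,/\,\tfrac{\mathrm dP^i}{\mathrm dQ}(Y_n)$.
\begin{itemize}
\item Each $X^i\sim P^i$ by the standard exponential-race argument.
\item For a pair $(i,j)$, $\Pr(X^i=X^j)$ equals the probability that the same atom wins both races. A direct computation (the Poisson matching lemma) gives $\Pr(X^i=X^j)\ge\int\frac{p^ip^j}{\ldots}$-type expressions, and these are known to be at least $\frac{1-d}{1+d}$, which is the claimed per-pair bound.
\end{itemize}

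I expect this pairwise computation to be the main obstacle. Verifying the $\frac{1-d}{1+d}$ guarantee requires evaluating $\Pr(X^i=X^j)=\int\frac{p^ip^j}{\ldots}$ exactly and bounding it by convexity; I would cite Angel--Spinka for this step rather than rederive it. The remaining steps are elementary.
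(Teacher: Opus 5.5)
Your proposal is correct and follows the paper's proof essentially step for step. For the lower bound you use the same suffix-membership identity for $G$ and the same union-bound domination argument the paper gives for the optimality half of its list-level coupling theorem; you only need that converse half, not the achievability. For the upper bound you use the same $\binom{G}{2}\le D$ counting, the same Jensen step, and the same shared Poisson-process coupling, whose pairwise guarantee $\Pr(X^i=X^j)\ge\frac{1-d}{1+d}$ with $d=d_{\mathrm{TV}}(P^i,P^j)$ the paper likewise cites rather than proves.

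One small caution on that cited step: the pointwise Poisson matching bound $\int\frac{p^ip^j}{p^i+p^j}$ does not by itself give $\frac{1-d}{1+d}$, since it equals $\tfrac12$ when $P^i=P^j$. So the exact Angel--Spinka computation you defer to is genuinely needed.
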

Both the lower and upper bounds are tight at the two extreme configurations. If \(X^1=\cdots=X^C\) almost surely, then \(d_{\mathrm{TV}}(P^i,P^j)=0\) for all \(i\neq j\) and \(G=1\), so both bounds hold with equality. Conversely, if \(X^i\neq X^j\) almost surely for all \(i\neq j\), then \(G=C\), and both bounds are again tight. Beyond these extremes, the lower bound is tight when \(C=2\), in which case the problem reduces to the maximal coupling of two variables, and in certain other special cases; see Example \ref{example:optimal_list} and Appendix \ref{example:pml_shifted_exp}. We now establish  Theorem~\ref{theorem:general_bounds} and develop coupling constructions motivated by the structure of these bounds. 

\subsection{Lower Bound with List-Level Coupling}\label{sec:list_coupling_method}
To establish the lower bound in Theorem~\ref{theorem:general_bounds}, we first rewrite the expected cluster count in terms of list-level inclusion
probabilities. 
Fixing any ordering
\(\sigma(1),\dots,\sigma(C)\) of the indices, we show in
Appendix~\ref{appendix:list} that, for any coupling of \(X^1,\dots,X^C\),
\begin{equation}
\begin{aligned}
G
&=
C-\sum_{k=1}^{C-1}
\mathbf{1}\!\left(
X^{\sigma(k)} \in
\{X^{\sigma(k+1)},\dots,X^{\sigma(C)}\}
\right)
\end{aligned}
\end{equation}
Intuitively, scanning the ordering backward, the inclusion event means the current value has
already appeared in the suffix, so it does not create a new cluster. Furthermore, this idea yields a natural coupling procedure in the multi-marginal setting, where the matching partner need not be fixed in advance. In particular, under a given ordering, each variable may coalesce with any later variable that realizes the same value. Thus, we get the following lower bound on the optimal objective \(\mathbf G^\star\):
\begin{align}
   \mathbf{G}^* =  \inf_{\gamma \in \Gamma} \mathbb E_{\gamma}[G] 
&\geq C - \sum_{k=1}^{C-1}\sup_{\gamma \in \Gamma } 
\Pr\!\left(
X^{\sigma(k)} \in
\{X^{\sigma(k+1)},\dots,X^{\sigma(C)}\}
\right), \label{relax_ineq}
\end{align}
where $\Gamma \triangleq \Gamma (P^1,\cdots,P^C)$ and the inequality in (\ref{relax_ineq}) relaxes the problem by optimizing each list-level inclusion (or membership) probability separately; note that a different coupling may attain the optimum for each term. The following maximal list-level coupling result upper bounds each separated
inclusion probability. Together with \eqref{relax_ineq}, it directly yields the lower bound in Theorem~\ref{theorem:general_bounds}.

\begin{theorem}[List-Level Coupling]\label{theorem:list_coupling}
Let \(X\sim\mu\) and \(Y^1,\dots,Y^m\) have marginals \(\nu^1,\dots,\nu^m\), all absolutely continuous w.r.t. a common dominating measure $\lambda$. Then we have:
\[
\sup_{\gamma\in\Gamma(\mu,\nu^1,\dots,\nu^m)}
\Pr_{\gamma}\!\left(X\in\{Y^1,\dots,Y^m\}\right)
=
1-E_m(\mu\|\bar\nu),
\]
where \(\bar\nu \coloneqq m^{-1}\sum_{j=1}^m \nu^j\) is the mixture barycenter of \(\nu^1,\dots,\nu^m\).
\end{theorem}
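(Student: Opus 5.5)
Writing $p=\mathrm d\mu/\mathrm d\lambda$, $q_j=\mathrm d\nu^j/\mathrm d\lambda$ and $\bar q=m^{-1}\sum_j q_j$, note first that $1-E_m(\mu\|\bar\nu)=\int \min\{p,\, m\bar q\}\,\mathrm d\lambda=\int\min\{p,\sum_j q_j\}\,\mathrm d\lambda$. The plan is to prove the two inequalities separately: a converse showing no coupling beats this value, and an explicit coupling that attains it.

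\textbf{Converse.} Fix any coupling $\gamma$ and decompose the event $\{X\in\{Y^1,\dots,Y^m\}\}\subseteq\bigcup_j\{X=Y^j\}$. For each $j$, the standard maximal-coupling argument (Appendix~\ref{appendix:max_coupling}) shows that the measure $A\mapsto\Pr(X\in A,\,X=Y^j)$ is dominated both by $\mu$ and by $\nu^j$. Hence the sub-probability measure $\rho(A)\coloneqq\Pr(X\in A,\,X\in\{Y^1,\dots,Y^m\})$ satisfies $\rho\le\mu$ and $\rho(A)\le\sum_j\Pr(X\in A, X=Y^j)\le\sum_j\nu^j(A)=m\bar\nu(A)$. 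A measure dominated by both $\mu$ and $m\bar\nu$ has density at most $\min\{p,m\bar q\}$ $\lambda$-a.e.; to justify this, apply the two bounds on the sets $\{p\le m\bar q\}$ and its complement. Integrating gives $\Pr\le\int\min\{p,m\bar q\}\,\mathrm d\lambda=1-E_m(\mu\|\bar\nu)$.

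\textbf{Achievability.} The construction generalizes the maximal coupling. Let $r=\min\{p,\sum_jq_j\}$ with mass $\alpha=\int r$. Split $r=\sum_j r_j$, where $r_j\coloneqq r\,q_j/\sum_i q_i\le q_j$, and set $\alpha_j=\int r_j$. With probability $\alpha$ we draw $Z\sim r/\alpha$, set $X=Z$, and pick an index $J$ with conditional probability $r_J(Z)/r(Z)$. We then set $Y^J=Z$ and draw every other $Y^i$ independently from $(q_i-r_i)/(1-\alpha_i)$ or, where consistency demands it, from the appropriate conditional law. This is the main obstacle: the marginal of $Y^i$ must come out exactly $\nu^i$ in both branches, while in the complementary branch (probability $1-\alpha$) $X$ must be drawn from $(p-r)/(1-\alpha)$ without violating the other marginals.

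A cleaner way to handle the bookkeeping is to introduce an auxiliary index $J$ with $\Pr(J=j)=\alpha_j$ on the overlap branch, or to work with the global structure directly. Draw $Y^1,\dots,Y^m$ independently from their marginals. Given $Y$, choose $X$ from the empirical-type kernel: with probability $\min\{1,\ p(Y^j)/\sum_iq_i(Y^j)\}$ attached to each $Y^j$ suitably normalized, set $X$ equal to a selected $Y^j$, and otherwise draw $X$ from the residual $(p-r)/(1-\alpha)$. Linearity shows that $\Pr(X\in\mathrm dx, X\text{ selected from } Y)=\sum_j q_j(x)\,r(x)/\sum_iq_i(x)\,\lambda(\mathrm dx)=r(x)\,\lambda(\mathrm dx)$. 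This requires the selection probabilities $w_j(Y)=r(Y^j)/\sum_iq_i(Y^j)=\min\{p(Y^j)/\sum_iq_i(Y^j),1\}$ to satisfy $\sum_j w_j\le1$, which can fail. To fix this, I would randomize with a uniformly chosen index $J\sim\mathrm{Unif}\{1,\dots,m\}$ and accept $X=Y^J$ with probability $\min\{1, p(Y^J)/(m\bar q(Y^J))\}$. The accepted part then has density $\sum_j m^{-1}q_j\min\{1,p/(m\bar q)\}$, which equals $\min\{\bar q,\ p/m\}$ and has mass only $\alpha/m$. Since this falls short by the factor $m$, I expect the final fix to be a Poisson-process / Gumbel-max style shared-randomness construction, consistent with the paper's later PML scheme. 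Concretely, generate one Poisson process $\{(T_n,Z_n)\}$ on $\mathbb R_+\times\mathsf X$ with intensity $\lambda$, and let each variable select $\arg\min_n T_n/\text{density}(Z_n)$; the inclusion probability can then be bounded below by the overlap $\int\min\{p,\sum_jq_j\}$ via a union over the $m$ selections, possibly after a modification that makes the $Y^j$ select distinct atoms. Verifying that the bound is exactly attained, rather than just attained up to the PML factor, is the step I expect to need the most care.
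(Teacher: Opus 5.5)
Your converse is correct and is essentially the paper's optimality argument: the inclusion measure $\rho$ is dominated by $\mu$ and, via the union bound over $\{X=Y^j\}$, by $m\bar\nu$, so its density is at most $\min\{p,m\bar q\}$. The gap is in achievability. You never complete it, and the route you finally settle on cannot work. A shared-Poisson-process (PML) selection is not a maximal coupling, even for $m=1$, where the theorem is just the identity $\sup\Pr(X=Y^1)=1-d_{\mathrm{TV}}(\mu,\nu^1)$. Take $\mu=\mathrm{Unif}[0,1]$ and $\nu^1=\mathrm{Unif}[1/2,3/2]$. Each variable picks the earliest arrival inside its own support, and the earliest arrival in $[0,3/2]$ is uniformly located. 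The two selections coincide only if that arrival lands in $[1/2,1]$. This gives matching probability $1/3=(1-d_{\mathrm{TV}})/(1+d_{\mathrm{TV}})$ rather than $1/2$, so the claim that a union over the $m$ selections recovers $\int\min\{p,\sum_j q_j\}\,\mathrm d\lambda$ is false. Your second idea also fails in general: with $Y^1,\dots,Y^m$ independent the bound need not be attainable. For example, with $m=2$, $\nu^1=\nu^2=\mathrm{Unif}[0,1]$ and $\mu=\mathrm{Unif}[0,1/2]$, the bound equals $1$, but both $Y$'s land in $(1/2,1]$ with probability $1/4$.

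The missing observation is that your \emph{first} construction already works and is exactly the paper's Algorithm~\ref{alg:list_coupling_barycenter}. Sampling $X\sim\mu$ and accepting with probability $\min\{1,m\bar\nu(X)/\mu(X)\}$ is the same as drawing $X\sim r/\alpha$ with probability $\alpha$ and $X\sim(p-r)/(1-\alpha)$ otherwise. The paper's index law $\nu_i/(m\bar\nu)$ is your $r_i/r$. The obstacle you flag disappears once you note that only the marginals are constrained. In the complementary branch, draw $X\sim(p-r)/(1-\alpha)$ and every $Y^i$ from $(q_i-r_i)/(1-\alpha_i)$. In the overlap branch, draw each $Y^i$ with $i\neq J$ from the same residual. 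Always use fresh randomness independent of the branch, $Z$ and $J$. Then $\Pr(Y^i\in\mathrm dx,\,J=i,\text{overlap})=r(x)\cdot r_i(x)/r(x)\,\lambda(\mathrm dx)=r_i(x)\,\lambda(\mathrm dx)$, with total mass $\alpha_i$. On the complementary event, whose probability is exactly $1-\alpha_i$ whichever branch produced it, $Y^i$ is an independent residual draw contributing density $q_i-r_i$; the sum is $q_i$. Likewise $X$ has density $r+(p-r)=p$. Since $X=Y^J$ on the overlap branch, the inclusion probability is at least $\alpha$, and together with your converse this proves the equality. The residuals are never needed when $\alpha=1$ or $\alpha_i=1$. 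No conditional-law correction and no Poisson machinery is required.
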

Further details and proof of this theorem are given in Appendix~\ref{proof_list}. When \(m=1\), the result
reduces to the classical maximal coupling identity. To the best of our
knowledge, this provides the first list-level coupling analog of maximal
coupling. The supremum is attained by Algorithm~\ref{alg:list_coupling_barycenter},
together with the residual-sampling subroutine in
Algorithm~\ref{alg:residual_sampler}. The construction samples \(X\sim\mu\),
then on the coupling event randomly chooses which \(Y^j\) coalesces with \(X\),
and samples the remaining \(Y\)'s from their residual distributions. Specifically, the residual density of
\(R^j\) is
\begin{equation}
  r^j(x)
=
\frac{
\nu^j(x)
}{
1-\alpha^j
}\left(1-\min\left\{1,\frac{\mu(x)}{m\bar\nu(x)}\right\}\right)
,
\qquad
\alpha^j
=
\int
\nu^j(x)\min\left\{1,\frac{\mu(x)}{m\bar\nu(x)}\right\}
\,\lambda(\mathrm dx),  \label{residual_distribution}
\end{equation}
with the convention that the integrand is \(0\) whenever \(\bar\nu(x){=}0\).%
\footnote{With a slight abuse of notation, we denote their densities with the same letter when no confusion arises.}
Note that the sampling steps highlighted in \textcolor{blue}{blue} in Algorithm~\ref{alg:list_coupling_barycenter} need not be independent. 
Hence, when direct residual sampling is available, such as when the distributions involved are discrete, there is a natural greedy strategy: after each list-level match is formed, remove the matched variables and recursively apply Algorithm~\ref{alg:list_coupling_barycenter} to the remaining unmatched list. For general cases, see Remark \ref{list_technical_challenge} below Algorithm \ref{alg:residual_sampler}.  Although the greedy strategy does not attain the lower bound in general, it can be optimal in certain settings; see Example~\ref{example:optimal_list}. Empirically, we find that it outperforms fixed-reference baselines, see Section \ref{sec:experiments}.

\begin{example}\label{example:optimal_list}
Suppose $P^1,P^2,P^3,P^4$ decompose into two separated overlap pairs, $P^1$ with $P^2$
and $P^3$ with $P^4$, with no cross-overlap between the two pairs, then the
recursive list-level coupling attains maximal pairwise matching within both
pairs, thereby achieving optimal performance. The lower bound in Theorem \ref{theorem:general_bounds} is also achieved for a similar reason.
\end{example}
\begin{figure}[t]
\centering

\begin{minipage}[t]{0.48\linewidth}
\begin{algorithm}[H]
\SetAlgoLined
\DontPrintSemicolon
\SetKwFunction{Sample}{Sample}

\textbf{Input:} densities \(\mu,\nu^1,\dots,\nu^m\); residual samplers \(R^1,\dots,R^m\)\;
\textbf{Output:} \((X,Y^1,\dots,Y^m)\)\;

\(\bar\nu(x)\gets m^{-1}\sum_{j=1}^m \nu^j(x)\)\;
\(X,U\gets \Sample(\mu),\Sample(\mathrm{Unif}(0,1))\)\;

\If{\(U\le \min\{1,m\bar\nu(X)/\mu(X)\}\)}{
    Sample \(I\in[m]\) with
    \(\Pr(I=j\mid X)=\nu^j(X)/(m\bar\nu(X))\)\;
    \(Y^I\gets X\)\;
    \textcolor{blue}{\(Y^j\gets \Sample(R^j)\) for  \(j\in[m]\setminus\{I\}\)\;}
}
\Else{
   \textcolor{blue}{\(Y^j\gets \Sample(R^j)\) for all \(j\in[m]\)\;} 
}

\Return{}{\((X,Y^1,\dots,Y^m)\) , matched index \(I\)}\;
\caption{Maximal list coupling}
\label{alg:list_coupling_barycenter}
\end{algorithm}
\end{minipage}
\hfill
\begin{minipage}[t]{0.48\linewidth}
\begin{algorithm}[H]
\SetAlgoLined
\DontPrintSemicolon
\SetKwFunction{Sample}{Sample}

\textbf{Input:} densities \(\nu^j,\mu,\bar\nu\)\;
\textbf{Output:} \(Y^j\sim R^j\)\;

\Repeat{\(U\le a(Z)\)}{
    \(Z,U{\gets} \Sample(\nu^j),\Sample(\mathrm{Unif}(0,1))\)
    \(a(Z)\gets \left(1-\frac{\mu(Z)}{m\bar\nu(Z)}\right)_+\)\;
}

\Return{\(Z\)}\;
\caption{Residual sampler for \(R^j\)}
\label{alg:residual_sampler}
\end{algorithm}
\begin{remark}\label{list_technical_challenge}
   In general, it is nontrivial to recursively apply the \textcolor{blue}{blue}
step together with maximal list coupling when residual sampling is required, e.g. when all measures are Gaussians. 
Without the normalizing constant \(1-\alpha^j\) of \(r^j\) in
\eqref{residual_distribution}, the barycenter likelihood cannot be computed
explicitly.
\end{remark}
\end{minipage}

\end{figure}
\vspace{-3pt}
\subsection{Upper Bound via Poisson Matching and Fast Coupling Construction}\label{sec:accel_pml}
Although list-level coupling provides useful theoretical guarantees and suggests a natural adaptive strategy, its implementation for general continuous measures remains challenging. We now introduce an alternative construction based on a shared marked Poisson process that preserves the idea of Algorithm~\ref{alg:list_coupling_barycenter}: the matching partners are determined adaptively. Related constructions were also considered by \citet{angel2019pairwise}, but computationally, their use was limited to special discrete settings, since direct implementation becomes prohibitively expensive in high dimensions. Here, we introduce a new technique that turns this Poisson-process construction into a practical coupling algorithm for continuous high-dimensional MCMC. The induced pairwise coupling probabilities provide the key control needed to establish the upper bound in Theorem~\ref{theorem:general_bounds}.
\paragraph{Poisson Monte Carlo (PMC) .} We begin by introducing PMC, a sampling method for drawing samples from a target distribution $P$ using a proposal distribution $\mu$.
Assume $P \ll \mu$\footnote{In the context of PMC, this distribution $\mu$ serves as a proposal measure and should not be confused with the reference distribution used in the list-level coupling of Algorithm~\ref{alg:list_coupling_barycenter}.}, and let
\(
\Omega = \{(S_i,X_i)\}_{i=1}^\infty
\)
be a marked Poisson process, where $\{S_i\}_{i=1}^\infty$ are the arrival times of a unit-rate Poisson process on $\mathbb{R}_+$ and the marks $X_i$ are i.i.d. samples from $\mu$. Define the random index 
\begin{align}
J
=
\arg\min_i \; S_i
\left( \frac{\dd P}{\dd \mu}(X_i) \right)^{-1},
\qquad
\text{where } \frac{\dd P}{\dd\mu} \text{ is the Radon--Nikodym derivative.}
\label{PP_opt}
\end{align}
Then, setting $X^P = X_{J}$, we have $X^P \sim P$ (marginalize over $\Omega$). The procedure to solve the above optimization problem is presented in Appendix \ref{app:solve_PP_opt}, following \cite{maddison2014sampling,theis2022algorithms,flamich2024data}. To couple \(X^1,\dots,X^C\) simultaneously, we let all sampler parties use a common Poisson process \(\Omega\), which induces the pairwise matching guarantees, known as the Poisson Matching Lemma (PML) \cite{li2021unified}.

\begin{lemma}[Poisson Matching Lemma \cite{li2021unified}] \label{lemma_pml}
Let \(\Omega = \{(S_i,X_i)\}_{i=1}^\infty\) be a marked Poisson process on \(\mathbb{R}_+\times\XSpace\) with intesity measure \(\mathrm ds\otimes \mu\),  and \(P\) and \(Q\) be two probability measures such that \(P,Q \ll \mu\). Let \(P\) and \(Q\) be coupled via the Poisson Monte Carlo procedure described above, producing samples \(X^P\sim P\) and \(X^Q\sim Q\). 
Then, we have:
\begin{align}
    \Pr(X^P=X^Q|X^P=x) \geq \left(1 + \frac{\dd P}{\dd Q}(x)\right)^{-1} , 
    \qquad 
    \Pr(X^P = X^Q) \geq \frac{1 - d_{\mathrm{TV}}{ (P,Q)}}{1 + d_{\mathrm{TV}} {(P,Q)}}, \label{pml_bound_tv}
\end{align}
where the gap in matching probability compared with maximal coupling is small in practice~\cite{daliri2025coupling,li2021unified}. 
\end{lemma}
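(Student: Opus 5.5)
The plan is to prove the pointwise bound first and then integrate it. Write \(p=\dd P/\dd\mu\) and \(q=\dd Q/\dd\mu\), and set \(T_i=S_i/p(X_i)\) and \(T'_i=S_i/q(X_i)\), so that \(J=\arg\min_i T_i\) and \(K=\arg\min_i T'_i\). By the mapping theorem, the points \(\{(T_i,X_i)\}\) form a Poisson process on \(\mathbb R_+\times\XSpace\) with intensity \(\mathrm dt\otimes P\). Consequently the first arrival satisfies \(T_J\sim\mathrm{Exp}(1)\), independently of \(X_J\sim P\), and conditional on \((T_J,X_J)=(t,x)\) the remaining points form a Poisson process restricted to \(\{T>t\}\). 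This standard "first arrival" decomposition is the backbone of the argument.

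For the conditional matching probability, fix \(X^P=x\) and \(T_J=t\). The event \(K=J\) holds if and only if no other point \(i\) has \(T'_i<T'_J=t\,p(x)/q(x)\). For each other point write \(S_i=T_i\,p(X_i)\). A point \((T,y)\) with \(T>t\) beats \(J\) under \(Q\) exactly when \(T\,p(y)/q(y)<t\,p(x)/q(x)\). The expected number of such points is
\[
\Lambda(t)=\int P(\mathrm dy)\,\Bigl(\tfrac{t\,p(x)}{q(x)}\,\tfrac{q(y)}{p(y)}-t\Bigr)_+\le t\,\tfrac{p(x)}{q(x)}\int q(y)\,\mu(\mathrm dy)=t\,\tfrac{p(x)}{q(x)} .
\]
Points with \(p(y)=0\) are never selected under \(P\) but may be under \(Q\). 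In the \(T\) parametrization they sit at \(T=\infty\), so I would handle them separately using the original \(S\)-process. Their contribution \(\int_{p=0} q\,\mathrm d\mu\cdot t\,p(x)/q(x)\) is absorbed by the same bound, because the full integral of \(q\) is \(1\).

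Hence \(\Pr(K=J\mid t,x)\ge e^{-t\,p(x)/q(x)}\). Averaging over \(t\sim\mathrm{Exp}(1)\) gives
\[
\int_0^\infty e^{-t}e^{-t\,p(x)/q(x)}\,\mathrm dt=\Bigl(1+\tfrac{\dd P}{\dd Q}(x)\Bigr)^{-1},
\]
which is the first claim. We also use that \(K=J\) implies \(X^P=X^Q\), so \(\Pr(X^P=X^Q\mid X^P=x)\ge\Pr(K=J\mid X^P=x)\).

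For the total-variation bound, integrate the pointwise bound:
\[
\Pr(X^P=X^Q)\ge\int \frac{p\,q}{p+q}\,\mathrm d\mu .
\]
The last step is the inequality \(\int pq/(p+q)\ge (1-\delta)/(1+\delta)\), where \(\delta=d_{\mathrm{TV}}(P,Q)\). I would use \(\frac{pq}{p+q}=\frac{\min(p,q)\max(p,q)}{\min+\max}\) and apply Jensen's inequality to the convex function \(\frac{m^2}{m+M}\) in \((m,M)\). Alternatively, Cauchy--Schwarz gives
\[
\Bigl(\int \min\Bigr)^2\le\int\frac{pq}{p+q}\cdot\int\frac{\min(p,q)(p+q)}{\max(p,q)}\le\int\frac{pq}{p+q}\cdot\int(p+q)\cdot\ldots
\]
A cleaner route is to write \(\min^2/(\min+\max)\) and apply Cauchy--Schwarz weighted by \(\min+\max=p+q\):
\[
\int\frac{pq}{p+q}\ge\int\frac{\min^2}{p+q}\ge\frac{(\int\min)^2}{\int(p+q)}=\frac{(1-\delta)^2}{2}.
\]
This is too weak. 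The better choice is Jensen applied with the probability weights \(\min(p,q)/(1-\delta)\) to \(f(r)=1/(1+r)\), where \(r=\min/\max\le 1\):
\[
\int\frac{pq}{p+q}\,\mathrm d\mu=\int\min\cdot\frac{1}{1+\min/\max}\;\ge\;(1-\delta)\,\frac{1}{1+\E[\min/\max]}.
\]
It then remains to show \(\E_w[\min/\max]\le\delta\) under \(w\propto\min\). This is equivalent to \(\int \min^2/\max\le\delta(1-\delta)\), and I expect this inequality to be the main obstacle. If it fails, I would instead apply Jensen to the function \(M\mapsto m/(m+M)\) under the weights \(\max\), with \(\int\max=1+\delta\), and combine with the convexity of \(m/(m+M)\) in \(M\) to reach \((1-\delta)/(1+\delta)\).
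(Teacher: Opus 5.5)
Your proof of the pointwise bound is correct and is the standard argument of \cite{li2021unified}: the time change $T_i=S_i/p(X_i)$, the first-arrival decomposition, and the separate treatment of atoms with $p=0$ all work. Before you bound it, your total mean count (including the $p=0$ atoms) is exactly $t\,\beta(x)$, where $\beta(x)=\int\bigl(r\,q(y)-p(y)\bigr)_+\mu(\mathrm dy)$ and $r=p(x)/q(x)$. So you have in fact derived the exact conditional matching probability $1/(1+\beta(x))$. This is the formula the paper imports from \cite{li2021unified} in its appendix; the paper gives no proof of this lemma beyond the citation.

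The gap is in the total-variation bound: it cannot be obtained by integrating the first claim. The inequality you set out to prove, $\int pq/(p+q)\,\mathrm d\mu\ge(1-\delta)/(1+\delta)$, is false. For $P=Q$ the left side is $1/2$ and the right side is $1$. So no Jensen or Cauchy--Schwarz manipulation can close it. Your ``main obstacle'' $\int\min^2/\max\le\delta(1-\delta)$ reads $1\le 0$ at $P=Q$, and your fallback targets the same false inequality.

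The information is lost when you drop the $-t$ inside the positive part. The bound $(1+\frac{\mathrm dP}{\mathrm dQ})^{-1}$ equals $1/2$ when $P=Q$, even though the two selections then coincide almost surely. The fix is to keep the exact $\beta(x)$:
\begin{itemize}
\item if $r\le 1$, then $(rq-p)_+\le r(q-p)_+$;
\item if $r>1$, then $(rq-p)_+\le r(q-p)_++(r-1)p$.
\end{itemize}
Integrating, with $\int(q-p)_+\,\mathrm d\mu=\delta$, gives $1+\beta(x)\le(1+\delta)\max\{1,r\}$, hence
\[
\Pr(X^P=X^Q)\;\ge\;\int\frac{p(x)}{1+\beta(x)}\,\mu(\mathrm dx)\;\ge\;\int\frac{\min\{p(x),q(x)\}}{1+\delta}\,\mu(\mathrm dx)\;=\;\frac{1-\delta}{1+\delta}.
\]

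Alternatively, there is a direct argument. If some atom $i$ satisfies $S_i/\min\{p,q\}(X_i)<\inf_{j\neq i}S_j/\max\{p,q\}(X_j)$, then $J=K=i$, and at most one atom can satisfy this. The Mecke formula then gives the probability of this event as $\int\min\{p,q\}\,\mathrm d\mu\big/\int\max\{p,q\}\,\mathrm d\mu=(1-\delta)/(1+\delta)$.
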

A detailed proof showing how the PML construction induces the upper bound in Theorem~\ref{theorem:general_bounds} is provided in Appendix~\ref{appendix:proof_bnd_G}. Here \(\mathrm dP/\mathrm dQ\) is interpreted as the extended ratio
\((\mathrm dP/\mathrm d\mu)/(\mathrm dQ/\mathrm d\mu)\), with value \(+\infty\)
when the denominator vanishes but the numerator does not.  We now discuss the runtime complexity of PML and present a new technique to accelerate the coupling process. 
\begin{figure}[t]
\centering
\hspace{-30pt}
\begin{minipage}[c]{0.47\textwidth}
\centering
\begin{tikzpicture}
\begin{axis}[
    width=\textwidth,
    height=0.68\textwidth,
    xlabel={Dimensions $d$ with fixed $C=32$},
    ylabel={Mean runtime (ms)},
    xmode=log,
    ymode=log,
    log basis x=2,
    xmin=1, xmax=512,
    ymax=10^3.6,
    restrict y to domain*=0:10^3.1,
    unbounded coords=discard,
    grid=both,
    major grid style={gray!30},
    minor grid style={gray!15},
    xtick={1,2,4,8,16,32,64,128,256,512},
    xticklabels={$2^0$,$2^1$,$2^2$,$2^3$,$2^4$,$2^5$,$2^6$,$2^7$,$2^8$,$2^9$},
    tick label style={font=\footnotesize},
    label style={font=\footnotesize},
    legend style={
        at={(0.98,0.5)},
        anchor=east,
        font=\scriptsize
    },
    mark options={solid},
]
\addplot[
    thick,
    color=blue,
    mark=*,
    mark options={fill=blue},
] coordinates {
    (1,   0.617)
    (2,   0.605)
    (4,   0.640)
    (8,   0.661)
    (16,  0.616)
    (32,  0.626)
    (64,  0.601)
    (128, 0.639)
    (256, 0.627)
    (512, 0.702)
};
\addlegendentry{Mixture proposal (Ours)}

\addplot[
    thick,
    color=red,
    mark=square*,
    mark options={fill=red},
] coordinates {
    (1, 6.302661172812805)
    (2, 9.476135837612674)
    (4, 38.3550911443308)
    (8, 1167.7786182542332)
    (10, 6739.802)
};
\addlegendentry{Gaussian proposal}
\end{axis}
\end{tikzpicture}
\end{minipage}
\hspace{-3pt}
\begin{minipage}[c]{0.42\textwidth}
\centering
\vspace{-20pt}
\begin{tikzpicture}[
    >=latex,
    every node/.style={font=\scriptsize},
    state/.style={draw, circle, inner sep=1pt, minimum size=1.0cm},
    kernel/.style={draw, rectangle, inner sep=2pt, minimum width=1.2cm, minimum height=0.7cm},
    omega/.style={draw, rectangle, inner sep=2pt, minimum width=1.2cm, minimum height=0.7cm},
    groupbox/.style={draw, rounded corners=3pt, fill=gray!15, inner sep=6pt},
    scale=0.95,
    transform shape
]

\node (ldotsL) at (0,0) {$\dots$};
\node[state, right=0.45cm of ldotsL] (Xt) {$X_t^{1:C}$};
\node[kernel, right=0.8cm of Xt] (Kt) {$\kappa_t^{1:C}$};
\node[state, right=0.8cm of Kt] (Xtp) {$X_{t+1}^{1:C}$};
\node (ldotsR) at ($(Xtp.east)+(0.8cm,0)$) {$\dots$};

\node[omega, above=0.9cm of Kt] (Omega) {$ \operatorname{PPP}(\dd s \otimes \mu_t)
$};

\begin{pgfonlayer}{background}
\node[groupbox, fit=(Kt)(Omega)] (PMbox) {};
\end{pgfonlayer}

\node[font=\scriptsize, align=center, above=2pt of PMbox.north]
{Poisson Matching\\Coupled Kernel};

\draw[->, thick] (ldotsL.east) -- (Xt.west);
\draw[->, thick] (Xt.east) -- (Kt.west);
\draw[->, thick] (Kt.east) -- (Xtp.west);
\draw[->, thick] (Xtp.east) -- (ldotsR.west);

\draw[->, thick] (Kt.north) -- node[midway, left, align=center] {update\\$\mu_t$} (Omega.south);
\draw[->, thick] (Omega.south east) -- (Xtp.north west);

\end{tikzpicture}
\end{minipage}

\caption{Poisson point process (PPP) coupling. (Left) Runtime comparison across dimensions between our constructed proposal and a naive strategy. (Right) Schematic illustration of how our method applies to multi-chain Markov coupling through adaptive updates of the shared PPP.}
\vspace{-10pt}
\label{fig:runtime_and_schematic}
\end{figure}

\paragraph{Runtime Complexity Problem.}One caveat of the PMC approach is that its runtime can become prohibitively expensive as the dimension of \(X\) increases. In particular, for a target distribution \(P\) and proposal distribution \(\mu\) then the expected computational cost typically scales as
\(
\mathcal{O}\!\left(\,\operatorname*{ess\,sup}_x \{\left(\mathrm d P /\mathrm d\mu\right)(x)\}\,\right)\!.
\)
Unfortunately, this can grow \emph{exponentially} in the dimension even under a mild mismatch. For example, let \(P\) and \(\mu\) be \(d\)-dimensional product measures,
with \(P=\bigotimes_{i=1}^d P_i\) and
\(\mu=\bigotimes_{i=1}^d \mu_i\). In the i.i.d. case where
\(P_i=P_1\) and \(\mu_i=\mu_1\) for all \(i\)
\[
\frac{\mathrm d P}{\mathrm d\mu}(x_{1:d})=\prod_{i=1}^d \frac{\mathrm d P_1}{\mathrm d\mu_1}(x_i)
\quad\Rightarrow\quad
\operatorname*{ess\,sup}_{x_{1:d}}\frac{\mathrm d P}{\mathrm d\mu}(x_{1:d})
=\Big(\operatorname*{ess\,sup}_x \frac{\mathrm d P_1}{\mathrm d\mu_1}(x)\Big)^d = M^d,
\]
where \(M{=}\operatorname*{ess\,sup}_{x}(\mathrm dP_1/\mathrm d\mu_1)(x){>}1\) is the one-dimensional mismatch factor . This leads to an
\(M^d/C\) per-chain sample complexity in the grand-coupling setting. Thus, in the high-dimensional regimes typical of MCMC, a poorly matched proposal
can lead to prohibitively large runtimes.
\paragraph{Adaptive Proposal for Fast Coupling.}
To improve the runtime complexity, our starting point is the common assumption that sampling from each target distribution can be performed in \(\mathcal O(1)\) time. This assumption is particularly natural in the MCMC setting, where sampling from each transition kernel is typically cheap. Still, this does not automatically yield an efficient coupling: if the common measure \(\mu\) is chosen to be one of the target distributions, say \(P\), then sampling from another distribution \(Q\) may remain prohibitively expensive, since \(\operatorname*{ess\,sup}_x \{\left(\mathrm d Q/\mathrm dP\right)(x)\}\) can be large. Interestingly, this dimensional dependence can be replaced by dependence on the number of coupling proposals, which is often more manageable in practice, by setting \(\mu\) to the barycenter of the target measures. 
\par
Formally, consider $C$ probability measures $P^1,\dots,P^C$ such that sampling
from each $P^i$ can be done in $\mathcal{O}(1)$ time. 
Then, we choose the proposal to be the barycenter measure $\mu$ as their uniform mixture:
\begin{equation}
    \mu \triangleq \frac{1}{C} \sum_{j=1}^C P^j .
\end{equation}
The following lemma shows that this choice yields favorable runtime complexity.
\begin{lemma}[Mixture Barycenter Proposal]\label{lemma:barycenter_PML}
Let \(P^1,\dots,P^C\) be probability measures on \(\XSpace\), and the marked Poisson process defined previously with the common barycenter measure $\mu$ above. Then:
\begin{equation}
    P^i \ll \mu
    \qquad\text{and}\qquad
    \operatorname*{ess\,sup}_{x \in \XSpace }
    \left(\frac{\mathrm dP^i}{\mathrm d\mu}\right)(x)
    \le C,
    \qquad i=1,\dots,C .
\end{equation}
As a result, sampling from each measure has worst-case complexity \(O(C)\), and since they all share the same marked Poisson process, the expected number of samples per probability measure is \(O(1)\).
\end{lemma}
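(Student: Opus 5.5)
The plan is to treat the two claims separately: first the absolute continuity and density bound, which are measure-theoretic, then the runtime consequence, which comes from the known cost of Poisson Monte Carlo.

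For absolute continuity, I would argue directly from the definition of the mixture. Since \(\mu = C^{-1}\sum_j P^j\) and every \(P^j\) is nonnegative, we have \(\mu(A) \ge C^{-1}P^i(A)\) for every measurable \(A\). So if \(\mu(A)=0\), then \(P^i(A)=0\), which gives \(P^i \ll \mu\). The same inequality \(P^i \le C\mu\) holds setwise, and this is the key fact for the density bound.

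For the bound on the Radon--Nikodym derivative, let \(f_i = \mathrm dP^i/\mathrm d\mu\), which exists by the first step. Consider the set \(A_\epsilon = \{x : f_i(x) > C+\epsilon\}\). On this set,
\[
P^i(A_\epsilon) = \int_{A_\epsilon} f_i \,\mathrm d\mu \ge (C+\epsilon)\,\mu(A_\epsilon).
\]
Combining this with \(P^i(A_\epsilon) \le C\mu(A_\epsilon)\) forces \(\mu(A_\epsilon)=0\). Taking a union over rational \(\epsilon\) gives \(\operatorname*{ess\,sup} f_i \le C\). A slicker route is to note that \(\sum_j f_j = C\) holds \(\mu\)-a.e. by uniqueness of densities, and since each \(f_j \ge 0\), this gives \(f_i \le C\) immediately.

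For the runtime claim, I would invoke the standard fact recalled in the paragraph on runtime complexity: the expected number of Poisson points a Poisson Monte Carlo sampler must examine to locate the minimizer in \eqref{PP_opt} is \(\mathcal O(\operatorname*{ess\,sup}\, \mathrm dP/\mathrm d\mu)\) (Appendix \ref{app:solve_PP_opt}). By the density bound, this is \(\mathcal O(C)\) per measure. For the amortized claim, I would use the sharing of the marked Poisson process. With the bound \(\mathrm dP^i/\mathrm d\mu \le C\), the search for index \(i\) can stop once \(S_n\,C^{-1} \ge \min_{k\le n} S_k (f_i(X_k))^{-1}\). All \(C\) searches therefore terminate by the first \(n\) at which every target has been "covered". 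The expected total number of points generated is the maximum of the per-target stopping indices, which is \(\mathcal O(C)\) and not \(\mathcal O(C^2)\). Dividing by \(C\) gives \(\mathcal O(1)\) per measure.

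The main obstacle is making this amortization rigorous, since a maximum of stopping times could in principle exceed each of their expectations. I would first try to bound it by the stopping rule for the global bound \(C\) applied uniformly. The arrival time \(S_n\) has expectation \(n\), and each target's running minimum is stochastically dominated by an \(\mathrm{Exp}(1)\) variable after rescaling. A union or tail bound should then give \(\mathbb E[\max_i \tau_i] = \mathcal O(C\log C)\) at worst. If the sharper \(\mathcal O(C)\) bound fails, I would state the per-measure cost as \(\mathcal O(1)\) up to logarithmic factors.
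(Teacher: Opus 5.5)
Your argument for $P^i\ll\mu$ and $\operatorname*{ess\,sup}\mathrm dP^i/\mathrm d\mu\le C$ is the paper's argument. The setwise inequality $P^i(A)\le C\mu(A)$ gives both claims. Your $A_\epsilon$ step, or the identity $\sum_j \mathrm dP^j/\mathrm d\mu=C$ $\mu$-a.e., only spells out what the paper asserts in one line. The paper's proof stops at that point: it identifies the essential supremum with the expected PMC cost and states the amortized $\mathcal O(1)$ claim without any argument.

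The step in your plan that would fail is the claim that the expected maximum of the per-target stopping indices is $\mathcal O(C)$. It fails because the claim is false, not because your technique is too weak, so the obstacle you flag cannot be overcome.

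\begin{itemize}
\item Take $P^1,\dots,P^C$ with pairwise disjoint supports $A_1,\dots,A_C$. Then $\mathrm dP^i/\mathrm d\mu=C\mathbf 1_{A_i}$.
\item Every point outside $A_i$ has score $+\infty$ for target $i$. So the stopping rule with $w_{\min}=1/C$ halts exactly at the first point whose mark lies in $A_i$.
\item Each mark falls in a uniformly random one of the $A_i$, independently across points. So the shared process must run until every $A_i$ has been hit, which is a coupon-collector time: $C H_C\sim C\ln C$ points in expectation, with $H_C=\sum_{k=1}^C 1/k$.
\end{itemize}

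The amortized cost is therefore $\Theta(\log C)$ per measure in the worst case; equivalently, each mixture component is sampled about $\ln C$ times. Under the reading you adopt, the $\mathcal O(1)$ in the statement cannot be proved. It holds only in the weaker sense that a \emph{single} search draws $\mathcal O(1)$ points from each component, and that reading does not use the sharing.

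Your fallback is the right conclusion, it is tight, and it has a short rigorous proof.

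\begin{itemize}
\item Let $R_i=\min_k S_k/f_i(X_k)$. The search for target $i$ stops exactly at $\tau_i=\min\{n:S_n\ge CR_i\}$. A stop certifies the global minimum, because later points have scores at least $S_k/C$. Conversely, the minimizer satisfies $S_{J_i}=R_if_i(X_{J_i})\le CR_i$, so the search has stopped by that index.
\item Hence $\{\tau_i>n\}$ is the event that $f_i(X_k)<CS_k/S_n$ for all $k\le n$.
\item Condition on the arrival times; the ratios $S_k/S_n$, $k<n$, are distributed as i.i.d. uniforms. Using $\int_0^1\mu(f_i<Cu)\,\mathrm du=1-1/C$, this gives $\Pr(\tau_i>n)=\mu(f_i<C)(1-1/C)^{n-1}$ for $n\ge 1$.
\item This recovers $\mathbb E[\tau_i]\le C+1$ for a single search.
\item A union bound over $i$ gives $\mathbb E[\max_i\tau_i]\le 1+\sum_{n\ge1}\min\{1,C(1-1/C)^{n-1}\}=C\ln C+\mathcal O(C)$.
\end{itemize}

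This upper bound matches the coupon-collector example. Sharing the process therefore reduces the total cost from $\mathcal O(C^2)$ to $\mathcal O(C\log C)$, not to $\mathcal O(C)$.
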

The proof is presented in Appendix \ref{proof:barycenter_PML}. Overall, this adaptive-proposal construction is computationally convenient. Since the common proposal measure is just a mixture of the individual kernels, the coupling step can be implemented using batched (parallel) operations: sampling mixture components, drawing proposals, and evaluating the corresponding density ratios.  Figure~\ref{fig:runtime_and_schematic} (Left) validates this scaling for $C=32$ over dimensions $d \in \{2^0,\ldots,2^9\}$. The targets are Gaussian distributions with different means and common identity covariance. We compare with a single-Gaussian proposal centered at the average target mean and covariance \(C I_d\), ensuring broad coverage. Our runtime remains nearly constant across dimensions, whereas the Gaussian-proposal baseline grows rapidly.

\section{Coupling Multiple Metropolis-Hastings Chains}\label{sec:mh_pml}
While list-level coupling provides useful guarantees and suggests a natural adaptive strategy, it is not trivial to implement for general continuous distributions. 
We therefore focus on adaptive Poisson matching constructions as tractable couplings for Markov chain Monte Carlo methods that use the Metropolis-Hastings (MH) algorithm.
To start, we briefly review the MH algorithm.
\subsection{Problem Setup}
\paragraph{Metropolis-Hastings (MH) Algorithm.} 
Given a target distribution \(\pi\), and a proposal kernel \(K\) with Lebesgue density \(k(\cdot\mid x)\), so that \(K(\mathrm dy\mid x)=k(y\mid x)\mathrm dy\), the Metropolis--Hastings algorithm proceeds
from a current state \(x\) by proposing \(y \sim k(\cdot\mid x)\) and accepting
the move with probability
\[
\alpha(x,y)
=
\min\left\{
1,
\frac{\pi(y)k(x\mid y)}{\pi(x)k(y\mid x)}
\right\}.
\]
If the proposal is accepted, the next state is \(y\); otherwise, the chain
remains at \(x\). This construction defines a Markov chain with a stationary distribution \(\pi\). Also see  Algorithm~\ref{alg:mh_kernel}, Appendix \ref{appendix:mh}.
\begin{remark}\label{rm:markov_kernel}
In the case of MH, we note that the resulting Markov transition kernel \(\kappa\) can be written as
\(
\kappa(\mathrm dy \mid x)
=
\alpha(x,y)\,k(y\mid x)\,\mathrm dy
+
\delta_x(\mathrm dy)\,r(x),
\)
where
\(
r(x)
=
1-\int \alpha(x,y)\,k(y\mid x)\,\mathrm dy
\)
is the rejection probability at state \(x\), and \(\delta_x\) denotes the Dirac measure at \(x\). Importantly, evaluating \(\kappa\) in closed form is nontrivial, since computing \(r(x)\) is generally intractable.
\end{remark}
The central goal in MCMC convergence analysis is to assess how close the chain's law at iteration \(t\), denoted by \(\pi_t\), is to the target distribution \(\pi\). 
To this end, coupling provides a principled way to study convergence by jointly evolving two or more Markov chains. Following  \cite{johnson1998coupling}, we consider a \emph{grand coupling} of \(C\) chains, all initialized from a prior distribution \(\pi_0\) and evolved using a coupled kernel.
\begin{definition}[Grand coupling]\label{def:grand_coupling}
Let \(\kappa\) be a Markov kernel on a measurable space \(\XSpace\).
A \emph{grand coupling} of \(\kappa\) for \(C\) chains is a Markov 
kernel \(\bar{\kappa}\) on the product space \(\XSpace^C\) such that, if
\[
(X_{t+1}^{(1)},\dots,X_{t+1}^{(C)})
\sim
\bar{\kappa}(\cdot \mid (x_t^{(1)},\dots,x_t^{(C)})),
\]
then
\[
X_{t+1}^{(i)} \sim \kappa(\cdot \mid x_t^{(i)}),
\qquad i=1,\dots,C.
\]
The associated grand meeting time is
\[
\tau \triangleq \inf\{t \ge 0 : X_t^{(1)} = \cdots = X_t^{(C)}\}.
\]
We further assume faithfulness: once two or more chains meet at some time
\(\tilde{\tau}\), they are updated identically thereafter and remain equal for
all \(t \ge \tilde{\tau}\).
\end{definition}%
Following \cite{biswas2019estimating,johnson1996studying}, faster coalescence is desirable because the tail probability \(\Pr(\tau>t)\) often controls convergence bounds, so reducing this term leads to tighter bounds. See Appendix \ref{app:futher_exps} for further details.
\subsection{Poisson Matching
Coupled Kernel}\label{sec:mcmc_coupling}
\begin{algorithm}[t]
\SetAlgoLined
\DontPrintSemicolon
\LinesNotNumbered
\SetKwInOut{Input}{Input}
\SetKwInOut{Output}{Output}

\Input{Current states $x_t^{(1)},\dots,x_t^{(C)}$}
\Output{Updated states $x_{t+1}^{(1)},\dots,x_{t+1}^{(C)}$}

\makebox[0.28\linewidth][l]{Set lifted target measures:}
$\displaystyle
P_i'(\mathrm dy,\mathrm du)
=
k(y \mid x_t^{(i)})\,\mathrm dy\,
\mathrm{Ber}\!\bigl(\alpha(x_t^{(i)},y)\bigr)(\mathrm du),
\quad i=1,\dots,C
$\;

\makebox[0.28\linewidth][l]{Proposal measure:}
$\displaystyle
\mu(\mathrm dy,\mathrm du)
=
\frac{1}{C}\sum_{i=1}^C P_i'(\mathrm dy,\mathrm du)
$\;

\makebox[0.28\linewidth][l]{Shared Poisson process:}
Generate $\Omega=\{(S_j,Y_j,U_j)\}_{j\ge 1}$ with reference measure $\dd s \otimes \mu$\;

\For{$i=1,\dots,C$}{
    \makebox[0.28\linewidth][l]{PML selection:}
    $\displaystyle
    J^{(i)}
    =
    \arg\min_{j\ge 1}
    \left\{
    S_j
    \left(
    \frac{\mathrm d P_i'}{\mathrm d\mu}(Y_j,U_j)
    \right)^{-1}
    \right\}
    $\;

    \makebox[0.28\linewidth][l]{MH update:}
    $\displaystyle
    x_{t+1}^{(i)}
    \gets
    U_{J^{(i)}}Y_{J^{(i)}}
    +
    \bigl(1-U_{J^{(i)}}\bigr)x_t^{(i)}
    $\;
}

\caption{Metropolis--Hastings Coupled Kernel with Poisson Matching}
\label{alg:lifted_mh_coupling}
\end{algorithm}
We now introduce the \emph{joint coupled kernel} method, which couples directly at the level of the full Markov transition kernel \(\kappa\). Since \(\kappa\) is typically unavailable in closed form for scoring in~\eqref{PP_opt} (see Remark~\ref{rm:markov_kernel}), we instead lift the construction to the augmented proposal--acceptance space. For each chain \(i\), define the lifted measure
\begin{equation}
P_i'(\mathrm dy,\mathrm du)
=
k(y \mid x_t^{(i)})\,\mathrm dy\,
\mathrm{Ber}\!\bigl(\alpha(x_t^{(i)},y)\bigr)(\mathrm du),
\end{equation}
where \(Y_t^{(i)} \sim k(\cdot \mid x_t^{(i)})\), and
\(U_t^{(i)} \mid \{Y_t^{(i)}=y\} \sim \mathrm{Ber}(\alpha(x_t^{(i)},y))\).
The chains then update their respective next state by
\begin{equation}
x_{t+1}^{(i)}
=
\mathbf{1}\{U_t^{(i)}=1\}Y_t^{(i)}
+
\mathbf{1}\{U_t^{(i)}=0\}x_t^{(i)}.
\label{single_eq}
\end{equation}
This lifted representation avoids treating the rejection component as an explicit atom inside the PML construction; instead, rejection is encoded by the auxiliary variable \(U_t^{(i)}\), while the pushforward through \eqref{single_eq} recovers the original MH transition kernel. How does this augmented-space strategy compare with the non-augmented one, assuming full knowledge about the point mass measure? Surprisingly, Lemma \ref{lem:match_prob} shows that the augmented construction yields a better matching probability than the direct coupling scheme based on $\kappa$; see Appendix \ref{aug_proof} for the proof.
\begin{lemma}\label{lem:match_prob}
Consider two chains at states $x^1_t$ and $x^2_t$ respectively. At $t+1$, let $\mathbb{P}_{\mathrm{aug}}(x^1_t,x^2_t)$ denote the meeting probability under the product-augmented construction, and let $\mathbb{P}_{\mathrm{direct}}(x^1_t,x^2_t)$ denote the meeting probability under the original direct construction. Then
\[
\mathbb{P}_{\mathrm{aug}}(x^1_t,x^2_t) \geq \mathbb{P}_{\mathrm{direct}}(x^1_t,x^2_t).
\]
\end{lemma}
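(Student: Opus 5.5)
The plan is to compute both meeting probabilities exactly with the same first-order formula for the Poisson Matching Lemma, and then compare the resulting integrands pointwise. For two targets with densities $p,q$ with respect to a common dominating measure $\lambda$, sharing a Poisson process of intensity $\mathrm ds\otimes\mu$, the point $(s,x)$ is selected by both targets exactly when no other point lies in the region $\{(s',z): s'<s\max\{p(z)/p(x),\,q(z)/q(x)\}\}$. Integrating the corresponding void probability $\exp(-sA(x))$ over $s$ and over the mark $x\sim\mu$ gives
\[
\Pr(J^P=J^Q)=\int \frac{\lambda(\mathrm dx)}{A(x)},\qquad A(x)=\int \max\Bigl\{\tfrac{p(z)}{p(x)},\tfrac{q(z)}{q(x)}\Bigr\}\lambda(\mathrm dz).
\]
This formula does not depend on $\mu$, so both constructions can be compared directly regardless of which proposal is used.

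\textbf{Step 1: the direct construction.} Write $a_i(y)=\alpha(x^i_t,y)k(y\mid x^i_t)$ and $r_i=r(x^i_t)$. Then $\kappa(\cdot\mid x^i_t)$ has density $a_i$ with respect to Lebesgue measure, plus an atom of mass $r_i$ at $x^i_t$. I take $\lambda$ to be Lebesgue measure plus $\delta_{x^1_t}+\delta_{x^2_t}$. If $x^1_t=x^2_t$, faithfulness makes both sides equal to $1$, so I assume $x^1_t\neq x^2_t$. In that case a meeting requires both chains to select the same non-atomic point $y$, because each atom has zero mass under the other kernel. For such a $y$, splitting the integral in $A$ into its continuous part and its two atoms gives
\[
A_{\mathrm{direct}}(y)=\int\max\Bigl\{\tfrac{a_1(z)}{a_1(y)},\tfrac{a_2(z)}{a_2(y)}\Bigr\}\mathrm dz+\frac{r_1}{a_1(y)}+\frac{r_2}{a_2(y)}.
\]
Hence $\mathbb P_{\mathrm{direct}}=\int \mathrm dy/A_{\mathrm{direct}}(y)$.

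\textbf{Step 2: the augmented construction.} On $\XSpace\times\{0,1\}$ the lifted density is $a_i(z)$ at $u=1$ and $b_i(z)=k(z\mid x^i_t)(1-\alpha(x^i_t,z))$ at $u=0$, with $\int b_i=r_i$. I need to check that, up to null events, a meeting occurs only when both chains select the same point $(y,1)$:
\begin{itemize}
\item if the chains select distinct accepted points, those points differ almost surely;
\item if one chain rejects and returns $x^i_t$, the other would have to land exactly on that point, which has probability zero;
\item if both reject, they stay at $x^1_t\neq x^2_t$.
\end{itemize}
Conversely, if both chains select the same point $(y,1)$, they both move to $y$. Therefore $\mathbb P_{\mathrm{aug}}=\int \mathrm dy/A_{\mathrm{aug}}(y,1)$, where
\[
A_{\mathrm{aug}}(y,1)=\int\max\Bigl\{\tfrac{a_1(z)}{a_1(y)},\tfrac{a_2(z)}{a_2(y)}\Bigr\}\mathrm dz+\int\max\Bigl\{\tfrac{b_1(z)}{a_1(y)},\tfrac{b_2(z)}{a_2(y)}\Bigr\}\mathrm dz.
\]

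\textbf{Step 3: comparison.} Bounding $\max\{s,t\}\le s+t$ in the second integral gives at most $r_1/a_1(y)+r_2/a_2(y)$. So $A_{\mathrm{aug}}(y,1)\le A_{\mathrm{direct}}(y)$ pointwise, and integrating $1/A$ yields the claim. The intuition is that the direct construction treats the two rejection atoms as disjoint mass that competes additively, whereas lifting lets the rejection mass overlap on the shared $u=0$ slice.

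The step I expect to be the main obstacle is deriving the exact matching formula rigorously for measures with atoms, which is needed for the direct scheme. This requires choosing the dominating measure so that the atoms are handled correctly, and justifying that matches at atoms and ties have probability zero. Once that formula is in place, the comparison itself is a one-line inequality.
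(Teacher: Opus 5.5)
Your proposal is correct and takes essentially the same route as the paper. Your exact formula $\int \lambda(\mathrm dx)/A(x)$ is the unconditional form of the Li--Anantharam expression $\int p(u)\,(1+\beta(u))^{-1}\lambda(\mathrm du)$ that the paper uses (indeed $a_1(y)A(y)=1+\beta(y)$), and your bound $\max\{s,t\}\le s+t$ on the rejection slice is exactly the paper's estimate $I_0\le Q_b\,p(u)/q(u)$ via $(x-y)_+\le x$. One small slip: with $A$ written in $\lambda$-densities the void probability is $\exp(-s\,t(x)A(x))$, where $t=\mathrm d\mu/\mathrm d\lambda$; the factor $t(x)$ cancels against $\mu(\mathrm dx)=t(x)\lambda(\mathrm dx)$, so your final formula is unaffected.
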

To construct a shared source of common randomness, we use a marked Poisson point process
\(\Omega=\{(Y_j,U_j,S_j)\}_{j\ge1}\),\footnote{We use \(j\) to index the Poisson points and \(t\) to denote the Markov chain time index.} whose marks \((Y_j,U_j)\) are drawn i.i.d. from the mixture measure\footnote{For simple implementation, one may also use
\(\mu(\mathrm dy,\mathrm du)
\triangleq
\mathrm{Ber}\!\bigl(1/2\bigr)(\mathrm du)
\times
\frac{1}{C}\sum_{i=1}^C k(y\mid x_t^{(i)})\,\mathrm dy,
\)
with upper-bound ratio and expected runtime complexity \(2C\).}
\[
\mu(\mathrm dy,\mathrm du)
\triangleq
\frac{1}{C}\sum_{i=1}^C
k(y\mid x_t^{(i)})\,\mathrm dy\,
\mathrm{Ber}\!\bigl(\alpha(x_t^{(i)},y)\bigr)(\mathrm du).
\]
We then apply the PMC selection step using \(P'_i\) as the target measure and \(\mu\) as the common proposal. The procedure is summarized in Algorithm~\ref{alg:lifted_mh_coupling}. Finally, for completeness, we also outline the \emph{two-stage coupled kernel}, which decomposes the procedure into proposal coupling followed by coupling of the MH accept/reject decisions. 
Compared with this two-stage approach, the joint construction introduced here provides stronger matching-probability guarantees, detailed in Theorem \ref{thm:compare_bound} in Appendix \ref{appendix:mh_matching_prob_1step}, a behavior we also observe consistently in our experiments. 

\section{Experiments}\label{sec:experiments}
\paragraph{Multi-Marginal Coupling.}
We begin with the finite-state setup from Section~\ref{sec:multi-marginal},
shown in the left panel of  Figure~\ref{fig:chain_comparison}. 
We evaluate the proposed multi-marginal coupling methods on randomly generated sparse discrete distributions over a state space with $|\mathcal X|=60$, where each distribution is supported on $5$ states, which induces an irregular overlap structure across pairs of measures. 
We estimate $\mathbb{E}[G]$ for $C \in \{2^1,\ldots,2^5\}$ over $2 \times 10^5$ runs, ensuring that multiple measures share at least one state. Our baselines are two maximal pairwise-coupling schemes. The first, called ``random anchor'', selects a reference measure at random in each trial and couples all other measures to it. The second, random sequence, applies pairwise coupling along a
randomly generated sequence of pairs. 
\begin{figure}[t]
\centering

\begin{minipage}[t]{0.48\linewidth}
\centering
\vspace{-130pt}
\scriptsize

\begin{tikzpicture}
\matrix[
    matrix of nodes,
    row sep=2pt,
    column sep=3pt,
    nodes={anchor=west, font=\scriptsize}
] {
    \tikz{\draw[blue, thick] (0,0) -- (0.35,0); \fill[blue] (0.175,0) circle (1.6pt);}
    & Random Anchor
    &
    \tikz{\draw[red, thick] (0,0) -- (0.35,0); \filldraw[red] (0.175,-0.04) rectangle +(0.08,0.08);}
    & Random Sequence
    \\
    \tikz{\draw[orange, thick] (0,0) -- (0.35,0); \filldraw[orange] (0.175,0.06) -- (0.105,-0.06) -- (0.245,-0.06) -- cycle;}
    & List Coupling (Ours)
    &
    \tikz{\draw[green!60!black, thick] (0,0) -- (0.35,0); \filldraw[green!60!black] (0.175,0)
      (0.175,0.08) -- (0.255,0) -- (0.175,-0.08) -- (0.095,0) -- cycle;}
    & Poisson Matching (Ours)
    \\
    \tikz{\draw[black, thick, dashed] (0,0) -- (0.35,0);
    \node[black] at (0.175,0) {\scriptsize$\star$};}
    & Lower Bound
    & & \\
};
\end{tikzpicture}

\vspace{1pt}

\renewcommand{\arraystretch}{1.15}
\begin{tabular}{lccccc}
\toprule
 & \multicolumn{5}{c}{Number of Measures $C$} \\
\cmidrule(lr){2-6}
Method & $2$ & $4$ & $8$ & $16$ & $32$ \\
\midrule
Poisson Matching        & 1.83 & 3.61 & 7.10 & 11.97 & \textbf{18.99} \\
List          & \textbf{1.71} & \textbf{3.44} & \textbf{6.82} & \textbf{11.56} & 19.28 \\
Random Anchor & \textbf{1.71} & 3.62 & 7.37 & 13.56 & 23.61 \\
Random Sequence  & \textbf{1.71} & 3.63 & 7.38 & 13.59 & 23.82 \\
\midrule
Lower Bound   & 1.71 & 3.40 & 6.60 & 10.28 & 14.60 \\
\bottomrule
\end{tabular}
\end{minipage}
\hspace{10pt}
\begin{minipage}[t]{0.48\linewidth}
\centering
\begin{tikzpicture}
\begin{axis}[
    width=\linewidth,
    height=0.75\linewidth,
    xlabel={Number of distributions $C$},
    ylabel={Estimated $\mathbb{E}[G]$},
    xlabel style={font=\footnotesize},
    ylabel style={font=\footnotesize},
    tick label style={font=\footnotesize},
    xmin=2, xmax=32,
    ymin=0, ymax=32,
    xtick={2,4,8,16,32},
    ymajorgrids=true,
    xmajorgrids=true,
    grid style={dashed, gray!30},
    mark size=2.2pt,
]

\addplot+[blue, mark=o, thick] coordinates {
    (2,1.64) (4,3.29) (8,7.04) (16,14.96) (32,30.89)
};

\addplot+[red, mark=square, thick] coordinates {
    (2,1.64) (4,3.20) (8,6.90) (16,14.80) (32,30.85)
};

\addplot+[orange, mark=triangle, thick] coordinates {
    (2,1.64) (4,3.14) (8,6.11) (16,12.03) (32,23.93)
};

\addplot+[green!60!black, mark=diamond, thick] coordinates {
    (2,1.64) (4,2.87) (8,5.43) (16,10.46) (32,20.57)
};

\addplot+[black, mark=star, thick, dashed] coordinates {
    (2,1.6321205588285577)
    (4,2.896361676485673)
    (8,5.424843911799904)
    (16,10.481808382428365)
    (32,20.595737323685288)
};

\end{axis}
\end{tikzpicture}
\end{minipage}

\caption{Estimated $\mathbb E[G]$ across different coupling strategies across number of measures $C$. Left: random discrete measures. Right: exponential target measures ( scale$=1$ but different locations).}
\label{fig:chain_comparison}
\end{figure}%
\par
For $C=2$, the pairwise baselines and the list-level
coupling recover the optimal coupling performance. For larger $C$, our schemes consistently achieve substantially lower $\mathbb{E}[G]$ than the pairwise baselines. 
The list-level coupling outperforms Poisson matching for $C \in \{2,4,8,16\}$, likely because the sparse overlap structure resembles the
optimality-achieving setting in Example~\ref{example:optimal_list}. 
In such settings, list-level coupling can exploit local overlap groups, whereas Poisson matching does not generally maximize all pairwise matches simultaneously. 
As $C$ grows, the combined supports cover more of the state space, making this local-overlap advantage less pronounced; in this regime, Poisson matching becomes more effective and can outperform the list-level coupling.
\par
The right panel of Figure~\ref{fig:chain_comparison} depicts a continuous setting where the recursive list-level step is analytically computable, using the shifted exponential family \(P_i=i+\mathrm{Exp}(1)\), \(i=0,\ldots,C-1\). 
The list-level coupling again outperforms the pairwise baselines. 
Interestingly, Poisson matching exactly matches the lower bound; it turns out that theoretically, Poisson matching achieves an optimal \(\mathbf G^*\) in this particular setting. The proof is given in Appendix~\ref{example:pml_shifted_exp}, where we show that $\mathbf{G}^* = C-(C-1)e^{-1}.$
\paragraph{Coupling Metropolis--Hastings chains.}
We evaluate our proposed scheme on two representative random-walk
Metropolis--Hastings (RWMH) kernels in Figure~\ref{fig:coupling-dimension-combined}.
The first row considers a Gaussian target
\(\pi=\mathcal N(0,\mathrm I_d)\), with chains initialized from
\(\pi_0=\mathcal N(\mathbf 1_d,16\mathrm I_d)\). The second row considers a
heavy-tailed Cauchy target sampled using a Student-\(t\) random-walk proposal
with \(2\) degrees of freedom, where the chains start with $\pi_0=\mathcal{N}(0, \mathrm I_d)$. We set the proposal scale to
\(\sigma{=}2.4/\sqrt d\), following the classical high-dimensional optimal-scaling
heuristic for random-walk Metropolis~\cite{gelman1997weak}, which we use here as
a practical tuning rule. 

We compare our proposed scheme with grand-coupling extensions of the pairwise
methods from \citep{johnson1998coupling} and \citep{wang2021maximal}, namely
Algorithms~1 and~3 of \cite{wang2021maximal} for general-purpose kernels. Since
these methods were originally developed for coupling two chains, we adapt them
to the \(C\)-chain setting by selecting one chain as a reference and coupling
each of the remaining \(C-1\) chains to it via maximal coupling; we refer to this
as star coupling. The resulting baselines can be viewed as the two-step and
one-step counterparts of our approach, respectively. Note that, in contrast to the multi-marginal setting, we find that using a fixed anchor, or reference chain, across iterations works best for these baselines. For each setting, we measure the average meeting time as a function of the
number of chains \(C\) with fixed \(d\), and as a function of dimension \(d\)
with fixed \(C\). Across both targets, our method achieves smaller meeting times
than the pairwise star-coupling baselines. We attribute the improvement to the adaptive construction of the shared point process, which couples all chains jointly rather than forcing every match through a fixed
reference chain.
\begin{figure}[t]
\centering

\begin{tikzpicture}
\matrix[
    matrix of nodes,
    row sep=2pt,
    column sep=8pt,
    nodes={anchor=west, font=\scriptsize}
] {
    \tikz{\draw[blue, thick] (0,0) -- (0.28,0); \fill[blue] (0.14,0) circle (1.3pt);}
    & 2-step fixed reference \cite{johnson1998coupling} 
    &
    \tikz{\draw[red, thick] (0,0) -- (0.28,0); \fill[red] (0.14,-0.045) rectangle +(0.09,0.09);}
    & 1-step fixed reference \cite{wang2021maximal}
    &
    \tikz{\draw[orange, thick] (0,0) -- (0.28,0); \filldraw[orange] (0.14,0.065) -- (0.075,-0.045) -- (0.205,-0.045) -- cycle;}
    & 2-step PMC
    &
    \tikz{\draw[green!60!black, thick] (0,0) -- (0.28,0); \filldraw[green!60!black] (0.14,0.07) -- (0.21,0) -- (0.14,-0.07) -- (0.07,0) -- cycle;}
    & 1-step PMC \\
};
\end{tikzpicture}

\vspace{0.6em}

\begin{minipage}[t]{0.48\textwidth}
\centering
\begin{tikzpicture}
\begin{axis}[
    width=\textwidth,
    height=0.68\textwidth,
    xlabel={Number of chains $C$ with fixed $d=8$},
    ylabel={Estimated $\mathrm{E}[\tau]$},
    label style={font=\scriptsize},
    tick label style={font=\scriptsize},
    xtick={2,4,8,16,32},
    ymajorgrids=true,
    grid style=dashed,
    xmode=log,
    log basis x={2},
    thick,
    mark options={scale=1.0}
]

\addplot[color=blue, mark=o]
coordinates {
    (2,149) (4,226) (8,289) (16,347) (32,414)
};

\addplot[color=red, mark=square*]
coordinates {
    (2,145) (4,207) (8,270) (16,318) (32,366)
};

\addplot[color=orange, mark=triangle*]
coordinates {
    (2,179) (4,236) (8,270) (16,291) (32,311)
};

\addplot[color=green!60!black, mark=diamond*]
coordinates {
    (2,166) (4,213) (8,245) (16,263) (32,273)
};

\node[
    anchor=north west,
    font=\scriptsize,
    fill=white,
    fill opacity=0.8,
    text opacity=1,
    inner sep=1.5pt
] at (rel axis cs:0.03,0.97) {Gaussian RWMH};

\end{axis}
\end{tikzpicture}
\end{minipage}
\hfill
\begin{minipage}[t]{0.48\textwidth}
\centering
\begin{tikzpicture}
\begin{axis}[
    width=\textwidth,
    height=0.68\textwidth,
    xlabel={Dimension $d$ with fixed $C=32$},
    ylabel={Estimated $\mathrm{E}[\tau]$},
    label style={font=\scriptsize},
    tick label style={font=\scriptsize},
    xtick={1,2,3,4,5,6,7,8,9,10},
    ymajorgrids=true,
    grid style=dashed,
    thick,
    mark options={scale=1.0}
]

\addplot[color=blue, mark=o]
coordinates {
    (1,14.1) (2,30) (3,50) (4,76) (5,114)
    (6,174) (7,260) (8,414) (9,664) (10,1127)
};

\addplot[color=red, mark=square*]
coordinates {
    (1,13.3) (2,28) (3,46) (4,70) (5,102)
    (6,154) (7,237) (8,366) (9,585) (10,995)
};

\addplot[color=orange, mark=triangle*]
coordinates {
    (1,12.17) (2,27) (3,44) (4,67) (5,97)
    (6,131) (7,208) (8,311) (9,482) (10,791)
};

\addplot[color=green!60!black, mark=diamond*]
coordinates {
    (1,11.9) (2,26) (3,42) (4,63) (5,91)
    (6,143) (7,185) (8,273) (9,426) (10,684)
};

\node[
    anchor=north west,
    font=\scriptsize,
    fill=white,
    fill opacity=0.8,
    text opacity=1,
    inner sep=1.5pt
] at (rel axis cs:0.03,0.97) {Gaussian RWMH};

\end{axis}
\end{tikzpicture}
\end{minipage}


\begin{minipage}[t]{0.48\textwidth}
\centering
\begin{tikzpicture}
\begin{axis}[
    width=\textwidth,
    height=0.68\textwidth,
    xlabel={Number of chains $C$ with fixed $d=5$},
    ylabel={Estimated $\mathrm{E}[\tau]$},
    label style={font=\scriptsize},
    tick label style={font=\scriptsize},
    xtick={2,4,8,16},
    ymajorgrids=true,
    grid style=dashed,
    xmode=log,
    log basis x={2},
    thick,
    mark options={scale=1.0}
]

\addplot[color=blue, mark=o]
coordinates {
    (2,101.2) (3,182) (4,211) (6,306) (8,390) (11,461) (16,596.6) (23,745)
};

\addplot[color=red, mark=square*]
coordinates {
    (2,75.3) (3,96) (4,126) (6,185) (8,242) (11,326) (16,444.14) (23,516)
};

\addplot[color=orange, mark=triangle*]
coordinates {
    (2,111) (3,185) (4,216) (6,244) (8,263) (11,322) (16,345) (23,392)
};

\addplot[color=green!60!black, mark=diamond*]
coordinates {
    (2,91) (3,141) (4,147) (6,183) (8,200) (11,212) (16,227.8) (23,259)
};

\node[
    anchor=north west,
    font=\scriptsize,
    fill=white,
    fill opacity=0.8,
    text opacity=1,
    inner sep=1.5pt
] at (rel axis cs:0.03,0.97) {Student-$t$ RWMH};

\end{axis}
\end{tikzpicture}
\end{minipage}
\hfill
\begin{minipage}[t]{0.48\textwidth}
\centering
\begin{tikzpicture}
\begin{axis}[
    width=\textwidth,
    height=0.68\textwidth,
    xlabel={Dimension $d$ with fixed $C=16$},
    ylabel={Estimated $\mathrm{E}[\tau]$},
    label style={font=\scriptsize},
    tick label style={font=\scriptsize},
    xtick={1,2,3,4,5,6,7,8},
    ymajorgrids=true,
    grid style=dashed,
    thick,
    mark options={scale=1.0}
]

\addplot[color=blue, mark=o]
coordinates {
    (1,9.4) (2,31.3) (3,93.3) (4,235.2)
    (5,576.6) (6,1356.6) (7,3017.3) (8,7226)
};

\addplot[color=red, mark=square*]
coordinates {
    (1,7.8) (2,22.6) (3,64.9) (4,170.9)
    (5,444.14) (6,925.6) (7,2097.9) (8,5042)
};

\addplot[color=orange, mark=triangle*]
coordinates {
    (1,6.5) (2,21.8) (3,56.9) (4,148.2)
    (5,345) (6,754.5) (7,1738) (8,4443)
};

\addplot[color=green!60!black, mark=diamond*]
coordinates {
    (1,5.4) (2,16.9) (3,44.9) (4,98.4)
    (5,227.8) (6,632.7) (7,1125.9) (8,2434)
};

\node[
    anchor=north west,
    font=\scriptsize,
    fill=white,
    fill opacity=0.8,
    text opacity=1,
    inner sep=1.5pt
] at (rel axis cs:0.03,0.97) {Student-$t$ RWMH};

\end{axis}
\end{tikzpicture}
\end{minipage}
\vspace{-5pt}
\caption{Comparison of estimated meeting times across Gaussian and Student-$t$ targets. 
Top row: Gaussian target; bottom row: Student-$t$ target. 
Left column: meeting time versus the number of coupled chains; right column: meeting time versus dimension. 
Results are averaged over 10000 runs.}
\vspace{-10pt}
\label{fig:coupling-dimension-combined}
\end{figure}

\paragraph{Additional experiments.}
Finally, we provide further discussion of other kernel and additional
details on grand couplings for convergence diagnostics in
Appendix~\ref{app:futher_exps}. There, we also present an alternative bound different from that of \citet{johnson1996studying}, targetting the case where rejection
coefficient between the two measures is not available. We also discuss applications
to other diagnostic methods, such as weight harmonization
\citep{corenflos2025coupling}.
\section{Related Work}
\paragraph{Multi-marginal coupling.}
Our construction builds on the Poisson matching lemma (PML)
\citep{li2021unified}, which studies matching from shared Poisson randomness
primarily for information-theoretic achievability using Poisson Monte Carlo
(PMC), also known as \(A^*\) sampling
\citep{maddison2014sampling,theis2022algorithms,flamich2022fast,flamich2024data}. This framework has recently gained traction in compression
\citep{flamich2022fast,flamich2023greedy,rowan2026one}, alongside related
schemes based on different sampling procedures
\citep{phanchannel,daliri2025coupling,phan2024importance}. Several recent works study list-level coupling questions motivated by multi-draft speculative decoding \citep{rowan2025list,khisti2024multi}; however, prior to our work, optimal achievability bounds for the general multi-marginal formulation were not known. For couplings of multiple measures, to the best of our knowledge, the closest
existing work is that of \citet{angel2019pairwise}, who study simultaneous
pairwise disagreement guarantees for discrete distributions. In contrast, we
study multi-marginal coupling through a cluster-count objective, relate it to
list-level and pairwise mismatch, and develop efficient constructions that
extend naturally to continuous measures.

\paragraph{Markov chain coupling.}
Couplings are widely used to study the convergence of Markov chains and to design
MCMC convergence diagnostics
\citep{johnson1996studying,biswas2019estimating,jacob2020unbiased,grimmett2025coalescence}.
Existing approaches include contraction-based couplings, such as reflection
couplings \citep{eberle2016reflection}, and exact-meeting constructions based
on maximal couplings of Metropolis--Hastings kernels \citep{wang2021maximal} or
Gibbs kernels \citep{trippe2021optimal}. Our work instead focuses on
multi-chain couplings that directly target list-level and pairwise matching
events through a multi-marginal formulation, and can be viewed as extending
maximal-coupling ideas for Metropolis--Hastings kernels to the multi-chain
setting.
\section{Conclusion}
We studied coupling constructions for Markov chains from a multi-marginal
perspective, with the goal of encouraging matching behavior beyond the classical
pairwise maximal-coupling setting. Our construction leads to an optimization
formulation for multi-marginal coupling, for which we provide a general bounds as well as optimal
solutions in several specific settings. These solutions are attained using either
list-level coupling or Poisson matching constructions. From a theoretical
perspective, it would be interesting to develop more general-purpose solutions
to this optimization problem and to design efficient algorithms for computing or
approximating them. Such developments could, in turn, lead to stronger and more broadly applicable
coupling constructions for Markov chains.

Related to convergence analysis, a promising direction for future work is to build and derive sharper convergence diagnostics based on multiple coupled chains, potentially enabling more informative assessments of mixing and convergence than standard two-chain methods. Another important application of coupled Markov chains is perfect simulation,
such as coupling from the past~\citep{huber2016perfect}. These methods can be
viewed as an extreme form of convergence diagnosis: they couple chains from all
possible initial states and check whether they have coalesced, yielding exact
samples when applicable. In contrast, our methods track coalescence among only
finitely many chains as a practical diagnostic. A natural future direction is to
explore whether Poisson coupling techniques can be used to construct update
functions for perfect simulation as well.

\section*{Acknowledgment}
Buu Phan and Ashish Khisti were supported by the NSERC Discovery Grant.
The authors acknowledge financial support from Imperial College London through an Imperial College Research Fellowship grant awarded to Gergely Flamich. Shahab Asoodeh was supported
by the NSERC Discovery Grant.

Resources used in preparing this research were provided, in part, by the Province of Ontario,
the Government of Canada through CIFAR, and companies sponsoring the Vector Institute
\href{www.vectorinstitute.ai/partnerships/}{www.vectorinstitute.ai/partnerships/}.

\bibliographystyle{plainnat} 
\bibliography{references} 

@article{maddison2014sampling,
  title={A* Sampling},
  author={Maddison, Chris J and Tarlow, Daniel and Minka, Tom},
  journal={Advances in Neural Information Processing Systems},
  volume={27},
  pages={3086--3094},
  year={2014}
}

@article{flamich2022fast,
  title={Fast relative entropy coding with a* coding},
  author={Flamich, Gergely and Markou, Stratis and Hern{\'a}ndez-Lobato, Jos{\'e} Miguel},
  journal={arXiv preprint arXiv:2201.12857},
  year={2022}
}

@book{levin2017markov,
  title={Markov chains and mixing times},
  author={Levin, David A and Peres, Yuval},
  volume={107},
  year={2017},
  publisher={American Mathematical Soc.}
}

@phdthesis{flamich2024data,
  title = {Data Compression with Relative Entropy Coding},
  url = {https://www.repository.cam.ac.uk/handle/1810/385303},
  doi = {10.17863/CAM.118972},
  school = {Apollo - University of Cambridge Repository},
  author = {Flamich, Gergely},
  year = {2024},
  note = {PhD Thesis},
}

@article{li2021unified,
  title={A unified framework for one-shot achievability via the Poisson matching lemma},
  author={Li, Cheuk Ting and Anantharam, Venkat},
  journal={IEEE Transactions on Information Theory},
  volume={67},
  number={5},
  pages={2624--2651},
  year={2021},
  publisher={IEEE}
}

@inproceedings{wang2021maximal,
  title={Maximal couplings of the Metropolis-Hastings algorithm},
  author={Wang, Guanyang and O’Leary, John and Jacob, Pierre},
  booktitle={International Conference on Artificial Intelligence and Statistics},
  pages={1225--1233},
  year={2021},
  organization={PMLR}
}

@article{biswas2019estimating,
  title={Estimating convergence of Markov chains with L-lag couplings},
  author={Biswas, Niloy and Jacob, Pierre E and Vanetti, Paul},
  journal={Advances in neural information processing systems},
  volume={32},
  year={2019}
}

@article{grimmett2025coalescence,
  title={Coalescence in Markov chains},
  author={Grimmett, Geoffrey R and Holmes, Mark},
  journal={arXiv preprint arXiv:2510.13572},
  year={2025}
}

@article{angel2019pairwise,
  title={Pairwise optimal coupling of multiple random variables},
  author={Angel, Omer and Spinka, Yinon},
  journal={arXiv preprint arXiv:1903.00632},
  year={2019}
}

@article{brooks1998general,
  title={General methods for monitoring convergence of iterative simulations},
  author={Brooks, Stephen P and Gelman, Andrew},
  journal={Journal of computational and graphical statistics},
  volume={7},
  number={4},
  pages={434--455},
  year={1998},
  publisher={Taylor \& Francis}
}

@article{gelman1992inference,
  title={Inference from iterative simulation using multiple sequences},
  author={Gelman, Andrew and Rubin, Donald B},
  journal={Statistical science},
  volume={7},
  number={4},
  pages={457--472},
  year={1992},
  publisher={Institute of Mathematical Statistics}
}

@article{corenflos2025coupling,
  title={A coupling-based approach to f-divergences diagnostics for Markov chain Monte Carlo},
  author={Corenflos, Adrien and Dau, Hai-Dang},
  journal={arXiv preprint arXiv:2510.07559},
  year={2025}
}

@article{johnson1996studying,
  title={Studying convergence of Markov chain Monte Carlo algorithms using coupled sample paths},
  author={Johnson, Valen E},
  journal={Journal of the American Statistical Association},
  volume={91},
  number={433},
  pages={154--166},
  year={1996},
  publisher={Taylor \& Francis}
}

@article{johnson1998coupling,
  title={A coupling-regeneration scheme for diagnosing convergence in Markov chain Monte Carlo algorithms},
  author={Johnson, Valen E},
  journal={Journal of the American Statistical Association},
  volume={93},
  number={441},
  pages={238--248},
  year={1998},
  publisher={Taylor \& Francis}
}

@inproceedings{daliri2025coupling,
  title={Coupling without communication and drafter-invariant speculative decoding},
  author={Daliri, Majid and Musco, Christopher and Suresh, Ananda Theertha},
  booktitle={2025 IEEE International Symposium on Information Theory (ISIT)},
  pages={1--6},
  year={2025},
  organization={IEEE}
}

@article{gelman1997weak,
  title={Weak convergence and optimal scaling of random walk Metropolis algorithms},
  author={Gelman, Andrew and Gilks, Walter R and Roberts, Gareth O},
  journal={The annals of applied probability},
  volume={7},
  number={1},
  pages={110--120},
  year={1997},
  publisher={Institute of Mathematical Statistics}
}

@article{girolami2011riemann,
  title={Riemann manifold langevin and hamiltonian monte carlo methods},
  author={Girolami, Mark and Calderhead, Ben},
  journal={Journal of the Royal Statistical Society Series B: Statistical Methodology},
  volume={73},
  number={2},
  pages={123--214},
  year={2011},
  publisher={Oxford University Press}
}

@article{sason2016f,
  title={$ f $-divergence Inequalities},
  author={Sason, Igal and Verd{\'u}, Sergio},
  journal={IEEE Transactions on Information Theory},
  volume={62},
  number={11},
  pages={5973--6006},
  year={2016},
  publisher={IEEE}
}

@inproceedings{theis2022algorithms,
  title={Algorithms for the communication of samples},
  author={Theis, Lucas and Ahmed, Noureldin Y},
  booktitle={International Conference on Machine Learning},
  pages={21308--21328},
  year={2022},
  organization={PMLR}
}

@article{huber2016perfect,
  title={Perfect simulation},
  author={Huber, Mark L},
  journal={Monographs on Statistics and Applied Probability},
  volume={148},
  pages={148},
  year={2016}
}

@inproceedings{phan2024importance,
  title={Importance matching lemma for lossy compression with side information},
  author={Phan, Buu and Khisti, Ashish and Louizos, Christos},
  booktitle={International Conference on Artificial Intelligence and Statistics},
  pages={1387--1395},
  year={2024},
  organization={PMLR}
}

@inproceedings{phanchannel,
  title={Channel Simulation and Distributed Compression with Ensemble Rejection Sampling},
  author={Phan, Buu and Khisti, Ashish J},
  year={2025},
  booktitle={The Thirty-ninth Annual Conference on Neural Information Processing Systems}
}

@article{flamich2023greedy,
  title={Greedy Poisson rejection sampling},
  author={Flamich, Gergely},
  journal={Advances in Neural Information Processing Systems},
  volume={36},
  pages={37089--37127},
  year={2023}
}

@article{rowan2026one,
  title={One-Shot Broadcast Joint Source-Channel Coding with Codebook Diversity},
  author={Rowan, Joseph and Phan, Buu and Khisti, Ashish},
  journal={arXiv preprint arXiv:2601.10648},
  year={2026}
}

@article{eberle2016reflection,
  title={Reflection couplings and contraction rates for diffusions},
  author={Eberle, Andreas},
  journal={Probability theory and related fields},
  volume={166},
  number={3},
  pages={851--886},
  year={2016},
  publisher={Springer}
}

@article{jacob2020unbiased,
  title={Unbiased Markov chain Monte Carlo methods with couplings},
  author={Jacob, Pierre E and O’leary, John and Atchad{\'e}, Yves F},
  journal={Journal of the Royal Statistical Society Series B: Statistical Methodology},
  volume={82},
  number={3},
  pages={543--600},
  year={2020},
  publisher={Oxford University Press}
}

@article{trippe2021optimal,
  title={Optimal transport couplings of Gibbs samplers on partitions for unbiased estimation},
  author={Trippe, Brian L and Nguyen, Tin D and Broderick, Tamara},
  journal={arXiv preprint arXiv:2104.04514},
  year={2021}
}

@article{rowan2025list,
  title={List-Level Distribution Coupling with Applications to Speculative Decoding and Lossy Compression},
  author={Rowan, Joseph and Phan, Buu and Khisti, Ashish},
  journal={arXiv preprint arXiv:2506.05632},
  year={2025}
}

@article{khisti2024multi,
  title={Multi-draft speculative sampling: Canonical decomposition and theoretical limits},
  author={Khisti, Ashish and Ebrahimi, M Reza and Dbouk, Hassan and Behboodi, Arash and Memisevic, Roland and Louizos, Christos},
  journal={arXiv preprint arXiv:2410.18234},
  year={2024}
}


\newpage
\appendix
\section{Auxiliary Algorithms}\label{appendix:aux_algo}
\subsection{Maximal Coupling}\label{appendix:max_coupling}

\paragraph{Achievability.}
Let \(P\) and \(Q\) be probability measures on a measurable space
\((\XSpace,\mathcal X)\). Their meet measure is defined by
\[
\gamma := P\wedge Q,
\]
that is, \(\gamma\) is the largest measure dominated by both \(P\) and \(Q\).
Equivalently, when \(P\) and \(Q\) have densities \(p\) and \(q\) with respect
to the same base measure, the common component has density
\[
\gamma(x)=\min\{p(x),q(x)\},
\]
with the abuse of notation, we denote the density with the same letter when no confusion arises. The total mass of the common component satisfies
\[
\gamma(\XSpace)
=
1-d_{\mathrm{TV}}(P,Q),
\]
where
\[
d_{\mathrm{TV}}(P,Q)
=
\sup_{A\in\mathcal X}|P(A)-Q(A)|.
\]

A maximal coupling can be constructed by first identifying this common
component. Namely,
\begin{align*}
\text{(i)} &\quad
\text{with probability } \gamma(\XSpace),
\text{ sample } X^P=X^Q \sim \gamma/\gamma(\XSpace),\\
\text{(ii)} &\quad
\text{otherwise, sample }
X^P\sim \frac{P-\gamma}{d_{\mathrm{TV}}(P,Q)}
\text{ and }
X^Q\sim \frac{Q-\gamma}{d_{\mathrm{TV}}(P,Q)}.
\end{align*}
By construction, \(X^P\sim P\) and \(X^Q\sim Q\). Moreover,
\[
\Pr(X^P=X^Q)
=
\gamma(\XSpace)
=
1-d_{\mathrm{TV}}(P,Q),
\]
which is the largest possible agreement probability over all couplings of
\(P\) and \(Q\). We note that this construction requires sampling not only
from the common component \(\gamma\), but also from the residual measures
\(P-\gamma\) and \(Q-\gamma\). In practice, when \(P\) and \(Q\) admit densities \(p\) and \(q\) with respect
to the same base measure, the following rejection-based implementation
provides a general maximal coupling algorithm.

\begin{algorithm}[ht]
\SetAlgoLined
\DontPrintSemicolon
\SetKwIF{If}{ElseIf}{Else}{if}{:}{else if}{else}{end}
\SetKwFunction{Sample}{Sample}

\textbf{Input:}\;
Probability measures \(P\) and \(Q\) with densities \(p\) and \(q\) w.r.t. a common dominating measure\;
\textbf{Output:}\;
A coupled pair \((X^P,X^Q)\) with marginals \(P\) and \(Q\)\;

\tcp{Step 1: Sample from \(P\).}
Sample \(X\sim P\) and \(W\sim \mathrm{Unif}(0,1)\)\;

\If{\(W \le \min\left\{1,\dfrac{q(X)}{p(X)}\right\}\)}{
    \(X^P \gets X\)\;
    \(X^Q \gets X\)\;
    \Return{\((X^P,X^Q)\)}\;
}
\Else{
    \(X^P \gets X\)\;

    \tcp{Step 2: Sample from the residual part of \(Q\).}
    \Repeat{\(W^* > \min\left\{1,\dfrac{p(Y)}{q(Y)}\right\}\)}{
        Sample \(Y\sim Q\) and \(W^*\sim \mathrm{Unif}(0,1)\)\;
    }

    \(X^Q \gets Y\)\;
    \Return{\((X^P,X^Q)\)}\;
}

\caption{Maximal coupling of \(P\) and \(Q\)}
\label{alg:max_coupling}
\end{algorithm}
\subsection{Metropolis Hasting Algorithms}\label{appendix:mh}
We provide the Metropolis-Hasting algorithm for target distribution $\pi(\cdot)$ using the proposal kernel $K(\cdot|x)$ in Algorithm \ref{alg:mh_kernel}.
\begin{algorithm}[ht]
\SetAlgoLined
\DontPrintSemicolon
\SetKwIF{If}{ElseIf}{Else}{if}{:}{else if}{else}{end}
\SetKwFunction{Sample}{Sample}

\textbf{Input:} current state $x\in\XSpace$, target density $\pi$, proposal kernel $K(\mathrm dy | x)$ with density $k(y\mid x)$\;
\textbf{Output:} next state $x'\in\XSpace$\;

\tcp{Step 1: Propose a candidate move.}
Sample $y \sim K(\cdot| x)$\;

\tcp{Step 2: Compute the MH acceptance probability.}
\[
\alpha(x,y)
\gets
\min \left(1,\frac{\pi(y)k(x\mid y)}{\pi(x)k(y\mid x)}\right) .
\]
Sample $u\sim \mathrm{Unif}(0,1)$\;

\tcp{Step 3: Accept or reject the proposal.}
\If{$u\le \alpha(x,y)$}{
    $x'\gets y$\;
}
\Else{
    $x'\gets x$\;
}
\Return{$x'$}\;

\caption{One step of the Metropolis--Hastings algorithm}
\label{alg:mh_kernel}
\end{algorithm}

\subsection{Poisson Monte Carlo Algorithm} \label{app:solve_PP_opt}
Algorithm~\ref{alg:pfr_mu_proposal} gives the procedure used to solve the optimization problem over the countably infinite set of Poisson points in~\eqref{PP_opt}, following the version of \citet{theis2022algorithms}. 
\begin{algorithm}[ht]
\SetAlgoLined
\DontPrintSemicolon
\SetKwInOut{Input}{Input}
\SetKwInOut{Output}{Output}

\Input{Target measure \(P\), proposal measure \(\mu\), and constant
\(w_{\min}>0\) such that
\[
w_{\min}\le \inf_{x\in\XSpace} \frac{\mathrm d\mu}{\mathrm dP}(x).
\]}
\Output{Sample \(X\sim P\)}

Generate the atoms
\[
\Omega=\{(X_j,S_j)\}_{j\ge 1}
\]
of a marked Poisson process on \(\XSpace\times \mathbb R_+\) with intensity
\(\mu(\mathrm dx)\otimes \mathrm ds\)

Set \(j\gets 1\), \(s^\star\gets \infty\), and \(j^\star\gets 1\)\;

\While{\(\mathrm{true}\)}{
    \[
        \widetilde S_j
        \gets
        S_j
        \frac{\mathrm d\mu}{\mathrm dP}(X_j)
    \]

    \If{\(\widetilde S_j < s^\star\)}{
        \(s^\star \gets \widetilde S_j\)\;
        \(j^\star \gets j\)\;
    }

    \If{\(s^\star \le S_j w_{\min}\)}{
        \Return{\(X_{j^\star}\)}\;
    }

    \(j\gets j+1\)\;
}

\caption{PFR sampling with proposal measure \(\mu\), following \citet{theis2022algorithms}}
\label{alg:pfr_mu_proposal}
\end{algorithm}

\section{List-Level Theorems}\label{appendix:list}
\paragraph{Notation Notice.}
Unlike Poisson matching, the list-level coupling does not select atoms by
Poisson indices. For this reason, we switch to use subscripts to denote distribution
indices for conveninent.
\subsection{List-Level Coupling Representation}
We first relate \(G\) to our list-level objective, fix an ordering of the variables and define, for \(k=1,\dots,C-1\),
\[
A_k := \left\{X_k \in \{X_{k+1},\dots,X_C\}\right\}.
\]
The event \(A_k\) means that \(X_k\) does not introduce a new cluster relative to the suffix \((X_{k+1},\dots,X_C)\). Accordingly, if we define
\[
G_k \triangleq | \{ \text{unique values among } X_k, \dots, X_C \}|,
\]
then
\[
G_k = G_{k+1} + \mathbf{1}\!\left\{X_k \notin \{X_{k+1},\dots,X_C\}\right\}
= G_{k+1} + 1 - \mathbf{1}(A_k).
\]
Since \(G_C=1\), summing over \(k\) yields the pathwise identity
\[
G = C - \sum_{k=1}^{C-1} \mathbf{1}(A_k),
\]
and therefore
\[
\mathbb{E}[G]
=
C - \sum_{k=1}^{C-1}
\Pr\!\left(X_k \in \{X_{k+1},\dots,X_C\}\right).
\]

\subsection{Proof of Theorem \ref{theorem:list_coupling}} \label{proof_list}
We restate the theorem for convenience below. 

\begin{theoremrestatement}[List-Level Coupling]
Let $\mu,\nu_1,\dots,\nu_m$ be probability measures on a common measurable space, all absolutely continuous with respect to a common dominating measure $\lambda$. With the abuse of notation, we denote their densities with the same letter when no confusion arises.
\[
\mu(x)=\frac{\dd\mu}{\dd\lambda}(x),
\qquad
\nu_i(x)=\frac{\dd\nu_i}{\dd\lambda}(x).
\]
 Define the barycenter
\[
\bar{\nu}(x) \coloneqq \frac{1}{m}\sum_{i=1}^m \nu_i(x).\]
Then there exists a coupling such that
\[
\Pr\bigl(X\notin\{Y_1,\dots,Y_m\}\bigr)
= E_m(\mu\|\bar\nu),
\]
which is optimal. Moreover, this coupling can be implemented without computing $E_m(\mu\|\bar\nu)$.
\end{theoremrestatement}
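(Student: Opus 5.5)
The proof splits into a converse (no coupling can beat $1-E_m(\mu\|\bar\nu)$) and an achievability part (Algorithm~\ref{alg:list_coupling_barycenter} attains it).

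\emph{Converse.} Fix any coupling $\gamma$ and let $A=\{X\in\{Y_1,\dots,Y_m\}\}$. By a union bound, $\Pr(A)\le\sum_{j}\Pr(X=Y_j)$. To keep the dependence on $x$, I will localize. For a measurable set $B$,
\[
\Pr(A,\,X\in B)\le \min\Bigl\{\mu(B),\ \sum_{j=1}^m \Pr(Y_j\in B)\Bigr\}=\min\{\mu(B),\,m\bar\nu(B)\}.
\]
The first bound is trivial. The second uses $\{X\in B,\,X=Y_j\}\subseteq\{Y_j\in B\}$.

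The measure $B\mapsto\Pr(A,X\in B)$ is dominated by both $\mu$ and $m\bar\nu$, so it is dominated by their minimum measure. Its density is therefore at most $\min\{\mu(x),m\bar\nu(x)\}$ $\lambda$-a.e. The cleanest way to get this is to apply the displayed bound to $B$ intersected with $\{\mu\le m\bar\nu\}$ and with its complement. Integrating gives
\[
\Pr(A)\le\int\min\{\mu,m\bar\nu\}\,d\lambda=1-\int(\mu-m\bar\nu)_+\,d\lambda=1-E_m(\mu\|\bar\nu).
\]

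\emph{Achievability.} I will verify that Algorithm~\ref{alg:list_coupling_barycenter} has the correct marginals and attains this value.

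The coupling event occurs with probability $\int\mu\min\{1,m\bar\nu/\mu\}\,d\lambda=\int\min\{\mu,m\bar\nu\}\,d\lambda$. On that event $X=Y_I$, so $\Pr(A)\ge 1-E_m(\mu\|\bar\nu)$. The marginal of $X$ is $\mu$ by construction.

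For each $Y_j$, I split on whether $I=j$. The sub-density from $\{I=j\}$ is
\[
\mu(x)\min\Bigl\{1,\frac{m\bar\nu(x)}{\mu(x)}\Bigr\}\frac{\nu_j(x)}{m\bar\nu(x)}=\nu_j(x)\min\Bigl\{1,\frac{\mu(x)}{m\bar\nu(x)}\Bigr\},
\]
which has total mass $\alpha^j$ as defined in \eqref{residual_distribution}. On the complement $\{I\ne j\}$, which has probability $1-\alpha^j$, the algorithm draws $Y_j$ from $r^j$. The density of $r^j$ is exactly $\nu_j(1-\min\{1,\mu/(m\bar\nu)\})/(1-\alpha^j)$. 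Adding the two pieces gives $\nu_j$.

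The only point needing care is that $\{I\ne j\}$ has probability exactly $1-\alpha^j$. This holds because $\Pr(I=j)=\alpha^j$, and the no-coupling branch is included in $\{I\neq j\}$. The residual draws for $Y_j$ may depend arbitrarily on one another; only their marginals matter, and this is what permits the blue steps.

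Algorithm~\ref{alg:residual_sampler} is a rejection sampler whose acceptance function $(1-\mu/(m\bar\nu))_+$ is at most $1$. It therefore samples $r^j$ exactly without knowing $1-\alpha^j$, and no step of the procedure computes $E_m(\mu\|\bar\nu)$. Degenerate cases, where $\alpha^j=1$ or $\bar\nu(x)=0$, are handled by the stated conventions.

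\emph{Main obstacle.} The converse is the delicate step, specifically passing from setwise domination to pointwise domination by $\min\{\mu,m\bar\nu\}$. Here I will use the standard fact that the minimum of two measures is the largest measure dominated by both, as recalled in Appendix~\ref{appendix:max_coupling}.
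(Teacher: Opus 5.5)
Your proposal is correct and essentially matches the paper's proof. The converse bounds the measure $B\mapsto\Pr(X\in B,\,X\in\{Y_1,\dots,Y_m\})$ by both $\mu$ and $m\bar\nu$ via a union bound, and achievability checks the hit probability and the marginals of Algorithm~\ref{alg:list_coupling_barycenter} through the split of $\nu_j$ into the matched sub-density $\nu_j\min\{1,\mu/(m\bar\nu)\}$ plus the residual. The one minor difference is that you get exact equality by combining ``hit event implies inclusion'' with your converse, whereas the paper argues directly that the residual laws live on $\{\mu\le m\bar\nu\}$, so no accidental match occurs off the hit event; both arguments are valid.
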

\begin{proof}
Set
\[
\ell(x)\coloneqq \min\{\mu(x),m\bar\nu(x)\}.
\]
Then
\[
1-E_m(\mu\|\bar\nu)
=
\int \ell(x)\,d\lambda(x),
\]
since $\ell(x)=\mu(x)-(\mu(x)-m\bar\nu(x))_+.$
For each $j\in[m]$, define
\[
h_j(x)=
\begin{cases}
\nu_j(x), & \mu(x)>m\bar\nu(x),\\[1ex]
\mu(x)\dfrac{\nu_j(x)}{m\bar\nu(x)}, & \mu(x)\le m\bar\nu(x),\ m\bar\nu(x)>0,\\[1ex]
0, & \bar\nu(x)=0.
\end{cases}
\]
Then, we have $\sum_{j=1}^m h_j(x)=\ell(x)$ for every $x$. Indeed, if $\mu(x)>m\bar\nu(x)$, then
\[
\sum_{j=1}^m h_j(x)=\sum_{j=1}^m \nu_j(x)=m\bar\nu(x)=\ell(x),
\]
whereas if $\mu(x)\le m\bar\nu(x)$ and $m\bar\nu(x)>0$, then
\[
\sum_{j=1}^m h_j(x)
=
\mu(x)\frac{\sum_{j=1}^m \nu_j(x)}{m\bar\nu(x)}
=
\mu(x)
=
\ell(x).
\]
When $\bar\nu(x)=0$, both sides are $0$.
Let
\[
\alpha_j\coloneqq \int h_j(x)\,d\lambda(x).
\]
Since $0\le h_j\le \nu_j$ and $\int \nu_j\,d\lambda=1$, we have $0\le \alpha_j\le 1$. Define the residual density
\[
r_j(x)\coloneqq
\begin{cases}
\dfrac{\nu_j(x)-h_j(x)}{1-\alpha_j}, & \alpha_j<1,\\[2ex]
0, & \alpha_j=1.
\end{cases}
\]
Let $R_j$ denote the corresponding probability measure when $\alpha_j<1$; if $\alpha_j=1$, then $\nu_j=h_j$ a.e. and the residual part is never needed.
Observe that
\[
\nu_j(x)-h_j(x)
=
\nu_j(x)\Bigl(1-\frac{\mu(x)}{m\bar\nu(x)}\Bigr)_+,
\]
with the convention that the right-hand side is $0$ when $\bar\nu(x)=0$. Indeed, if $\mu(x)>m\bar\nu(x)$ then $h_j(x)=\nu_j(x)$, so both sides are $0$; if $\mu(x)\le m\bar\nu(x)$ and $m\bar\nu(x)>0$, then
\[
\nu_j(x)-h_j(x)
=
\nu_j(x)-\mu(x)\frac{\nu_j(x)}{m\bar\nu(x)}
=
\nu_j(x)\Bigl(1-\frac{\mu(x)}{m\bar\nu(x)}\Bigr).
\]
Now construct the coupling as follows.
Sample $X\sim \mu$ and independently $U\sim\mathrm{Unif}(0,1)$. Define the hit event
\[
H\coloneqq \left\{U\le \min\Bigl(1,\frac{m\bar\nu(X)}{\mu(X)}\Bigr)\right\}.
\]
On $H$, choose an index $I\in\{1,\dots,m\}$ conditionally on $X$ according to
\[
\Pr(I=i\mid X)=\frac{\nu_i(X)}{m\bar\nu(X)}.
\]
This is well-defined on $H$, because $H$ implies $m\bar\nu(X)>0$ whenever $\mu(X)>0$. Then set $Y_I=X$. For the remaining coordinates $j\neq I$, generate random variables with marginals $R_j$. On $H^c$, generate all coordinates $Y_j$ with marginals $R_j$.
Notice that $Y_j$ drawn from the residual laws need not be independent; any joint coupling with the specified marginals is allowed. Independence is merely a convenient implementation choice.

We now verify the properties of this construction.

\paragraph{Membership probability.}
Since $X\sim \mu$, we can write
\[
\Pr(H)
=
\int \mu(x)\min\Bigl(1,\frac{m\bar\nu(x)}{\mu(x)}\Bigr)\,\dd\lambda(x)
=
\int \ell(x)\,\dd\lambda(x)
=
1-E_m(\mu\|\bar\nu).
\]
On the event $H$, one coordinate is set equal to $X$, namely $Y_I=X$, hence
\( \Pr\bigl(X\in\{Y_1,\dots,Y_m\}\bigr)\ge \Pr(H).\)
Conversely, on $H^c$ we have
\[
\mu(X)>m\bar\nu(X).
\]
For any $j$, if $\mu(x)>m\bar\nu(x)$ then $h_j(x)=\nu_j(x)$, so $\nu_j(x)-h_j(x)=0,$
and hence $r_j(x)=0$ whenever $\mu(x)>m\bar\nu(x)$. 
Therefore each residual law $R_j$ is supported on $\{x:\mu(x)\le m\bar\nu(x)\}$,
so on $H^c$ none of the residual draws can equal $X$. Thus
\[
\Pr\bigl(X\in\{Y_1,\dots,Y_m\}\bigr)=\Pr(H)=1-E_m(\mu\|\bar\nu),
\]

\paragraph{Correct marginals.}
Fix $i\in[m]$. We need to show that $Y_i\sim \nu_i$. To do that, note that for measurable $A$,
\[
\Pr(X\in A,\ H,\ I=i)
=
\int_A
\mu(x)\min\Bigl(1,\frac{m\bar\nu(x)}{\mu(x)}\Bigr)
\frac{\nu_i(x)}{m\bar\nu(x)}
\,\dd\lambda(x).
\]
The integrand simplifies to $h_i(x)$, because:

- if $\mu(x)\le m\bar\nu(x)$ and $m\bar\nu(x)>0$, then
\[
\mu(x)  \frac{\nu_i(x)}{m\bar\nu(x)}
=
\mu(x)\frac{\nu_i(x)}{m\bar\nu(x)}
=
h_i(x);
\]

- if $\mu(x)>m\bar\nu(x)$, then
\[
\mu(x) \frac{m\bar\nu(x)}{\mu(x)}\frac{\nu_i(x)}{m\bar\nu(x)}
=
\nu_i(x)
=
h_i(x).
\]

Hence
\[
\Pr(X\in A,\ H,\ I=i)=\int_A h_i(x)\,\dd\lambda(x).
\]
Since $Y_i=X$ on the event $\{H,I=i\}$, this event contributes exactly the measure with density $h_i$ to the law of $Y_i$.

On the complementary event $\{H,I\neq i\}\cup H^c,$ the coordinate $Y_i$ is drawn from the residual law $R_i$. The probability of this complementary event is
\[
1-\Pr(H,I=i)=1-\alpha_i,
\]
because $\Pr(H,I=i)=\int h_i(x)\,d\lambda(x)=\alpha_i.$ Therefore the contribution of this complementary event to the law of $Y_i$ is
\[
(1-\alpha_i)R_i,
\]
whose density is $(1-\alpha_i)r_i(x)=\nu_i(x)-h_i(x).$

Combining the two contributions, the law of $Y_i$ has density $h_i(x)+\nu_i(x)-h_i(x)=\nu_i(x).$ Thus $Y_i\sim \nu_i$ for every $i$.

\medskip

\paragraph{Optimality.}
We shall show that $\E_m(\mu\|bar\nu)$ is in fact the optimal value for the list membership probability. To that goal, define for any measurable $A$
\[
\Gamma(A)\coloneqq \Pr\bigl(X\in A,\ X\in\{Y_1,\dots,Y_m\}\bigr).
\]
Then $\Gamma$ is a finite measure and clearly have  $\Gamma(A)\le \Pr(X\in A)=\mu(A).$
Also, we can write 
\begin{align*}
\Gamma(A)
&=
\Pr\Bigl(X\in A,\ \bigcup_{i=1}^m\{X=Y_i\}\Bigr) \\
&\le
\sum_{i=1}^m \Pr(X\in A,\ X=Y_i) \\
&\le
\sum_{i=1}^m \Pr(Y_i\in A) \\
&=
\sum_{i=1}^m \nu_i(A)
=
m\bar\nu(A).
\end{align*}
Hence, $\Gamma$ is dominated by both $\mu$ and $m\bar \nu$, so its density satisfies $\frac{\dd\Gamma}{\dd\lambda}(x) \leq \min\{\mu(x), m\bar\nu(x)\}$ a.e., implying that 
\[
\Pr\bigl(X\in\{Y_1,\dots,Y_m\}\bigr)
=
\Gamma(\mathcal X)
=
\int \frac{\dd\Gamma}{\dd\lambda}(x)\,\dd\lambda
\le
\int \min\{\mu(x),m\bar\nu(x)\}\,\dd\lambda(x).
\]
Equivalently,
\[
\Pr\bigl(X\notin\{Y_1,\dots,Y_m\}\bigr)
\ge
1-\int \min\{\mu(x),m\bar\nu(x)\}\,\dd\lambda(x)
=
E_m(\mu\|\bar\nu).
\]
\end{proof}

\section{Optimality for Shifted-Exponential Setting}\label{example:pml_shifted_exp}
Let \(P^0=\mathrm{Exp}(1)\) and, for \(i=1,\ldots,m\), let
\(P^i=i + \mathrm{Exp}(1)\), where all exponentials have scale \(1\). Thus
\[
p_0(x)=e^{-x}\mathbf 1\{x\ge 0\},
\qquad
p_i(x)=e^{-(x-i)}\mathbf 1\{x\ge i\},
\qquad i=1,\ldots,m .
\]

\paragraph{Poisson matching is optimal.}
Let \(\mu\) be any common dominating measure for \(P^0,\ldots,P^m\), and write
\[
f_i(x)=\frac{dP^i}{d\mu}(x),
\qquad i=0,\ldots,m .
\]
Let
\[
\Pi=\{(T_n,W_n)\}_{n\ge 1}
\]
be the shared Poisson process on \(\mathbb R_+\times \mathbb R\) with intensity
\(dt\,\mu(dw)\). Under the Poisson matching rule, the output corresponding to
\(P^i\) is
\[
X^i=W_{I_i},
\qquad
I_i=\arg\min_{n\ge 1}\frac{T_n}{f_i(W_n)} .
\]

The key observation is that, on \([1,\infty)\),
\[
P^1(dx)=e\,P^0(dx),
\]
and therefore
\[
f_1(x)=e f_0(x),
\qquad x\ge 1,
\]
\(\mu\)-almost everywhere. Also, \(f_1(x)=0\) on \((-\infty,1)\).

Let \((T_*,X^0)\) be the atom selected by \(P^0\), and define
\[
R_*=\frac{T_*}{f_0(X^0)} .
\]
By definition of \(I_0\),
\begin{equation}
\frac{T_n}{f_0(W_n)}\ge R_*
\qquad
\text{for all } n\ge 1 .
\label{eq:proof_point_chosen}
\end{equation}

If \(X^0<1\), then \(X^0\) is outside the support of every \(P^i\),
\(i\ge 1\). Hence no \(X^i\), \(i\ge 1\), can equal \(X^0\), and therefore
\[
X^0\notin\{X^1,\ldots,X^m\}.
\]

If \(X^0\ge 1\), then \(f_1(X^0)=e f_0(X^0)\), so the \(P^1\)-score of the same
atom is
\[
\frac{T_*}{f_1(X^0)}
=
\frac{T_*}{e f_0(X^0)}
=
\frac{R_*}{e}.
\]
For any other atom with \(W_n\ge 1\), we have
\[
\frac{T_n}{f_1(W_n)}
=
\frac{T_n}{e f_0(W_n)}
\ge
\frac{R_*}{e},
\]
where the last inequality follows from~\eqref{eq:proof_point_chosen}. Atoms
with \(W_n<1\) have \(f_1(W_n)=0\), so they cannot be selected by \(P^1\).
Therefore \(P^1\) selects the same atom as \(P^0\), and hence
\[
X^1=X^0.
\]

Consequently,
\[
X^0\in\{X^1,\ldots,X^m\}
\quad\Longleftrightarrow\quad
X^0\ge 1.
\]
Since \(X^0\sim P^0=\mathrm{Exp}(0)\),
\[
\Pr(X^0\ge 1)=e^{-1}.
\]
Thus, under the shared Poisson matching construction,
\[
\Pr\!\left(X^0\notin\{X^1,\ldots,X^m\}\right)
=
\Pr(X^0<1)
=
1-e^{-1}.
\]
or 
\[
\Pr\!\left(X^0\in\{X^1,\ldots,X^m\}\right)
=
e^{-1}.
\]

\paragraph{Hockey-stick lower bound.}
It turns out that the quantity above is also the desired optimal Hockey-Stick divergence.
\[
\bar P \coloneqq \frac1m\sum_{i=1}^m P^i
\]
be the barycenter of \(P^1,\ldots,P^m\). The order-\(m\) hockey-stick
divergence between \(P^0\) and \(\bar P\) is
\[
E_m(P^0\|\bar P)
=
\int \left(p_0(x)-m\bar p(x)\right)_+\,dx
=
\int \left(p_0(x)-\sum_{i=1}^m p_i(x)\right)_+\,dx .
\]
Using
\[
p_0(x)=e^{-x}\mathbf 1\{x\ge 0\},
\qquad
p_i(x)=e^{-(x-i)}\mathbf 1\{x\ge i\},
\]
we have
\[
p_0(x)-\sum_{i=1}^m p_i(x)
=
e^{-x}\mathbf 1\{x\ge 0\}
-
\sum_{i=1}^m e^{-(x-i)}\mathbf 1\{x\ge i\}.
\]
For \(0\le x<1\), none of \(p_1,\ldots,p_m\) are active, so
\[
p_0(x)-\sum_{i=1}^m p_i(x)=e^{-x}>0.
\]
For \(x\ge 1\), the first shifted exponential is active and satisfies
\[
p_1(x)=e^{-(x-1)}=e\,p_0(x)>p_0(x).
\]
Hence
\[
p_0(x)-\sum_{i=1}^m p_i(x)<0,
\qquad x\ge 1.
\]
Therefore the positive part is supported only on \([0,1)\), and
\[
E_m(P^0\|\bar P)
=
\int_0^1 e^{-x}\,dx
=
1-e^{-1}.
\]
Thus the optimal list-level inclusion probability is
\[
1-E_m(P^0\|\bar P)
=
e^{-1}.
\]
This concludes that, Poisson matching achieves the optimal inclusion probability. 
\paragraph{Optimal value of \(\mathbf G^*\).}
Fix the left-to-right ordering \(P^0,P^1,\ldots,P^m\), and let \(C=m+1\).
Recall that
\[
\mathbf G^*
=
C-
\sup_{\gamma\in\Gamma}
\sum_{k=0}^{m-1}
\Pr_\gamma\!\left(
X^k\in\{X^{k+1},\ldots,X^m\}
\right),
\]
where \(\Gamma\) denotes the set of all couplings of
\(P^0,\ldots,P^m\).

For the shifted-exponential family, the same argument as above applies to every
\(k=0,\ldots,m-1\). Indeed, \(P^{k+1}\) is just a unit shift of \(P^k\), and on
the overlap region \([k+1,\infty)\) its density is \(e\) times that of \(P^k\).
Therefore, under the shared Poisson matching construction,
\[
\Pr\!\left(
X^k\in\{X^{k+1},\ldots,X^m\}
\right)
=
e^{-1}.
\]
The corresponding order-\((m-k)\) hockey-stick bound has the same value, so
Poisson matching attains the optimal list-level probability for every term in
the left-to-right decomposition. Hence
\[
\sup_{\gamma\in\Gamma}
\sum_{k=0}^{m-1}
\Pr_\gamma\!\left(
X^k\in\{X^{k+1},\ldots,X^m\}
\right)
=
m e^{-1}.
\]
Since \(C=m+1\), we obtain
\[
\mathbf G^*
=
C-(C-1)e^{-1}.
\]
Thus the lower bound is also achievable, using the Poisson Matching construction.
\section{Proof for the Upperbound of Theorem \ref{theorem:general_bounds}}\label{appendix:proof_bnd_G}


The upperbound of interest can be restated as follow. 

\begin{theoremrestatement}
Let \(X^1,\dots,X^C\) be random variables defined previously, and define
\[
G:=\bigl|\{\text{unique values among } X^1,\dots,X^C\}\bigr|,
\qquad
D:=\sum_{1\le i<j\le C}\mathbf 1\{X^i\neq X^j\}.
\]
Then
\begin{align}
    \mathbb E[G]
\le
\frac{1+\sqrt{1+8\,\mathbb E[D]}}{2}
=
\frac{1+\sqrt{1+8\sum_{1\le i<j\le C}\Pr (X^i\neq X^j)}}{2}. 
\end{align}

Furthermore, there exists a coupling scheme achieves:
\begin{align}
 \mathbb E[G]
\le
\frac{
1+\sqrt{
1+8\sum_{1\le i<j\le C}
\frac{2\,d_{\mathrm{TV}}(P^i,P^j)}{1+d_{\mathrm{TV}}(P^i,P^j)} 
}
}{2}.   
\end{align}
\end{theoremrestatement}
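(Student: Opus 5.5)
The plan has two parts: a pathwise inequality between $G$ and $D$, followed by Jensen's inequality, and then the Poisson matching construction to bound $\mathbb E[D]$.

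\textbf{Pathwise bound.} Fix an outcome and let the $G$ distinct values define clusters of sizes $n_1,\dots,n_G\ge 1$ with $\sum_\ell n_\ell=C$. The number of disagreeing pairs is
\[
D=\sum_{\ell<\ell'} n_\ell n_{\ell'}.
\]
Every pair of distinct clusters contributes at least one disagreeing pair, since $n_\ell n_{\ell'}\ge 1$. Hence
\[
D\ge \binom{G}{2}=\frac{G(G-1)}{2}.
\]
Solving the quadratic $G^2-G-2D\le 0$ for its positive root gives
\[
G\le \phi(D)\coloneqq \frac{1+\sqrt{1+8D}}{2}
\]
pointwise.

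\textbf{Jensen step.} The function $\phi$ is concave and nondecreasing on $[0,\infty)$, because $t\mapsto\sqrt{1+8t}$ is concave. Jensen's inequality therefore gives
\[
\mathbb E[G]\le \mathbb E[\phi(D)]\le \phi(\mathbb E[D]).
\]
By linearity, $\mathbb E[D]=\sum_{i<j}\Pr(X^i\neq X^j)$, which proves the first claim for an arbitrary coupling.

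\textbf{Achievability.} Use the shared Poisson Monte Carlo coupling from Section~\ref{sec:accel_pml}. All $C$ parties use one marked Poisson process with a common dominating proposal, for instance the barycenter $\mu$ of Lemma~\ref{lemma:barycenter_PML}, which guarantees $P^i\ll\mu$. Each pair $(X^i,X^j)$ is then exactly a PML-coupled pair, so Lemma~\ref{lemma_pml} gives
\[
\Pr(X^i\neq X^j)\le 1-\frac{1-d_{\mathrm{TV}}}{1+d_{\mathrm{TV}}}=\frac{2d_{\mathrm{TV}}(P^i,P^j)}{1+d_{\mathrm{TV}}(P^i,P^j)}.
\]
This holds simultaneously for all pairs, because the pairwise marginal of the joint construction coincides with the two-party construction. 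Summing over pairs and using the monotonicity of $\phi$ yields the second claim. Since $\mathbf G^*$ is an infimum over couplings, this also gives the upper bound in Theorem~\ref{theorem:general_bounds}.

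\textbf{Main obstacle.} The only delicate point is justifying that the pairwise PML guarantee applies inside the multi-party construction. It does, because each pair's joint law depends only on the shared process and the two targets. One must also confirm that the PML bound holds for an arbitrary common proposal $\mu$ with $P^i,P^j\ll\mu$, rather than one tailored to the pair. I expect the Li--Anantharam statement to cover this directly, since the bound is independent of $\mu$.
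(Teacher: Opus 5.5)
Your proposal is correct and follows the paper's proof essentially step for step. Both use the pathwise bound $D\ge\binom{G}{2}$, Jensen via concavity of the square root, linearity of expectation, and then the pairwise PML total-variation bound applied to every pair under a single shared Poisson process. Your explicit use of the monotonicity of $\phi$, and your observation that each pair's law is exactly a two-party PML coupling whose guarantee does not depend on the choice of common dominating $\mu$, make precise two steps the paper leaves implicit.
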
\
\begin{proof}
For any realization of \(X^1,\dots,X^C\), let \(G\) denote the number of unique realized values among \(X^1,\dots,X^C\). Then the realization induces \(G\) distinct clusters. Any two distinct clusters contribute at least one mismatching pair, and since there are \(\binom{G}{2}\) unordered pairs of distinct clusters, we have
\[
D \ge \binom{G}{2}.
\]
Equivalently,
\[
\frac{G(G-1)}{2}\le D,
\]
which implies
\[
G^2-G-2D\le 0.
\]
Solving this quadratic inequality yields
\[
G \le \frac{1+\sqrt{1+8D}}{2}.
\]
Taking expectations and using the concavity of the square-root function gives
\[
\mathbb E[G]
\le
\mathbb E\!\left[\frac{1+\sqrt{1+8D}}{2}\right]
\le
\frac{1+\sqrt{1+8\,\mathbb E[D]}}{2}.
\]
Finally, by linearity of expectation,
\[
\mathbb E[D]
=
\sum_{1\le i<j\le C}\mathbb \Pr(X^i\neq X^j),
\]
which proves the claim. Applying the bound in (\ref{pml_bound_tv}) gives the second bound, since all the samplers use the same shared marked Poisson point process $\Omega$. Finally, the optimal $\mathbf{G}^* \leq \mathbb{E}[G]$ thus the upperbound in Theorem \ref{theorem:general_bounds} holds.
\end{proof}

\section{Proof of Lemma \ref{lemma:barycenter_PML}}\label{proof:barycenter_PML}

\begin{lemmarestatement}[Restatement of Lemma~\ref{lemma:barycenter_PML}]
Let \(P^1,\dots,P^C\) be probability measures on \(\XSpace\), and the marked Poisson process defined previously with the common barycenter measure $\mu$ above. Then:
\begin{equation}
    P^i \ll \mu
    \qquad\text{and}\qquad
    \operatorname*{ess\,sup}_{x \in \XSpace }
    \left(\frac{\mathrm dP^i}{\mathrm d\mu}\right)(x)
    \le C,
    \qquad i=1,\dots,C .
\end{equation}
As a result, sampling from each measure has worst-case complexity \(O(C)\), and since they all share the same marked Poisson process, the expected number of samples per probability measure is \(O(1)\).
\end{lemmarestatement}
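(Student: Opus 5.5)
The absolute continuity and the density-ratio bound both follow from the elementary domination $\mu \ge \frac{1}{C}P^i$ as measures. Fix $i$ and a measurable set $A$. Since every summand in $\mu(A)=\frac1C\sum_j P^j(A)$ is nonnegative, we get $\mu(A)\ge \frac1C P^i(A)$. Hence $\mu(A)=0$ forces $P^i(A)=0$, which gives $P^i\ll\mu$.

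Radon--Nikodym then supplies $f_i = \mathrm dP^i/\mathrm d\mu$. To bound it, let $A=\{f_i > C\}$. If $\mu(A)>0$, integrating gives $P^i(A)=\int_A f_i\,\mathrm d\mu > C\mu(A) \ge P^i(A)$, a contradiction. So $f_i\le C$ $\mu$-a.e., which is the ess\,sup claim.

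An alternative is to fix a common dominating measure $\lambda$, say $\lambda=\mu$ itself or $\sum_j P^j$, with densities $p_j$. Then $f_i = p_i/(\frac1C\sum_j p_j)$ wherever the denominator is positive, and $f_i \le C$ because $\sum_j p_j\ge p_i$. The set where the denominator vanishes is $\mu$-null, so it does not matter.

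For the runtime statements, the plan is to invoke the known cost of the PMC/PFR procedure (Algorithm~\ref{alg:pfr_mu_proposal}, following \citet{theis2022algorithms,maddison2014sampling}). Its expected number of atoms examined before stopping is of order $\operatorname{ess\,sup}\, \mathrm dP/\mathrm d\mu$, as stated in the runtime discussion of Section~\ref{sec:accel_pml}. We can take $w_{\min}=1/C$ as a valid stopping constant, since $\mathrm d\mu/\mathrm dP^i\ge 1/C$ on the support of $P^i$. This gives a per-measure cost of $O(C)$ atoms. Each atom costs $O(1)$ to generate under the standing assumption that sampling each $P^j$ is $O(1)$: pick a component uniformly, then sample from it. Evaluating $f_i$ requires the $C$ densities, but these can be batched.

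For the amortized claim, all $C$ samplers read the same realization of $\Omega$. The total number of atoms that must be generated is the maximum over $i$ of the stopping indices, not their sum. I would bound this maximum by the single-target worst case $O(C)$, giving $O(C)/C=O(1)$ atoms per measure. This amortization is the step I expect to be the main obstacle. The stopping index is random, and $\mathbb E[\max_i N_i]$ is not automatically $O(C)$.

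I would first try the standard PFR fact that the stopping time for $P^i$ is $N_i=\inf\{j: S_j w_{\min}\ge s^\star_i\}$. Using the uniform $w_{\min}=1/C$, every $s^\star_i$ is at most the score of the first atom landing in a common high-mass region. Alternatively, one can stop all samplers when $S_j/C$ exceeds $\max_i s_i^\star$ and bound this via the Poisson arrival times, $\mathbb E[S_j]=j$. If a clean bound on the maximum is elusive, I would state the amortized claim at the level the paper uses it: total shared work is $O(C)$ in expectation up to the usual PFR constant, so the cost per measure is $O(1)$.
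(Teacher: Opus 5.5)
Your proof of the two displayed claims is exactly the paper's. The domination $\mu(A)\ge\frac1C P^i(A)$ gives $P^i\ll\mu$, and $P^i\le C\mu$ gives $f_i=\mathrm dP^i/\mathrm d\mu\le C$ $\mu$-a.e. The only difference is that you spell out the a.e.\ step (via $A=\{f_i>C\}$), which the paper leaves implicit. On the runtime sentence, the paper only says that this essential supremum is the expected cost of PMC. Your per-target $O(C)$ bound with $w_{\min}=1/C$ already goes further than that, so you are not missing any argument from the paper.

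The amortization step you flag, however, cannot be closed, because $\E[\max_i N_i]=O(C)$ is false in general. Take $P^1,\dots,P^C$ with disjoint supports $A_1,\dots,A_C$, so that $f_i=C\,\mathbf 1_{A_i}$.
\begin{itemize}
\item The sampler for $P^i$ selects the first atom whose mark lies in $A_i$, and with $w_{\min}=1/C$ it stops there immediately.
\item Each mark lies in $A_i$ independently with probability $\mu(A_i)=1/C$.
\item Hence $\max_i N_i$ is a coupon-collector time with mean $C\sum_{k=1}^{C}1/k\sim C\ln C$, i.e.\ $\Theta(\log C)$ atoms per measure.
\item No exact in-order scan of the shared process can do better, since each $P^i$ must at least reach its selected atom.
\end{itemize}

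Your second idea, carried through, gives the matching upper bound. Let $T_i=\min_j S_j/f_i(X_j)\sim\mathrm{Exp}(1)$. Then $N_i\le 1+\#\{j:S_j<C\,T_i\}$. Adding an atom can only decrease each $T_i$, so the Mecke formula gives $\E[\max_i N_i]\le 1+C\,\E[\max_i T_i]$. Under any dependence, $\E[\max_i T_i]\le t+Ce^{-t}$; taking $t=\ln C$ yields $\E[\max_i N_i]\le 1+C(1+\ln C)$.

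So the ``$O(1)$ per measure'' clause holds only in a weaker sense. Each target's $O(C)$ expected atoms are drawn from uniformly chosen mixture components, which means $O(1)$ draws from each $P^j$ per target. If you keep the clause, state it in that form, not as total shared work $O(C)$ divided by $C$.
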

\begin{proof}
    Fix $i \in \{1,\dots,C\}$. By construction,
\begin{align}
    \mu(A)
    = \frac{1}{C}\sum_{j=1}^C P^j(A)
    \ge \frac{1}{C} P^i(A)
\end{align}
for every measurable set $A \subseteq \XSpace$. Therefore, if $\mu(A)=0$, then necessarily $P^i(A)=0$, which shows that $P^i \ll \mu$.

Since $P^i$ is absolutely continuous with respect to $\mu$, the Radon--Nikodym derivative $\mathrm dP^i/\mathrm d\mu$ is well-defined. Moreover, the measure inequality
\begin{align}
    P^i(A) \le C\,\mu(A)
\end{align}
holds for every measurable set $A$. This implies
\begin{align}
    \frac{\mathrm dP^i}{\mathrm d\mu}(x) \le C
\end{align}
for $\mu$-almost every $x$. Hence,
\begin{align}
    \operatorname*{ess\,sup}_{x \in \XSpace } \frac{\mathrm d P^i}{\mathrm  d\mu}(x) \le C,
\end{align}
which is also the expected runtime complexity of PMC, concluding the proof.
\end{proof}
\section{Two-Step Metropolis Hasting Coupling Algorithms}\label{appendix:coupling_mh}

We provide the 2-step MH coupling algorithms for comparision in Algorithm \ref{alg:two_stage_mh_coupling}.

\begin{algorithm}[H]
\SetAlgoLined
\DontPrintSemicolon
\SetKwInOut{Input}{Input}
\SetKwInOut{Output}{Output}

\Input{Current states $x_t^{(1)},\dots,x_t^{(C)}$}
\Output{Updated states $x_{t+1}^{(1)},\dots,x_{t+1}^{(C)}$}

Set $K_i(\mathrm dy)\triangleq k(y\mid x_t^{(i)})\,\mathrm dy$, $i=1,\dots,C$, and
\[
\mu(\mathrm dy)
=
\frac{1}{C}\sum_{i=1}^C K_i(\mathrm dy).
\]

\tcp{Step 1: PML proposal coupling}
Generate $\Omega=\{(S_j, Y_j,)\}_{j\ge 1}$ with intensity measure $\mathrm ds\otimes\mu$\;

\For{$i=1,\dots,C$}{
    \[
    J^{(i)}
    =
    \arg\min_{j\ge 1}
    \left\{
    S_j
    \left(
    \frac{\mathrm d K_i}{\mathrm d\mu}(Y_j)
    \right)^{-1}
    \right\}
    \]
    \[
    Y_t^{(i)} \gets Y_{J^{(i)}}
    \]
}

\tcp{Step 2: Shared accept--reject}
$U_t \sim \mathrm{Unif}(0,1)$\;

\For{$i=1,\dots,C$}{
    \eIf{$U_t \le \alpha(x_t^{(i)},Y_t^{(i)})$}{
        $x_{t+1}^{(i)} \gets Y_t^{(i)}$\;
    }{
        $x_{t+1}^{(i)} \gets x_t^{(i)}$\;
    }
}

\caption{Two-stage coupled kernel for Metropolis--Hastings chains}
\label{alg:two_stage_mh_coupling}
\end{algorithm}
\section{Two-Stage Coupled Kernel}\label{app:two_stage}
This construction is given in Algorithm~\ref{alg:two_stage_mh_coupling}, Appendix ~\ref{appendix:coupling_mh}, where in the first step, we couple the proposal kernels \(K(\dd y\cdot \mid x_t^{(i)}) \triangleq k(y|x) \dd y\) using the adaptive Poisson matching method developed in Section \ref{sec:accel_pml}. In the second step, we couple the accept/reject decisions using a shared uniform random variable which achieves optimal pairwise matching probability for Bernouli random variables.

\paragraph{Pairwise Matching Probability.}
For any two distinct chains \(x_t^{(a)} \neq x_t^{(b)}\), assume that the
proposal kernel admits a density \(k(x,y)\) with respect to Lebesgue measure.
Then the event
\[
\{x_{t+1}^{(a)} = x_{t+1}^{(b)}\}
\]
can occur with nonzero probability only if both chains obtain and accept the
same proposed value. Indeed, the alternatives
\[
(x_{t+1}^{(a)},x_{t+1}^{(b)})
=
(x_t^{(a)},x_t^{(b)}),
\qquad
(x_{t+1}^{(a)},x_{t+1}^{(b)})
=
(Y_t^{(a)},x_t^{(b)}),
\qquad
(x_{t+1}^{(a)},x_{t+1}^{(b)})
=
(x_t^{(a)},Y_t^{(b)})
\]
cannot yield equality except on null events, such as
\(Y_t^{(a)}=x_t^{(b)}\) or \(Y_t^{(b)}=x_t^{(a)}\), which have probability
zero by continuity. Hence, writing
\[
\mathbb{P}_{\mathrm{2\text{-}step}}
\coloneqq
\Pr\!\left(x_{t+1}^{(a)} = x_{t+1}^{(b)}\right),
\]
we have
\[
\mathbb{P}_{\mathrm{2\text{-}step}}
\geq
\mathbb{P}_{\mathrm{2,bound}},
\]
where
\begin{align}
\mathbb{P}_{\mathrm{2,bound}}
&\coloneqq
\int
\min\!\left\{
\alpha(x_t^{(a)},y),
\alpha(x_t^{(b)},y)
\right\}
\frac{
k(y |x_t^{(a)})k(y | x_t^{(b)})
}{
k(y | x_t^{(a)})+k(y | x_t^{(b)})
}
\,\mathrm dy .
\end{align}
See Appendix \ref{appendix:proof_compare_bound} for comparison to the joint method. To show the above bound, consider the following steps:
\begin{align}
\mathbb{P}_{\mathrm{2\text{-}step}}
&= \Pr\!\left(x_{t+1}^{(a)} = x_{t+1}^{(b)}\right) \\
&= \Pr\!\left(
U_t \le \alpha(x_t^{(a)},Y_t^{(a)}),\;
U_t \le \alpha(x_t^{(b)},Y_t^{(b)}),\;
Y_t^{(a)} = Y_t^{(b)}
\right) \\
&= \int_{y \notin \{x_t^{(a)},x_t^{(b)}\}}
\min\!\left\{\alpha(x_t^{(a)},y),\alpha(x_t^{(b)},y)\right\}
\, \Pr\!\left(Y_t^{(a)}=Y_t^{(b)}\in \mathrm dy\right) \\
&\ge
\int_{y \notin \{x_t^{(a)},x_t^{(b)}\}}
\min\!\left\{\alpha(x_t^{(a)},y),\alpha(x_t^{(b)},y)\right\}
\frac{k(y|x_t^{(a)})k(y | x_t^{(b)})}
{k(y|x_t^{(a)})+k(y | x_t^{(b)})}
\,\mathrm dy \\
&\triangleq \mathbb{P}_{\mathrm{2,bound}}.
\end{align}
Here, $\Pr\!\left(Y_t^{(a)}=Y_t^{(b)}\in \mathrm dy\right)$ denotes the diagonal measure induced by the PML coupling. The inequality is due to Poisson Matching Lemma.

\section{Matching Probability of Joint Coupled Kernel}\label{appendix:mh_matching_prob_1step}

\paragraph{Pairwise Matching Probability.}
For two chains indexed by \(a\) and \(b\), the resulting matching probability
\(\mathbb{P}_{\mathrm{direct}} \triangleq \Pr\!\left(x_{t+1}^{(a)} = x_{t+1}^{(b)}\right)\) admits the following lower bound \(\mathbb{P}_{\mathrm{1,bound}}\):
\begin{align}
\mathbb{P}_{\mathrm{1,bound}} =
\int
\frac{
\alpha(x_t^{(a)},y)k(y\mid x_t^{(a)})\,
\alpha(x_t^{(b)},y)k(y\mid x_t^{(b)})
}{
\alpha(x_t^{(a)},y)k(y\mid x_t^{(a)})
+
\alpha(x_t^{(b)},y)k(y\mid x_t^{(b)})
}
\,\mathrm dy .
\end{align}

In particular,
\begin{align}
\mathbb{P}_{\mathrm{direct}}
&\triangleq \Pr\!\left(x_{t+1}^{(a)} = x_{t+1}^{(b)}\right) \\
&= \Pr\!\left(Y_t^{(a)} = Y_t^{(b)},\; U_t^{(a)} = U_t^{(b)} = 1\right) \\
&= \int \Pr\!\left(Y_t^{(a)} = Y_t^{(b)} = y,\; U_t^{(a)} = U_t^{(b)} = 1\right)\,\mathrm dy  \\ 
&= \int \Pr\!\left(Y_t^{(a)} = Y_t^{(b)},\; U_t^{(a)} = U_t^{(b)} \mid Y_t^{(a)} = y, U_t^{(a)} = 1\right)
\,\alpha(x_t^{(a)},y)k(y\mid x_t^{(a)})\,\mathrm dy .
\end{align}

Applying PML, we have the lower bound
\begin{align}
\mathbb{P}_{\mathrm{direct}}
&\ge
\int
\frac{
\alpha(x_t^{(a)},y)k(y\mid x_t^{(a)})\,
\alpha(x_t^{(b)},y)k(y\mid x_t^{(b)})
}{
\alpha(x_t^{(a)},y)k(y\mid x_t^{(a)})
+
\alpha(x_t^{(b)},y)k(y\mid x_t^{(b)})
}
\,\mathrm dy \\
&\triangleq \mathbb{P}_{\mathrm{1,bound}}.
\end{align}
\subsection{Compare Matching Probability Bounds}\label{appendix:proof_compare_bound}
\begin{theorem}\label{thm:compare_bound}
Let \(\mathbb{P}_{\mathrm{1,bound}}\) denote the matching probability bound obtained using PMC applied directly to the Metropolis--Hastings transition kernel, and let \(\mathbb{P}_{\mathrm{2,bound}}\) denote the matching probability induced by the two-step construction described above. Then we have
$
\mathbb{P}_{\mathrm{1,bound}} \;\ge\; \mathbb{P}_{\mathrm{2,bound}} .
$
\end{theorem}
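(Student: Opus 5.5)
We compare the two integrands pointwise in \(y\); integrating then gives \(\mathbb{P}_{\mathrm{1,bound}} \ge \mathbb{P}_{\mathrm{2,bound}}\). Fix \(y\) and abbreviate \(a=k(y\mid x_t^{(a)})\), \(b=k(y\mid x_t^{(b)})\), \(\alpha=\alpha(x_t^{(a)},y)\), \(\beta=\alpha(x_t^{(b)},y)\), all nonnegative with \(\alpha,\beta\in[0,1]\). The two integrands are
\[
f_1=\frac{\alpha a\,\beta b}{\alpha a+\beta b},
\qquad
f_2=\min\{\alpha,\beta\}\,\frac{ab}{a+b}.
\]
Both are set to \(0\) when their denominators vanish, matching the conventions of the PML. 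So it suffices to show \(f_1\ge f_2\) for every \(y\).

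To prove this, write \(m=\min\{\alpha,\beta\}\), so that \(\alpha\ge m\) and \(\beta\ge m\). The map \((s,t)\mapsto st/(s+t)=(1/s+1/t)^{-1}\) is a harmonic-mean type function, and it is nondecreasing in each argument on \([0,\infty)^2\). Therefore
\[
f_1=\frac{(\alpha a)(\beta b)}{\alpha a+\beta b}\ \ge\ \frac{(ma)(mb)}{ma+mb}=m\,\frac{ab}{a+b}=f_2 .
\]
The last step uses that this map is homogeneous of degree one. If \(m=0\), then \(f_2=0\le f_1\) trivially.

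The only real subtlety is the degenerate case, where \(\alpha a=0\) or \(\beta b=0\). There the \(0/0\) convention applies. We check that \(f_1=0\) forces \(f_2=0\): if \(\alpha a=0\), then either \(a=0\), which kills \(ab/(a+b)\), or \(\alpha=0\), which makes \(m=0\). The case \(\beta b=0\) is symmetric. Measurability and integrability are immediate, since \(0\le f_2\le f_1\le \min\{a,b\}\) and \(k(\cdot\mid x)\) is a density. Integrating the pointwise inequality over \(y\) completes the argument.
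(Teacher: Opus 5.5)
Your proposal is correct and follows the same route as the paper: reduce to a pointwise-in-$y$ comparison of the two integrands, then integrate. The only difference is cosmetic. You justify the pointwise inequality by monotonicity and degree-one homogeneity of $(s,t)\mapsto st/(s+t)$. The paper instead assumes without loss of generality that $\alpha^A\le\alpha^B$, cancels $\alpha^A k^A k^B$, and cross-multiplies to reach $\alpha^B k^A\ge\alpha^A k^A$, which is the same monotonicity applied in one argument.
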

\begin{proof}
It is sufficient to prove that for every \(y\),
\[
\frac{
\alpha(x^A,y) k(y\mid x^A)\,
\alpha(x^B,y) k(y\mid x^B)
}{
\alpha(x^A,y) k(y\mid x^A)
+
\alpha(x^B,y) k(y\mid x^B)
}
\ge
\min\{\alpha(x^{A},y),\alpha(x^{B},y)\}
\frac{
k(y\mid x^A)k(y\mid x^B)
}{
k(y\mid x^A)+k(y\mid x^B)
}.
\]
Using the shorthand
\[
\alpha^A=\alpha(x^A,y),\qquad
\alpha^B=\alpha(x^B,y),\qquad
k^A=k(y\mid x^A),\qquad
k^B=k(y\mid x^B),
\]
and assuming without loss of generality that \(\alpha^A\le \alpha^B\), the
inequality becomes
\[
\frac{\alpha^A k^A\,\alpha^B k^B}
{\alpha^A k^A+\alpha^B k^B}
\ge
\alpha^A\,\frac{k^A k^B}{k^A+k^B}.
\]
If \(\alpha^A k^A k^B=0\), the inequality is trivial. Otherwise, canceling the
positive factor \(\alpha^A k^A k^B\) yields
\[
\frac{\alpha^B}{\alpha^A k^A+\alpha^B k^B}
\ge
\frac{1}{k^A+k^B}.
\]
Since all terms are nonnegative, cross-multiplication gives
\[
\alpha^B(k^A+k^B)
\ge
\alpha^A k^A+\alpha^B k^B
\Longleftrightarrow
\alpha^B k^A
\ge
\alpha^A k^A,
\]
which holds because \(k^A\ge 0\) and \(\alpha^B\ge \alpha^A\). This completes
the proof.
\end{proof}
\section{Proof of Lemma \ref{lem:match_prob}} \label{aug_proof}

\begin{lemma}
Consider two chains at states $x^1_t$ and $x^2_t$ respectively. At $t+1$, let $\mathbb{P}_{\mathrm{aug}}(x^1_t,x^2_t)$ denote the meeting probability under the product-augmented construction, and let $\mathbb{P}_{\mathrm{direct}}(x^1_t,x^2_t)$ denote the meeting probability under the original direct construction. Then
\[
\mathbb{P}_{\mathrm{aug}}(x^1_t,x^2_t) \geq \mathbb{P}_{\mathrm{direct}}(x^1_t,x^2_t).
\]
\end{lemma}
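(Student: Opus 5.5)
The plan is to compute both meeting probabilities \emph{exactly} rather than through the bound of Lemma~\ref{lemma_pml}, and then compare them pointwise in the proposed value $y$. Both the augmented and the direct construction are instances of Poisson Monte Carlo with a shared process. The first step is a closed-form expression for the probability that two PMC samplers with targets $P,Q$ select the same atom, and a check that it does not depend on the common proposal.

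Let $p=\mathrm dP/\mathrm d\mu$ and $q=\mathrm dQ/\mathrm d\mu$. An atom $(t,y)$ is selected by both samplers iff no other atom $(s,z)$ satisfies $s<t\,p(z)/p(y)$ or $s<t\,q(z)/q(y)$. By the Poisson property and Mecke's formula, this region has intensity mass $t\,M(y)$, where
\[
M(y)=\int \max\Bigl\{\tfrac{p(z)}{p(y)},\tfrac{q(z)}{q(y)}\Bigr\}\,\mu(\mathrm dz).
\]
Integrating $e^{-tM(y)}$ over $t$ then gives
\[
\Pr\bigl(X^P=X^Q\in \mathrm dy\bigr)=\frac{\mu(\mathrm dy)}{M(y)}=\frac{1}{\int\max\{p(z)q(y),\,q(z)p(y)\}\,\mu(\mathrm dz)}\,p(y)q(y)\,\mu(\mathrm dy).
\]
This expression is invariant under change of dominating measure, so the two constructions can be compared with whatever reference measure is convenient. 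One caveat: the formula counts selection of the same \emph{atom}. Selecting distinct atoms with equal marks has probability zero for the continuous part, so this is the right quantity there.

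Next, identify where meetings can happen. Write $a_i(y)=\alpha(x^i_t,y)k(y\mid x^i_t)$ and let $r_i$ be the rejection probability from Remark~\ref{rm:markov_kernel}.

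In the \emph{direct} construction the targets are $\kappa_i=a_i\,\mathrm dy+r_i\delta_{x^i_t}$. Since $x^1_t\neq x^2_t$, the atoms are disjoint, so meeting happens only at a common accepted value $y$ (up to Lebesgue-null events).

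In the \emph{augmented} construction the targets are $P_i'$ on $\XSpace\times\{0,1\}$, with density $a_i(y)$ at $(y,1)$ and $(k_i-a_i)(y)$ at $(y,0)$. A meeting of the chains at $t+1$ happens exactly when both select a common atom $(y,1)$; again this holds up to null events such as $y=x^j_t$. If $x^1_t=x^2_t$, faithfulness makes both probabilities equal to $1$, so that case is trivial.

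In both cases, then,
\[
\mathbb P=\int \frac{a_1(y)a_2(y)}{N(y)}\,\mathrm dy,\qquad N(y)=\int\max\{p_1(z)a_2(y),\,p_2(z)a_1(y)\}\,\lambda(\mathrm dz),
\]
with $p_i$ the respective target densities. It therefore suffices to show $N_{\mathrm{aug}}(y)\le N_{\mathrm{direct}}(y)$ for every $y$. Split the $z$-integral into the accepted part and the rejection part.

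The accepted parts are identical in the two constructions: $\int\max\{a_1(z)a_2(y),a_2(z)a_1(y)\}\,\mathrm dz$. In the direct construction the rejection parts are two atoms at distinct points, so the maximum acts on disjoint supports and contributes exactly $r_1a_2(y)+r_2a_1(y)$. In the augmented construction they are densities on the common fiber $\XSpace\times\{0\}$ and contribute
\[
\int\max\{(k_1-a_1)(z)a_2(y),\,(k_2-a_2)(z)a_1(y)\}\,\mathrm dz\le r_1a_2(y)+r_2a_1(y),
\]
using $\max\le$ sum and $\int(k_i-a_i)=r_i$. Hence $N_{\mathrm{aug}}\le N_{\mathrm{direct}}$ pointwise, and integrating gives the lemma.

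The main obstacle is the first step done rigorously. This means handling the direct construction's mixed measure (atoms plus density) under a single dominating measure $\lambda=\mathrm{Leb}+\delta_{x^1_t}+\delta_{x^2_t}$ (or the product analogue). It also means justifying the exact diagonal formula via the Mecke/Slivnyak formula, including the null-event cases where $p(y)$ or $q(y)$ vanishes.
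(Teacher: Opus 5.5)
Your proposal is correct and follows essentially the same route as the paper. With your $p=\mathrm dP/\mathrm d\mu$ one has $p(y)M(y)=1+\beta(y)$, so your diagonal density is the paper's conditional matching probability $1/(1+\beta(u))$ (Eq.~(16) of Li--Anantharam, which you rederive via the Mecke formula) times the common law of the accepted value, and your split of $N(y)$ into accepted and rejection branches, with $\max\{x,y\}\le x+y$ on the rejection fiber, is exactly the paper's decomposition $\beta'(1,u)=I_0+I_1$ with the bound $(x-y)_+\le x$ on $I_0$, yielding $N_{\mathrm{aug}}\le N_{\mathrm{direct}}$ just as the paper obtains $\beta'(1,u)\le\beta(u)$.
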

\begin{proof}
We will show later that, for every \(u\) in the continuous accepted support, one can write the following conditional matching probability for the direct matching strategy:
\begin{align}
   &\Pr_{\mathrm{direct}}(X^2_{t+1} = x | X^1_{t+1} = x, X^1_t = x^1_t, X^2_t = x^2_t) = \frac{1}{1 +\beta(x)} 
\end{align}
For the augmented strategy:
\begin{align}
   &\Pr_{\mathrm{aug}}(X^2_{t+1} = x | X^1_{t+1} = x, X^1_t = x^1_t, X^2_t = x^2_t)\\ &=  \Pr_{\mathrm{aug}}(Y^2_{t+1} = x, U^2_{t+1}=1, | Y^1_{t+1} = x, U^1_{t+1}=1,  X^1_t = x^1_t, X^2_t = x^2_t)\\
   &=\frac{1}{1 +\beta'(1,x)}
\end{align}
where the exact form of $\beta, \beta'$ will be provided later. Furthermore, we have
\[
\beta'(1,x)\leq \beta(x).
\]
Therefore,
\begin{equation}
    \frac{1}{1+\beta'(1,x)}
    \geq
    \frac{1}{1+\beta(x)} .
    \label{ineq1}
\end{equation}
So we have, for any $x$:
\begin{equation}
    \Pr_{\mathrm{aug}}(X^2_{t+1} = x | X^1_{t+1} = x, X^1_t = x^1_t, X^2_t = x^2_t) \geq \Pr_{\mathrm{direct}}(X^2_{t+1} = x | X^1_{t+1} = x, X^1_t = x^1_t, X^2_t = x^2_t) \notag
\end{equation}
and since $P(X^1_{t+1} = x| X^1_t = x^1_t, X^2_t = x^2_t)$ is the same these two chains, integrate w.r.t. this gives the desired inequality.
\end{proof}

\subsection{Proof of conditional matching probabilities}
We prove the result by directly comparing the two matching probabilities. 
For notational simplicity, we use a different notation in this proof. With a slight abuse of notation, we preserve the subscripts \(a\) and \(b\) for the two chains while omitting the explicit dependence on time \(t\), i.e. let \(x^1_t = x_a\) and \(x^2_t = x_b\) denote their current states. 

Assume \(x_a\neq x_b\), and let \(\lambda\) be a
common dominating measure for the proposal parts of the two kernels. For the direct kernel between two chains at state $x_a$ and $x_b$, define
\begin{align}
    P(\mathrm dy)
    &\triangleq \kappa(\mathrm dy\mid x_a)
     = \alpha(x_a,y)K(\mathrm dy\mid x_a)
     + r(x_a)\delta_{x_a}(\mathrm dy), \text{ (chain $a$ kernel)}\notag\\
    Q(\mathrm dy)
    &\triangleq \kappa(\mathrm dy\mid x_b)
     = \alpha(x_b,y)K(\mathrm dy\mid x_b)
     + r(x_b)\delta_{x_b}(\mathrm dy), \text{ (chain $b$ kernel)}\notag\\
    P_a &\triangleq r(x_a),
    \qquad
    Q_b \triangleq r(x_b), \notag\\
    p(y)\lambda(\mathrm dy)
    &\triangleq \alpha(x_a,y)K(\mathrm dy\mid x_a),
    \qquad
    q(y)\lambda(\mathrm dy)
    \triangleq \alpha(x_b,y)K(\mathrm dy\mid x_b), \notag\\
    a(y)\lambda(\mathrm dy)
    &\triangleq \bigl(1-\alpha(x_a,y)\bigr)K(\mathrm dy\mid x_a),
    \qquad
    b(y)\lambda(\mathrm dy)
    \triangleq \bigl(1-\alpha(x_b,y)\bigr)K(\mathrm dy\mid x_b).
    \notag
\end{align}
Thus
\[
P(\mathrm dy)=P_a\delta_{x_a}(\mathrm dy)+p(y)\lambda(\mathrm dy),
\qquad
Q(\mathrm dy)=Q_b\delta_{x_b}(\mathrm dy)+q(y)\lambda(\mathrm dy).
\]

\paragraph{Direct Matching.} Under the notation above, the matching probability at kernel level is equivalent to that with $P$ and $Q$. In particular, recall the two probability measures
\[
P(\mathrm dy)=P_a\,\delta_{x_a}(\mathrm dy)+p(y)\lambda(\mathrm dy),
\qquad
Q(\mathrm dy)=Q_b\,\delta_{x_b}(\mathrm dy)+q(y)\lambda(\mathrm dy),
\]
where \(P_a=P(\{x_a\})\), \(Q_b=Q(\{x_b\})\), and \(p,q\) are the densities of
the non-atomic parts with respect to \(\lambda\). Hence
\[
\int p(v)\lambda(\mathrm dv)=1-P_a,
\qquad
\int q(v)\lambda(\mathrm dv)=1-Q_b.
\]
Let $X_P \sim P$ and $X_Q \sim Q$, then we have:
\begin{equation}
    \Pr(X_Q = u| X_P = u) = \Pr(X^2_{t+1}=u|X^1_{t+1}=u, X^2_{t}=x_b, X^1_{t}=x_a)
\end{equation}

For direct matching with the Poisson Matching Lemma (PML), we choose a proposal
measure \(\mu\) satisfying \(P,Q\ll \mu\). In particular, take
\[
\mu(\mathrm dy)
=
\eta_a\,\delta_{x_a}(\mathrm dy)
+
\eta_b\,\delta_{x_b}(\mathrm dy)
+
t(y)\lambda(\mathrm dy),
\]
where \(\eta_a,\eta_b>0\), and \(t(y)>0\) on the support of \(p+q\). Define
\[
f=\frac{\mathrm dP}{\mathrm d\mu},
\qquad
g=\frac{\mathrm dQ}{\mathrm d\mu}.
\]
Then
\[
f(x_a)=\frac{P_a}{\eta_a},
\qquad
f(x_b)=0,
\qquad
f(y)=\frac{p(y)}{t(y)},
\]
and
\[
g(x_a)=0,
\qquad
g(x_b)=\frac{Q_b}{\eta_b},
\qquad
g(y)=\frac{q(y)}{t(y)}.
\]
For \(u\) in the continuous support with \(p(u),q(u)>0\), define
\[
\beta(u)
\triangleq
f(u)\int
\left(
\frac{g(v)}{g(u)}
-
\frac{f(v)}{f(u)}
\right)_+
\mu(\mathrm dv).
\]
This is the quantity appearing in Eq.~(16) of \cite{li2021unified}, and the
conditional meeting probability is
\[
\Pr(\widetilde U_Q=\widetilde U_P\mid \widetilde U_P=u)
=
\frac{1}{1+\beta(u)},
\]
where \(\widetilde U_P\) and \(\widetilde U_Q\) are the values selected according
to the PML from the common marked point process
\(\Pi=\{(T_i,U_i)\}_{i=1}^{\infty}\), with \(U_i\sim\mu\). In our case, those are $X_P$ and $X_Q$ respectively.

We now compute \(\beta(u)\). Splitting the integral into the atoms \(x_a,x_b\)
and the non-atomic part gives
\[
\beta(u)
=
Q_b\frac{p(u)}{q(u)}
+
p(u)
\int
\left(
\frac{q(v)}{q(u)}
-
\frac{p(v)}{p(u)}
\right)_+
\lambda(\mathrm dv).
\]
Indeed, at \(v=x_a\), we have \(g(x_a)=0\), so the positive part vanishes. At
\(v=x_b\), we have \(f(x_b)=0\) and \(g(x_b)=Q_b/\eta_b\), hence the atomic
contribution is
\[
f(u)\,\eta_b\,\frac{g(x_b)}{g(u)}
=
\frac{p(u)}{t(u)}\,\eta_b\,
\frac{Q_b/\eta_b}{q(u)/t(u)}
=
Q_b\frac{p(u)}{q(u)}.
\]
For the non-atomic part,
\[
f(v)=\frac{p(v)}{t(v)},
\qquad
g(v)=\frac{q(v)}{t(v)}.
\]
Therefore
\begin{align*}
&f(u)
\int
\left(
\frac{g(v)}{g(u)}
-
\frac{f(v)}{f(u)}
\right)_+
t(v)\lambda(\mathrm dv) \\
&\qquad =
\frac{p(u)}{t(u)}
\int
\left[
\frac{t(u)}{t(v)}
\left(
\frac{q(v)}{q(u)}
-
\frac{p(v)}{p(u)}
\right)
\right]_+
t(v)\lambda(\mathrm dv) \\
&\qquad =
p(u)
\int
\left(
\frac{q(v)}{q(u)}
-
\frac{p(v)}{p(u)}
\right)_+
\lambda(\mathrm dv).
\end{align*}
Adding the atomic and non-atomic contributions yields the displayed formula for
\(\beta(u)\).
\paragraph{Augmented Matching.}
Now consider the augmented space \(\{0,1\}\times \mathcal{X}\), where the first
coordinate records whether the proposal is accepted. Define
\begin{align}
    P'(\mathrm dj,\mathrm dy)
    &\triangleq
    K(\mathrm dy\mid x_a)\,
    \mathrm{Ber}\bigl(\alpha(x_a,y)\bigr)(\mathrm dj), \notag\\
    Q'(\mathrm dj,\mathrm dy)
    &\triangleq
    K(\mathrm dy\mid x_b)\,
    \mathrm{Ber}\bigl(\alpha(x_b,y)\bigr)(\mathrm dj).
    \notag
\end{align}
Equivalently, using the notation introduced above,
\begin{align}
    P'(\{1\},\mathrm dy)
    &= p(y)\lambda(\mathrm dy),
    &
    Q'(\{1\},\mathrm dy)
    &= q(y)\lambda(\mathrm dy), \notag\\
    P'(\{0\},\mathrm dy)
    &= a(y)\lambda(\mathrm dy),
    &
    Q'(\{0\},\mathrm dy)
    &= b(y)\lambda(\mathrm dy).
    \notag
\end{align}
Compared to the kernel-level notation, our event of interest is:
\begin{align}
    \Pr(Y_Q = u, U_Q&=1| Y_P = u, U_P=1) \\ &= \Pr(Y^2_{t+1}=u, U^2_{t+1}=1|Y^1_{t+1}=u, U^1_{t+1}=1, X^2_{t}=x_b, X^1_{t}=x_a)
\end{align}
Let \(\mu'\) be a dominating reference measure such that \(P',Q'\ll \mu'\), and
define
\[
f'(j,y)=\frac{\mathrm dP'}{\mathrm d\mu'}(j,y),
\qquad
g'(j,y)=\frac{\mathrm dQ'}{\mathrm d\mu'}(j,y).
\]
We are interested in the accepted-proposal matching event, which corresponds to
the branch \(j=1\). Applying Eq.~(16) of \cite{li2021unified}, for
\(p(u),q(u)>0\), define
\[
\beta'(1,u)
\triangleq
f'(1,u)
\int
\left(
\frac{g'(j,v)}{g'(1,u)}
-
\frac{f'(j,v)}{f'(1,u)}
\right)_+
\mu'(\mathrm dj,\mathrm dv).
\]
The corresponding conditional meeting probability is
\[
\Pr(\widetilde U_{Q'}=\widetilde U_{P'}\mid \widetilde U_{P'}=(1,u))
=
\frac{1}{1+\beta'(1,u)}.
\]
Splitting the integral over the two branches \(j=0\) and \(j=1\), write
\[
\beta'(1,u)=I_0+I_1,
\]
where
\begin{align}
I_0
&\triangleq
f'(1,u)
\int
\left(
\frac{g'(0,v)}{g'(1,u)}
-
\frac{f'(0,v)}{f'(1,u)}
\right)_+
\mu'(\{0\},\mathrm dv), \notag\\
I_1
&\triangleq
f'(1,u)
\int
\left(
\frac{g'(1,v)}{g'(1,u)}
-
\frac{f'(1,v)}{f'(1,u)}
\right)_+
\mu'(\{1\},\mathrm dv).
\notag
\end{align}

First consider \(I_1\). By the definition of the Radon--Nikodym derivatives,
\[
f'(1,v)\mu'(\{1\},\mathrm dv)=p(v)\lambda(\mathrm dv),
\qquad
g'(1,v)\mu'(\{1\},\mathrm dv)=q(v)\lambda(\mathrm dv).
\]
Moreover, for \(p(u),q(u)>0\),
\[
\frac{f'(1,u)}{g'(1,u)}
=
\frac{p(u)}{q(u)}.
\]
Therefore
\begin{align}
I_1
&=
f'(1,u)
\int
\left(
\frac{g'(1,v)}{g'(1,u)}
-
\frac{f'(1,v)}{f'(1,u)}
\right)_+
\mu'(\{1\},\mathrm dv) \notag\\
&=
\int
\left(
\frac{f'(1,u)}{g'(1,u)}g'(1,v)
-
f'(1,v)
\right)_+
\mu'(\{1\},\mathrm dv) \notag\\
&=
\int
\left(
\frac{p(u)}{q(u)}q(v)-p(v)
\right)_+
\lambda(\mathrm dv) \notag\\
&=
p(u)
\int
\left(
\frac{q(v)}{q(u)}
-
\frac{p(v)}{p(u)}
\right)_+
\lambda(\mathrm dv).
\notag
\end{align}

Now consider \(I_0\). Similarly,
\[
f'(0,v)\mu'(\{0\},\mathrm dv)=a(v)\lambda(\mathrm dv),
\qquad
g'(0,v)\mu'(\{0\},\mathrm dv)=b(v)\lambda(\mathrm dv).
\]
Thus
\begin{align}
I_0
&=
f'(1,u)
\int
\left(
\frac{g'(0,v)}{g'(1,u)}
-
\frac{f'(0,v)}{f'(1,u)}
\right)_+
\mu'(\{0\},\mathrm dv) \notag\\
&=
\int
\left(
\frac{f'(1,u)}{g'(1,u)}g'(0,v)
-
f'(0,v)
\right)_+
\mu'(\{0\},\mathrm dv) \notag\\
&=
\int
\left(
\frac{p(u)}{q(u)}b(v)-a(v)
\right)_+
\lambda(\mathrm dv) \notag\\
&=
p(u)
\int
\left(
\frac{b(v)}{q(u)}
-
\frac{a(v)}{p(u)}
\right)_+
\lambda(\mathrm dv).
\notag
\end{align}
Using \((x-y)_+\leq x\) for \(x,y\geq 0\), we obtain
\begin{align}
I_0
&\leq
p(u)
\int
\frac{b(v)}{q(u)}
\lambda(\mathrm dv) \notag\\
&=
\frac{p(u)}{q(u)}
\int b(v)\lambda(\mathrm dv) \notag\\
&=
Q_b\frac{p(u)}{q(u)}.
\notag
\end{align}
Therefore
\begin{align}
\beta'(1,u)
&=
I_0+I_1 \notag\\
&\leq
Q_b\frac{p(u)}{q(u)}
+
p(u)
\int
\left(
\frac{q(v)}{q(u)}
-
\frac{p(v)}{p(u)}
\right)_+
\lambda(\mathrm dv) \notag\\
&=
\beta(u),
\notag
\end{align}
which completes the proof.

\section{Additional Experiments}\label{app:futher_exps}
\paragraph{Riemann manifold Metropolis-adjusted Langevin}
Beyond random-walk Metropolis--Hastings, we also study a geometry-dependent
proposal that adapts to local structure of the target. In particular, we
consider a Riemannian manifold MALA-type proposal \cite{girolami2011riemann} on a two-dimensional banana
distribution, which is defined by the density
\[
\pi(x_1,x_2)
\propto
\exp\left\{
-\frac{x_1^2}{2\sigma_1^2}
-\frac{\bigl(x_2 + b(x_1^2-\sigma_1^2)\bigr)^2}{2\sigma_2^2}
\right\},
\qquad x=(x_1,x_2)\in\mathbb R^2,
\]
where \(b>0\) controls the curvature of the banana shape. In our experiment, we set $b=0.05$.

\begin{wrapfigure}{r}{0.45\textwidth}
\vspace{-10pt}
\centering
\begin{tikzpicture}
\begin{axis}[
    width=0.45\textwidth,
    height=0.35\textwidth,
    xlabel={Number of chains $C$},
    ylabel={Estimated $\mathrm{E}[\tau]$},
    label style={font=\scriptsize},
    tick label style={font=\scriptsize},
    xtick={2,4,8,16,32},
    ymajorgrids=true,
    grid style=dashed,
    xmode=log,
    log basis x={2},
    thick,
    mark options={scale=1.0},
    legend style={
        font=\scriptsize,
        draw=none,
        fill=none,
        at={(0.03,0.97)},
        anchor=north west
    },
    legend cell align={left}
]

\addplot[color=blue, mark=o]
coordinates {
    (2,87.19) (4,158) (8,222) (16,304) (32,395)
};
\addlegendentry{2-step star}

\addplot[color=red, mark=square*]
coordinates {
    (2,82.2) (4,150) (8,218) (16,297) (32,382)
};
\addlegendentry{1-step star}

\addplot[color=orange, mark=triangle*]
coordinates {
    (2,90) (4,192) (8,242) (16,270) (32,296)
};
\addlegendentry{2-step PMC}

\addplot[color=green!60!black, mark=diamond*]
coordinates {
    (2,86) (4,173) (8,226) (16,255) (32,278)
};
\addlegendentry{1-step PMC}

\end{axis}
\end{tikzpicture}
\vspace{-8pt}
\caption{Meeting time versus the number of coupled chains for Riemann MMALA.}
\label{fig:meeting-time-wrap-riemann}
\vspace{-14pt}
\end{wrapfigure} The Riemannian
proposal at the current state \(x\) is
\[
y\sim
\mathcal N\left(
x+\frac{\sigma^2}{2}G(x)^{-1}\nabla \log \pi(x),
\,
\sigma^2 G(x)^{-1}
\right),
\]
 where \(G(x)=-\nabla^2\log \pi(x)\) is the local metric tensor.  This proposal adapts to the local curvature of the target and is followed by
the standard Metropolis--Hastings accept--reject step. In our experiment, we set
\(\sigma=0.4\) and initialize the chains from
\(\pi_0=\mathrm{Unif}([-2,2]^d)\). As shown in
Figure~\ref{fig:meeting-time-wrap-riemann}, we observe the same trend as in
Section~\ref{sec:experiments}: Poisson matching outperforms the pairwise
baselines in terms of average meeting time. Furthermore, the gap between the two-step and one-step methods is smaller than
in the random-walk experiments. We attribute this to the higher acceptance rate
in this regime, which reduces the difference between coupling proposals before
and after the Metropolis--Hastings accept--reject step. Overall, this experiment
suggests that our method can be adapted beyond random-walk proposals to
geometry-dependent settings.
\subsection{Convergence Diagnostic}
We now provide an application of our method for problems in convergence diagnostic. 

\subsubsection{Grand-Coupling Diagnostic}
This line of work follows the discussion in Section \ref{sec:mh_pml} in the main text. First, we restate the classical results by \citet{johnson1996studying}. 

\begin{lemma}[\citet{johnson1996studying}]\label{lem:johnson_bound}
Let \(\kappa\) be a Markov kernel with invariant distribution \(\pi\), and let
\(X_0^{(1)},\ldots,X_0^{(C)} \overset{\mathrm{iid}}{\sim} \pi_0\). Define
\[
\omega \coloneqq \sup\{a\in[0,1]: \pi_0 \ge a\pi\}.
\]
For a grand coupling of the \(C\) chains, let \(\tau\) be the first time at
which all chains coalesce. Then
\[
\|\pi_t-\pi\|_{\mathrm{TV}}
\le
\frac{\Pr(\tau>t)}{1-(1-\omega)^C},
\]
regardless of the coupling strategy.
\end{lemma}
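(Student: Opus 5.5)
The plan is the classical coupling-inequality argument adapted to the grand coupling with faithful (coalescing) chains. I will work on a single probability space carrying the grand coupling $\bar\kappa$, with the $C$ initial states drawn i.i.d.\ from $\pi_0$. The key idea is to realize one chain that is stationary from time zero, so that it can serve as a reference for the law $\pi_t$.

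\emph{Step 1: splitting $\pi_0$.} By the definition of $\omega$, write $\pi_0=\omega\pi+(1-\omega)\rho$ for some probability measure $\rho$ (when $\omega<1$; the case $\omega=1$ is trivial). Each initial state $X_0^{(i)}$ can then be generated by first drawing $B_i\sim\mathrm{Ber}(\omega)$ independently, then sampling from $\pi$ if $B_i=1$ and from $\rho$ otherwise. The event $A=\{\exists i: B_i=1\}$ has probability $1-(1-\omega)^C$. On $A$, let $I$ be, say, the smallest index with $B_i=1$. Conditionally on $A$ and on $I$, the chain $X^{(I)}$ starts from $\pi$ exactly. This holds because the $B_i$ are independent of the samples, and the grand coupling preserves the marginal law $\kappa$ of each chain, so $X_t^{(I)}\sim\pi$ for all $t$ by invariance. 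One check is needed here: conditioning on $I$ must not distort the law of the other chains in a way that breaks marginals. It does not, since $I$ is measurable with respect to the labels $B$ alone, and the kernel update is applied to the whole vector.

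\emph{Step 2: the coupling inequality.} Fix a measurable set $S$. Pick a chain index $J$ uniformly from $\{1,\dots,C\}$, independently of everything else; then $X_t^{(J)}\sim\pi_t$ unconditionally. On $\{\tau\le t\}$, faithfulness gives $X_t^{(J)}=X_t^{(I)}$. Hence
\[
\Pr(X_t^{(J)}\in S\mid A)-\pi(S)\le \Pr(\tau>t\mid A).
\]
The obstacle is that conditioning on $A$ changes the law of $X_t^{(J)}$ away from $\pi_t$. I would handle this by decomposing over $A$ and $A^c$:
\[
\pi_t(S)=\Pr(A)\Pr(X_t^{(J)}\in S\mid A)+\Pr(A^c)\Pr(X_t^{(J)}\in S\mid A^c).
\]
On $A^c$, every chain starts from $\rho$, and the law there is $\rho\kappa^t$. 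The same kind of decomposition applied to a single chain gives $\pi_t=\omega\pi+(1-\omega)\rho\kappa^t$, so $\pi_t-\pi=(1-\omega)(\rho\kappa^t-\pi)$. Combining these identities isolates the conditional-on-$A$ term.

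\emph{Step 3: closing the bound.} The simplest route is to bound $\Pr(\tau>t\mid A)\le \Pr(\tau>t)/\Pr(A)$, which gives $\big|\Pr(X_t^{(J)}\in S\mid A)-\pi(S)\big|\le \Pr(\tau>t)/(1-(1-\omega)^C)$. Alternatively, I would choose $J=I$'s complement appropriately: on $A$, the event $\{\tau\le t\}$ forces every chain, including any chain started from $\rho$, to equal the stationary one. Then I would directly compare $\rho\kappa^t$ through a chain with $B_j=0$. I expect the main difficulty to be precisely this bookkeeping: converting the conditional statement on $A$ into a bound on $\|\pi_t-\pi\|_{\mathrm{TV}}$ while keeping exactly the denominator $1-(1-\omega)^C$. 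My first attempt would be to show that $\|\pi_t-\pi\|_{\mathrm{TV}}\le \sup_S|\Pr(X_t^{(J)}\in S\mid A)-\pi(S)|$, using that $\pi_t$ is a mixture of the $A$-conditional law with a law that itself relaxes to $\pi$ no faster. Failing that, I would follow Johnson's original argument, which bounds $\sup_S(\pi_t(S)-\pi(S))$ via $\Pr(X_t^{(J)}\in S,\tau\le t)\le\pi(S)$ together with a normalization by $\Pr(A)$. Taking the supremum over $S$ then yields the stated inequality, uniformly over the coupling strategy.
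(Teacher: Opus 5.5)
The paper quotes this lemma from \citet{johnson1996studying} without proof, so your argument has to stand on its own. It breaks at exactly the step you flag as bookkeeping.

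Step 2 correctly gives $|\nu_A(S)-\pi(S)|\le\Pr(\tau>t\mid A)$, where $\nu_A$ is the law of $X_t^{(J)}$ given $A$. The inequality you then need, $\|\pi_t-\pi\|_{\mathrm{TV}}\le\|\nu_A-\pi\|_{\mathrm{TV}}$, is backwards. Write $\sigma=\rho\kappa^t$. Since $\Pr(B_J=1\mid A)=\omega/\Pr(A)$, you get $\nu_A-\pi=\bigl(1-\omega/\Pr(A)\bigr)(\sigma-\pi)$, whereas $\pi_t-\pi=(1-\omega)(\sigma-\pi)$. Because $\Pr(A)\le 1$, this gives $\|\nu_A-\pi\|_{\mathrm{TV}}\le\|\pi_t-\pi\|_{\mathrm{TV}}$, strictly when $0<\omega<1$ and $\sigma\neq\pi$. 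Conditioning on $A$ moves weight toward the stationary start, so the mixture intuition you invoke says $\pi_t$ is farther from $\pi$, not closer.

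If you carry out the bookkeeping correctly along your route, including $\Pr(\tau>t\mid A)\le\Pr(\tau>t)/\Pr(A)$, you get only $\|\pi_t-\pi\|_{\mathrm{TV}}\le\Pr(\tau>t)/\bigl(1-(1-\omega)^{C-1}\bigr)$. A sanity check shows something essential is missing. For $C=1$ one has $\tau\equiv 0$ and $\nu_A=\pi$, so your argument would ``prove'' $\pi_t=\pi$ for all $t$. That fails, e.g.\ for the identity kernel with $\pi_0\neq\pi$ and $\omega>0$. The statement needs $C\ge 2$, and your sketch never uses it.

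Your fallback $\Pr(X_t^{(J)}\in S,\tau\le t)\le\pi(S)$ is also false in general. On $A^c$ all chains start from $\rho$ and can coalesce inside a $\pi$-null set $S$; this already happens at $t=0$ if $\rho$ has an atom outside the support of $\pi$.

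The missing idea is to condition on the whole label vector $B$ and exploit the \emph{mixed} configurations rather than the event $A$. Take $B=b$ with some $b_i=1$ and some $b_j=0$. The pair $(X_t^{(i)},X_t^{(j)})$ is a coupling of $\pi$ and $\sigma$: the grand coupling sees $B$ only through $X_0$, and each coordinate is marginally a $\kappa$-chain. Also $\{\tau\le t\}\subseteq\{X_t^{(i)}=X_t^{(j)}\}$, so $\Pr(\tau>t\mid B=b)\ge\|\sigma-\pi\|_{\mathrm{TV}}$. Averaging over the mixed event, whose probability is $1-\omega^C-(1-\omega)^C$, gives, for $0<\omega<1$ (the cases $\omega\in\{0,1\}$ are trivial),
\[
\|\pi_t-\pi\|_{\mathrm{TV}}=(1-\omega)\|\sigma-\pi\|_{\mathrm{TV}}\le\frac{(1-\omega)\Pr(\tau>t)}{1-\omega^C-(1-\omega)^C}\le\frac{\Pr(\tau>t)}{1-(1-\omega)^C}.
\]
The last step uses $\bigl(1-\omega^C-(1-\omega)^C\bigr)-(1-\omega)\bigl(1-(1-\omega)^C\bigr)=\omega\bigl(1-\omega^{C-1}-(1-\omega)^C\bigr)\ge 0$, which holds for $C\ge 2$ and fails for $C=1$.

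Your ``alternative'' of comparing through one fixed chain with $B_j=0$ uses only the event $\{B_j=0\}\cap\{\exists i\neq j:B_i=1\}$. That event has probability $(1-\omega)\bigl(1-(1-\omega)^{C-1}\bigr)$, so it again lands on the weaker exponent $C-1$. Using all mixed configurations symmetrically is what recovers the stated denominator.
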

We observe that for sufficiently large \(C\), the denominator
\(1-(1-\omega)^C\) is close to \(1\). In this regime, the meeting-time tail
\(\Pr(\tau>t)\) becomes a direct proxy for the distributional bias
\(\|\pi_t-\pi\|_{\mathrm{TV}}\). This also highlights why faster meeting times
are desirable: the bound is controlled by the tail probability \(\Pr(\tau>t)\).

However, this diagnostic depends on knowledge of \(\omega\). In many settings,
\(\omega\) is not easy to determined and may be zero, making the bound vacuous. For example, if \(\pi_0\) is
Gaussian while \(\pi\) is heavy-tailed, then \(\pi_0\) cannot dominate any
positive multiple of \(\pi\) uniformly over the state space. We now provide an alternative bound using the idea of list-level coupling.

\begin{lemma}\label{lem:new_grand_bound}
In the setting stated above, for any faithful grand coupling of the \(C\) chains,
\[
\|\pi_t-\pi\|_{\mathrm{TV}}
\le
1-\alpha_C+\Pr(\tau>t),
\]
where \(\alpha_C\) is an achievable probability of the initial list-level
inclusion event
\[
\{Y_0\in\{X_0^{(1)},\ldots,X_0^{(C)}\}\},
\qquad Y_0\sim\pi,\quad X_0^{(i)}\sim\pi_0.
\]
Using list-level Poisson matching \citep{li2021unified,rowan2026one}, one may
take
\[
\alpha_C
\ge
\mathbb{E}_{X\sim \pi_0}\left[
\frac{C\pi(X)}{C\pi_0(X)+\pi(X)}
\right].
\]
\end{lemma}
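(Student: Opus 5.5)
The approach is to construct an auxiliary stationary chain and couple it with the grand coupling, in the style of Johnson's argument for Lemma~\ref{lem:johnson_bound}. The twist is that the initial coupling is list-level rather than a domination-based split. I draw $Y_0\sim\pi$ jointly with $X_0^{(1)},\dots,X_0^{(C)}\overset{\mathrm{iid}}{\sim}\pi_0$ using a coupling under which $\Pr(Y_0\in\{X_0^{(1)},\dots,X_0^{(C)}\})=\alpha_C$. On the inclusion event there is a (random) index $I$ with $Y_0=X_0^{(I)}$.

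The evolution of $Y$ is defined so that it follows chain $I$ exactly: $Y_t=X_t^{(I)}$ for all $t$. Off the inclusion event, $Y$ is run independently with kernel $\kappa$.

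The first thing to check is that $Y_t\sim\pi$ for every $t$, which holds by stationarity. This needs care, since $I$ is chosen jointly with the initial states. The point is that, conditionally on $\mathcal F_0=\sigma(Y_0,X_0^{(1:C)},I)$, the chain $X^{(I)}$ is Markov with kernel $\kappa$ started at $Y_0$. This is because the grand coupling $\bar\kappa$ has marginal $\kappa$ for each coordinate and its randomness is fresh given the current states. So $Y$ is a $\kappa$-chain started from $\pi$, hence stationary.

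Second, for every $t$ the state $X_t^{(1)}$ has law $\pi_t$. On the event $\{\tau\le t\}\cap\{Y_0\in\{X_0^{(i)}\}\}$, faithfulness gives $X_t^{(1)}=X_t^{(I)}=Y_t$. The coupling inequality then yields
\[
\|\pi_t-\pi\|_{\mathrm{TV}}\le \Pr(X_t^{(1)}\neq Y_t)\le \Pr(\text{no inclusion})+\Pr(\tau>t)=1-\alpha_C+\Pr(\tau>t).
\]
One point needs attention: the initial coupling must leave the $X_0^{(i)}$ i.i.d.\ $\pi_0$ so that the grand coupling is run from the intended initial law. I handle this by building the list coupling with the $X$'s independent. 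The Poisson list-matching construction does exactly that, because each $X_0^{(i)}$ is selected from its own independent Poisson process, or equivalently by Algorithm~\ref{alg:list_coupling_barycenter} with the independent-residual choice. Any such coupling gives an "achievable" $\alpha_C$.

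For the explicit bound on $\alpha_C$, I invoke the list-level Poisson matching lemma of \citet{li2021unified,rowan2026one}. Use $C$ independent Poisson processes with marks from $\pi_0$, and let $X_0^{(i)}$ be the first point of process $i$; this is the exact sample when the target equals the proposal. Let $Y_0$ be selected by PMC for target $\pi$ over the union of the processes. The list PML then gives
\[
\Pr(Y_0\in\{X_0^{(i)}\}\mid Y_0=x)\ge \Bigl(1+\tfrac{\pi(x)}{C\pi_0(x)}\Bigr)^{-1}.
\]
Integrating against $\pi(x)\,dx$ gives $\mathbb E_{\pi}[C\pi_0/(C\pi_0+\pi)]$. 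Rewriting this as an expectation under $X\sim\pi_0$ of $C\pi(X)/(C\pi_0(X)+\pi(X))$ requires the change of measure $\pi\,dx=(\pi/\pi_0)\pi_0\,dx$, which matches the stated form.

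I expect the main obstacle to be the first step: justifying that following the random index $I$ preserves the $\kappa$-Markov law of $Y$ and its independence structure from the future randomness of $\bar\kappa$. I would handle it by conditioning on $\mathcal F_0$ and applying the marginal property of Definition~\ref{def:grand_coupling} step by step.
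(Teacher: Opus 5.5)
Your main argument is correct and has the same structure as the paper's proof. You build a stationary auxiliary chain $Y$ glued to one of the coupled chains on the initial inclusion event $A$. Faithfulness then gives $X_t^{(j)}=Y_t$ on $A\cap\{\tau\le t\}$, and the coupling inequality plus a union bound finish the proof.

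There are two minor differences in route. First, you fix the followed index $I$ at time $0$. The paper instead re-selects at each step some $i^\star$ with $Y_t=X_t^{(i^\star)}$, and lets $Y$ attach to a chain later even off $A$. In your version $Y_s=X_s^{(I)}$ holds by definition, so faithfulness is needed only once at the end. Your filtration argument also actually proves stationarity of $Y$, which the paper only asserts.

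Second, the initial couplings differ. The paper uses a \emph{single} Poisson process with $\pi_0$-marks and takes its first $C$ arrivals as $X_0^{(1)},\dots,X_0^{(C)}$. That is exactly the setting of the list Poisson matching lemma of \citet{li2021unified} (the $C$ smallest scores under the proposal), so the bound follows verbatim. Your version, with $C$ independent processes and the first point of each, is also valid, but that lemma does not apply literally. You need a short direct computation instead:
\begin{itemize}
\item Conditional on $Y_0=x$, the selected point has score $R^\ast\sim\mathrm{Exp}(C)$ in the union process.
\item The earlier-arriving points of its own process number $\mathrm{Poisson}(R^\ast\beta(x))$, where $\beta(x)=\int\pi_0(v)\,(f(x)-f(v))_+\,\mathrm dv\le f(x)$ and $f=\pi/\pi_0$.
\item Hence the inclusion probability is exactly $C/(C+\beta(x))\ge C\pi_0(x)/(C\pi_0(x)+\pi(x))$.
\end{itemize}
The paper's single-process version gives $1-(\beta/(1+\beta))^C$, which is at least as large by Bernoulli's inequality.

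One side claim is wrong. Algorithm~\ref{alg:list_coupling_barycenter} with independent residual draws does \emph{not} produce independent $X_0^{(i)}$. Take $\mathcal{X}=\{a,b\}$, $C=2$, $\pi_0$ uniform and $\pi=\delta_a$. The hit event then has probability one, the hit coordinate is $a$, and the residual law is $\delta_b$. So the output is $(a,b)$ or $(b,a)$, never $(a,a)$. With i.i.d.\ initial states, by contrast, the inclusion probability is only $3/4$.

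Since $\Pr(\tau>t)$ is estimated under i.i.d.\ initialisation, only couplings that keep the $X_0^{(i)}$ i.i.d.\ give an admissible $\alpha_C$. You should drop the ``equivalently, Algorithm~\ref{alg:list_coupling_barycenter}'' alternative and the sentence ``any such coupling gives an achievable $\alpha_C$.'' Keep the Poisson construction.
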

The proof for this Lemma is at the end of this part. We note that the lower bound on \(\alpha_C\) can be estimated by Monte Carlo,
since it is an expectation under \(X\sim\pi_0\). Moreover, this quantity is
well-defined and bounded between \(0\) and \(1\). While this form is convenient
in a wider range of scenarios, it introduces an additive term in the upper
bound; improving this term, for instance using ideas from
\citet{biswas2019estimating}, is an interesting direction for future work. We now provide some examples of convergence diagnostic analysis using the RWMH examples in Section \ref{sec:experiments}.

\paragraph{Student-\(t\) RWMH.}
This setting considers a heavy-tailed Cauchy target sampled using a
Student-\(t\) random-walk proposal with \(2\) degrees of freedom, where the
chains are initialized from \(\pi_0=\mathcal N(0,\mathrm I_d)\). Note that in
this case, Lemma~\ref{lem:johnson_bound} is not applicable because the relevant
rejection coefficient is unbounded, as discussed above. We provide the estimated
bias in Figure~\ref{fig:johnson_vs_ours_gaussian} (left), where we test dimension \(d=1\).
Overall, the results suggest that using a sufficiently large number of chains
can substantially improve the diagnostic bound in this heavy-tailed setting;
among the tested values, \(C=16\) gives the tightest bound by reducing the
additive bias term associated with \(\alpha_C\) in
Lemma~\ref{fig:johnson_vs_ours_gaussian} (left). We note, however, that \(\alpha_C\) need not
approach \(1\) even for large \(C\); for example, if \(\pi_0\) is uniform on a
bounded set while \(\pi\) has mass outside this support, then the list-level
inclusion probability is bounded away from \(1\). Thus this bias may persist,
but Lemma~\ref{lem:new_grand_bound} still provides a useful heuristic for
understanding the behavior of the coupled chains, and additional chain-level
statistics may be used to further reduce or calibrate it. For example, one can
switch to another diagnostic method when the bound becomes saturated. We
illustrate this idea in the next experiment.

\begin{table}[t]
\centering
\caption{Comparison between the additive bias term \(1-\alpha_C\) from our
list-level bound and the Johnson denominator \(1-(1-\omega)^C\) for
\(\pi=\mathcal N(0,I_d)\) and \(\pi_0=\mathcal N(\mathbf 1_d,16I_d)\).}
\label{tab:alpha_vs_johnson}

\begin{minipage}[t]{0.32\linewidth}
\centering
\textbf{\(d=1\)}
\vspace{0.3em}

\begin{tabular}{c c c}
\hline
\(C\) & \(1-\alpha_C\) & Johnson \\
\hline
2   & 0.5673 & 0.4251 \\
8   & 0.2627 & 0.8908 \\
16  & 0.1526 & 0.9881 \\
32  & 0.0846 & 0.9999 \\
64  & 0.0439 & 1.0000 \\
128 & 0.0242 & 1.0000 \\
\hline
\end{tabular}
\end{minipage}
\hfill
\begin{minipage}[t]{0.32\linewidth}
\centering
\textbf{\(d=2\)}
\vspace{0.3em}

\begin{tabular}{c c c}
\hline
\(C\) & \(1-\alpha_C\) & Johnson \\
\hline
2   & 0.7468 & 0.1135 \\
8   & 0.4748 & 0.3824 \\
16  & 0.3289 & 0.6186 \\
32  & 0.2039 & 0.8546 \\
64  & 0.1149 & 0.9788 \\
128 & 0.0653 & 0.9996 \\
\hline
\end{tabular}
\end{minipage}
\hfill
\begin{minipage}[t]{0.32\linewidth}
\centering
\textbf{\(d=3\)}
\vspace{0.3em}

\begin{tabular}{c c c}
\hline
\(C\) & \(1-\alpha_C\) & Johnson \\
\hline
2   & 0.8538 & 0.0281 \\
8   & 0.6661 & 0.1077 \\
16  & 0.5292 & 0.2037 \\
32  & 0.3934 & 0.3660 \\
64  & 0.2619 & 0.5980 \\
128 & 0.1549 & 0.8384 \\
\hline
\end{tabular}
\end{minipage}

\end{table}
\paragraph{Gaussian RWMH.}
We recall the Gaussian target setting
\(\pi=\mathcal N(0,\mathrm I_d)\), with chains initialized from
\(\pi_0=\mathcal N(\mathbf 1_d,16\mathrm I_d)\). This setting is more tractable
and allows us to illustrate how one can combine the bounds from
Lemma~\ref{lem:johnson_bound} and Lemma~\ref{lem:new_grand_bound}. First,
Table~\ref{tab:alpha_vs_johnson} shows how the additive bias term
\(1-\alpha_C\) in Lemma~\ref{lem:new_grand_bound} and Johnson's denominator
\(1-(1-\omega)^C\) vary with the number of chains \(C\). As expected in this
example, \(1-\alpha_C\) decreases toward \(0\) as \(C\) increases, while
Johnson's denominator increases toward \(1\).

Figure~\ref{fig:johnson_vs_ours_gaussian} (right) reports the resulting bounds
for \(d=3\) and \(C\in\{32,64\}\). In this case, \(C=64\) yields a tighter bound
for both approaches. Furthermore, the bound from
Lemma~\ref{lem:new_grand_bound} is tighter than the one from
Lemma~\ref{lem:johnson_bound} for \(t < 50\), where the additive bias term is
less restrictive than Johnson's multiplicative denominator. For larger \(t\),
Johnson's bound can become tighter as the meeting-time tail decreases. Thus, in
this scenario, it is natural to take the minimum of the two bounds rather than
relying on a single diagnostic metric.

\begin{figure}[t]
\centering

\pgfplotsset{
    boundplotstyle/.style={
        width=\linewidth,
        height=0.78\linewidth,
        xlabel={Iteration $t$},
        ylabel={Estimated upper bound},
        ymin=0, ymax=1.05,
        grid=both,
        tick label style={font=\footnotesize},
        label style={font=\footnotesize},
        legend style={
            at={(0.97,0.97)},
            anchor=north east,
            font=\scriptsize,
            draw=none,
            fill=white,
            fill opacity=0.85,
            text opacity=1
        },
        title style={font=\footnotesize},
    }
}

\begin{minipage}[t]{0.48\textwidth}
\centering
\begin{tikzpicture}
\begin{axis}[
    boundplotstyle,
    xmin=0, xmax=50,
    title={\(d=1, \text{RWMH Student-$t$ (Lemma \ref{lem:new_grand_bound})}\)},
]

\addplot[thick, blue] coordinates {
(1,1.) (2,1.) (3,0.9073) (4,0.765) (5,0.6724)
(6,0.6063) (7,0.56335) (8,0.5315) (9,0.5093) (10,0.49405)
(11,0.4808) (12,0.47265) (13,0.46575) (14,0.4612) (15,0.45725)
(16,0.45395) (17,0.4519) (18,0.45) (19,0.4485) (20,0.44765)
(21,0.44685) (22,0.44575) (23,0.4451) (24,0.4447) (25,0.44435)
(26,0.44385) (27,0.44355) (28,0.4429) (29,0.44275) (30,0.44245)
(31,0.44235) (32,0.4422) (33,0.44215) (34,0.4419) (35,0.4418)
(36,0.4418) (37,0.4417) (38,0.44165) (39,0.44155) (40,0.44145)
(41,0.44135) (42,0.4413) (43,0.44115) (44,0.4411) (45,0.441)
(46,0.441) (47,0.44095) (48,0.4409) (49,0.4409) (50,0.4409)
};
\addlegendentry{$C=2$}

\addplot[thick, red] coordinates {
(1,1.) (2,1.0) (3,0.85975) (4,0.72445) (5,0.6216)
(6,0.543) (7,0.48165) (8,0.43625) (9,0.403) (10,0.3762)
(11,0.3565) (12,0.34025) (13,0.3287) (14,0.31945) (15,0.31165)
(16,0.305) (17,0.2997) (18,0.29585) (19,0.29315) (20,0.28965)
(21,0.2874) (22,0.28545) (23,0.28405) (24,0.2823) (25,0.28075)
(26,0.28025) (27,0.2797) (28,0.2791) (29,0.2783) (30,0.27775)
(31,0.2773) (32,0.2767) (33,0.27615) (34,0.2758) (35,0.27545)
(36,0.27515) (37,0.275) (38,0.2746) (39,0.2743) (40,0.27415)
(41,0.2739) (42,0.2738) (43,0.27355) (44,0.27345) (45,0.27335)
(46,0.2733) (47,0.2732) (48,0.2731) (49,0.273) (50,0.27285)
};
\addlegendentry{$C=8$}

\addplot[thick, orange] coordinates {
(1,1.) (2,1.0) (3,0.8604) (4,0.7304) (5,0.6277)
(6,0.546) (7,0.4817) (8,0.43175) (9,0.3952) (10,0.36525)
(11,0.3424) (12,0.32195) (13,0.30925) (14,0.29775) (15,0.287)
(16,0.27945) (17,0.27395) (18,0.26845) (19,0.26425) (20,0.26045)
(21,0.25645) (22,0.2534) (23,0.25115) (24,0.24925) (25,0.24745)
(26,0.24555) (27,0.2438) (28,0.24275) (29,0.24185) (30,0.24075)
(31,0.24005) (32,0.23925) (33,0.23885) (34,0.23835) (35,0.23795)
(36,0.2376) (37,0.2371) (38,0.2366) (39,0.2362) (40,0.23605)
(41,0.23605) (42,0.23595) (43,0.2357) (44,0.23565) (45,0.2355)
(46,0.2353) (47,0.2351) (48,0.23495) (49,0.2347) (50,0.23455)
};
\addlegendentry{$C=16$}

\end{axis}
\end{tikzpicture}
\end{minipage}
\hspace{0pt}
\begin{minipage}[t]{0.48\textwidth}
\centering
\begin{tikzpicture}
\begin{axis}[
    boundplotstyle,
    xmin=0, xmax=85,
    title={\(d=3, \text{ RWMH Gaussian}\)},
    legend style={
    at={(0.03,0.03)},
    anchor=south west,
    font=\scriptsize
},,
]

\addplot[thick, blue, dashed] coordinates {
(0,1) (1,1) (2,1) (3,1) (4,1) (5,1) (6,1) (7,1) (8,1) (9,1)
(10,1) (11,1) (12,1) (13,1) (14,1) (15,1) (16,1) (17,1)
(18,1) (19,1) (20,1) (21,1) (22,1) (23,1) (24,1) (25,1)
(26,1) (27,1) (28,1) (29,1) (30,1) (31,1) (32,1) (33,1)
(34,1) (35,1) (36,1) (37,1) (38,1) (39,1) (40,1) (41,1)
(42,1) (43,1) (44,1) (45,0.99900917) (46,0.92195212)
(47,0.84872060) (48,0.77986110) (49,0.71209462) (50,0.64760715)
(51,0.59022424) (52,0.53612035) (53,0.48201646) (54,0.42955208)
(55,0.39075031) (56,0.35140202) (57,0.31205374) (58,0.27543798)
(59,0.24975229) (60,0.22515962) (61,0.20220645) (62,0.17925329)
(63,0.15575362) (64,0.13443997) (65,0.11859135) (66,0.10656826)
(67,0.09235916) (68,0.07705705) (69,0.06995250) (70,0.06394096)
(71,0.05792942) (72,0.05082487) (73,0.04699934) (74,0.04153430)
(75,0.03770877) (76,0.03333674) (77,0.03115072) (78,0.02841820)
(79,0.02459268) (80,0.02295316) (81,0.02131365) (82,0.01748813)
(83,0.01366260) (84,0.01202309) (85,0.00983707) (86,0.00819756)
(87,0.00765105) (88,0.00710455) (89,0.00655805) (90,0.00601154)
(91,0.00546504) (92,0.00491854) (93,0.00437203) (94,0.00327902)
(95,0.00273252) (96,0.00218602) (97,0.00163951) (98,0.00163951)
(99,0.00163951) (100,0.00109301) (101,0.00109301)
(102,0.00054650) (103,0)
};
\addlegendentry{Lemma \ref{lem:johnson_bound}, $C=32$}

\addplot[thick, blue] coordinates {
(0,1) (1,1) (2,1) (3,1) (4,1) (5,1) (6,1) (7,1) (8,1) (9,1)
(10,1) (11,1) (12,1) (13,1) (14,1) (15,1) (16,1) (17,1)
(18,1) (19,1) (20,1) (21,1) (22,1) (23,1) (24,1) (25,1)
(26,1) (27,1) (28,1) (29,1) (30,1) (31,1) (32,1) (33,1)
(34,1) (35,1) (36,1) (37,1) (38,1) (39,0.96702965)
(40,0.92782965) (41,0.88922965) (42,0.85362965)
(43,0.81902965) (44,0.78702965) (45,0.75902965)
(46,0.73082965) (47,0.70402965) (48,0.67882965)
(49,0.65402965) (50,0.63042965) (51,0.60942965)
(52,0.58962965) (53,0.56982965) (54,0.55062965)
(55,0.53642965) (56,0.52202965) (57,0.50762965)
(58,0.49422965) (59,0.48482965) (60,0.47582965)
(61,0.46742965) (62,0.45902965) (63,0.45042965)
(64,0.44262965) (65,0.43682965) (66,0.43242965)
(67,0.42722965) (68,0.42162965) (69,0.41902965)
(70,0.41682965) (71,0.41462965) (72,0.41202965)
(73,0.41062965) (74,0.40862965) (75,0.40722965)
(76,0.40562965) (77,0.40482965) (78,0.40382965)
(79,0.40242965) (80,0.40182965) (81,0.40122965)
(82,0.39982965) (83,0.39842965) (84,0.39782965)
(85,0.39702965) (86,0.39642965) (87,0.39622965)
(88,0.39602965) (89,0.39582965) (90,0.39562965)
(91,0.39542965) (92,0.39522965) (93,0.39502965)
(94,0.39462965) (95,0.39442965) (96,0.39422965)
(97,0.39402965) (98,0.39402965) (99,0.39402965)
(100,0.39382965) (101,0.39382965) (102,0.39362965)
(103,0.39342965)
};
\addlegendentry{Lemma \ref{lem:new_grand_bound}, $C=32$}

\addplot[thick, red, dashed] coordinates {
(0,1) (1,1) (2,1) (3,1) (4,1) (5,1) (6,1) (7,1) (8,1) (9,1)
(10,1) (11,1) (12,1) (13,1) (14,1) (15,1) (16,1) (17,1)
(18,1) (19,1) (20,1) (21,1) (22,1) (23,1) (24,1) (25,1)
(26,1) (27,1) (28,1) (29,1) (30,1) (31,1) (32,1) (33,1)
(34,1) (35,1) (36,1) (37,1) (38,1) (39,1) (40,1) (41,1)
(42,0.98763106) (43,0.92341665) (44,0.86856684)
(45,0.80836582) (46,0.75418491) (47,0.70000400)
(48,0.65117429) (49,0.60301348) (50,0.55217707)
(51,0.51505311) (52,0.46722675) (53,0.42441714)
(54,0.38762763) (55,0.35585488) (56,0.32073762)
(57,0.28595481) (58,0.26120551) (59,0.23612175)
(60,0.21605475) (61,0.19464994) (62,0.17424849)
(63,0.15786043) (64,0.14247573) (65,0.12575323)
(66,0.11404747) (67,0.10401397) (68,0.09364602)
(69,0.08227472) (70,0.07725797) (71,0.06956562)
(72,0.06254216) (73,0.05351201) (74,0.04782636)
(75,0.04013401) (76,0.03645506) (77,0.03244166)
(78,0.02809381) (79,0.02374596) (80,0.02040145)
(81,0.01739140) (82,0.01605360) (83,0.01404690)
(84,0.01304355) (85,0.01170575) (86,0.01036795)
(87,0.00869570) (88,0.00836125) (89,0.00702345)
(90,0.00568565) (91,0.00468230) (92,0.00434785)
(93,0.00367895) (94,0.00301005) (95,0.00301005)
(96,0.00267560) (97,0.00234115) (98,0.00200670)
(99,0.00167225) (100,0.00167225) (101,0.00133780)
(102,0.00100335) (103,0.00066890) (104,0.00066890)
(105,0.00066890) (106,0.00066890) (107,0.00066890)
(108,0.00066890) (109,0.00033445) (110,0.00033445)
(111,0.00033445) (112,0.00033445) (113,0)
};
\addlegendentry{Lemma \ref{lem:johnson_bound}, $C=64$}

\addplot[thick, red] coordinates {
(0,1) (1,1) (2,1) (3,1) (4,1) (5,1) (6,1) (7,1) (8,1) (9,1)
(10,1) (11,1) (12,1) (13,1) (14,1) (15,1) (16,1) (17,1)
(18,1) (19,1) (20,1) (21,1) (22,1) (23,1) (24,1) (25,1)
(26,1) (27,1) (28,1) (29,1) (30,1) (31,1) (32,1) (33,1)
(34,1) (35,1) (36,1) (37,1) (38,1) (39,0.96746994)
(40,0.93086994) (41,0.89726994) (42,0.85246994)
(43,0.81406994) (44,0.78126994) (45,0.74526994)
(46,0.71286994) (47,0.68046994) (48,0.65126994)
(49,0.62246994) (50,0.59206994) (51,0.56986994)
(52,0.54126994) (53,0.51566994) (54,0.49366994)
(55,0.47466994) (56,0.45366994) (57,0.43286994)
(58,0.41806994) (59,0.40306994) (60,0.39106994)
(61,0.37826994) (62,0.36606994) (63,0.35626994)
(64,0.34706994) (65,0.33706994) (66,0.33006994)
(67,0.32406994) (68,0.31786994) (69,0.31106994)
(70,0.30806994) (71,0.30346994) (72,0.29926994)
(73,0.29386994) (74,0.29046994) (75,0.28586994)
(76,0.28366994) (77,0.28126994) (78,0.27866994)
(79,0.27606994) (80,0.27406994) (81,0.27226994)
(82,0.27146994) (83,0.27026994) (84,0.26966994)
(85,0.26886994) (86,0.26806994) (87,0.26706994)
(88,0.26686994) (89,0.26606994) (90,0.26526994)
(91,0.26466994) (92,0.26446994) (93,0.26406994)
(94,0.26366994) (95,0.26366994) (96,0.26346994)
(97,0.26326994) (98,0.26306994) (99,0.26286994)
(100,0.26286994) (101,0.26266994) (102,0.26246994)
(103,0.26226994) (104,0.26226994) (105,0.26226994)
(106,0.26226994) (107,0.26226994) (108,0.26226994)
(109,0.26206994) (110,0.26206994) (111,0.26206994)
(112,0.26206994) (113,0.26186994)
};
\addlegendentry{Lemma \ref{lem:new_grand_bound}, $C{=}64$}

\end{axis}
\end{tikzpicture}
\end{minipage}

\caption{(Left) Estimated upper bounds for $|\pi_t - \pi|_{\mathrm{TV}}$ using Lemma \ref{lem:new_grand_bound} (Right) Estimated upper bounds $|\pi_t - \pi|_{\mathrm{TV}}$ comparing Johnson's bound
\citep{johnson1996studying} and the list-level bound, i.e. Lemma \ref{lem:new_grand_bound}, in the Gaussian target
setting with \(d=3\).}
\label{fig:johnson_vs_ours_gaussian}
\end{figure}

\paragraph{Proof of Lemma \ref{lem:new_grand_bound}} We introduce the proof of Lemma \ref{lem:new_grand_bound} as follows.
\begin{proof}
Our goal is to generate \(X_0^{(1)},\ldots,X_0^{(C)}\sim \pi_0\)
independently, together with an auxiliary stationary chain \(Y_0\sim\pi\), such
that
\[
\Pr\!\left(Y_0\in\{X_0^{(1)},\ldots,X_0^{(C)}\}\right)\ge \alpha_C .
\]
To do this, we draw a Poisson point process
\[
\Pi=\{(X_i,S_i)\}_{i\ge 1},
\]
where the marks \(X_i\) have base distribution \(\pi_0\) and \(S_i\) denotes
the arrival time. We take the first \(C\) marked points as the initial chains,
\(X_0^{(1)},\ldots,X_0^{(C)}\), and select \(Y_0\) using the Poisson Monte Carlo
procedure described in Section~\ref{sec:accel_pml}. By the list-level Poisson
matching lemma \citep{li2021unified}, this construction attains an inclusion
probability \(\alpha_C\) satisfying the lower bound stated above. Other
Poisson-based coupling constructions, such as \citet{rowan2026one}, may also be
used.

We define the auxiliary chain \(Y_t\) in the following way. At time \(t\), first update
the observed list \(X_t^{1:C}\) to \(X_{t+1}^{1:C}\) using the original grand
coupling. Then:
\begin{itemize}
    \item if \(Y_t=X_t^{(i)}\) for some \(i\), choose one such index \(i^\star\)
    and set
    $Y_{t+1}=X_{t+1}^{(i^\star)};$
    \item if \(Y_t\notin\{X_t^{(1)},\ldots,X_t^{(C)}\}\), draw
    $Y_{t+1}\sim \kappa(\cdot | Y_t)$
    independently of the observed update.
\end{itemize}
Then, under this construction, we have $Y_{t+1} \sim \kappa(\cdot|Y_t)$ and thus $Y_t$ is at stationary. We now prove the event implication. On
\[
A=\{Y_0\in\{X_0^{(1)},\ldots,X_0^{(C)}\}\},
\]
choose an index \(I\) such that
\[
Y_0=X_0^{(I)}.
\]
We claim that
\[
Y_s=X_s^{(I)}
\qquad
\text{for all }s\ge 0.
\]
This holds at \(s=0\). If \(Y_s=X_s^{(I)}\), then \(Y_s\) belongs to the current
list. The auxiliary update chooses some index \(i^\star\) with
\(
Y_s=X_s^{(i^\star)}
\), hence
\(X_s^{(i^\star)}=X_s^{(I)}.
\)
By faithfulness of the observed grand coupling,
\[
X_{s+1}^{(i^\star)}=X_{s+1}^{(I)}.
\]
Since \(Y_{s+1}=X_{s+1}^{(i^\star)}\), we get
$Y_{s+1}=X_{s+1}^{(I)}.$

Therefore \(Y_s=X_s^{(I)}\) for all \(s\). If also \(\tau\le t\), then all observed chains have coalesced by time \(t\).
Thus for any fixed \(j\),
\[
X_t^{(j)}=X_t^{(I)}=Y_t.
\]
Therefore,
\[
A\cap\{\tau\le t\}
\subseteq
\{X_t^{(j)}=Y_t\},
\]
or equivalently,
\[
\{X_t^{(j)}\neq Y_t\}
\subseteq
A^c\cup\{\tau>t\}.
\]

Since \(X_t^{(j)}\) has law
\(\pi_t\) and \(Y_t\sim\pi\), the coupling inequality gives
\[
\|\pi_t-\pi\|_{\mathrm {TV}}
\le
\Pr(X_t^{(j)}\neq Y_t).
\]
Combining this with the event inclusion gives
\[
\|\pi_t-\pi\|_{\mathrm {TV}}
\le
\Pr(A^c\cup\{\tau>t\}).
\]
Finally,
\[
\Pr(A^c\cup\{\tau>t\})
\le
\Pr(A^c)+\Pr(\tau>t)
=
1-\alpha_C+\Pr(\tau>t).
\]
\end{proof}
\subsubsection{Weight-Harmonization}\label{weight_harmonization}
We show that coupling multiple chains can potentially be used to extend and improve the weight-harmonization framework for convergence diagnostics \cite{corenflos2025coupling}. Weight harmonization starts from \(N\) parallel MCMC chains, each associated with an importance weight \(W_0^{(i)}\) that reflects the discrepancy between the initial distribution \(\hat{\pi}_0\) and the target distribution \(\pi\). The basic intuition is that, as the empirical distribution \(\hat{\pi}_t\) approaches \(\pi\), the empirical distribution of the weights \(W_t^{(i)}\) should become closer to uniform. Since \(\hat{\pi}_t\) is itself unknown, the main challenge is to design an interactive update rule that propagates these weights consistently over time.

In \cite{corenflos2025coupling}, this is achieved through pairwise couplings together with partner exchanges across chains. Unlike grand coupling, where the goal is for all chains to coalesce into a single state, weight harmonization uses meetings between chains only as a mechanism for updating and propagating weights, after which the chains are randomly reassigned to new partners. Here, we show that, instead of coupling chains only in pairs, one can couple multiple chains jointly to improve diagnostic performance. The motivation is that, under pairwise couplings, two selected chains may be far apart, making coalescence unlikely and thus slowing down the weight updates. By allowing several chains to interact simultaneously, the joint coupling construction creates more opportunities for partial coalescence among nearby chains. This can lead to faster weight propagation and a more effective convergence diagnostic. We show the procedure in Algorithm \ref{alg:weight-harmonization}  where replacing group size $m=2$ recovers the proposed algorithm by \citet{corenflos2025coupling}. We note that the proposed procedure naturally satisfies the criteria for convergence diagnostics, i.e. Proposition 2 and Theorem 2 in \cite{corenflos2025coupling}.

\paragraph{Experiment.} We follow the setup in \cite{corenflos2025coupling}, where the target distribution is $\pi = \mathcal{N}(0, \mathrm{I}_d)$. We set $d=20$ and use $C=10000$ chains for convergence diagnostics. The initial distribution is $\pi_0 = \mathcal{N}(10\,\mathbf{1}_d, 5\mathrm{I}_d)$, and we consider the autoregressive kernel
\[
\kappa(\mathrm{d}y \mid x)
= \mathcal{N}(y;\rho x, (1-\rho^2)\mathrm{I}_d)\,\mathrm{d}y,
\]
since in this case the marginal law at iteration $t$ admits a closed-form expression,
\[
\hat{\pi}_t
= \mathcal{N}\!\left(\rho^t 10\,\mathbf{1}_d,\,
\bigl(1+4\rho^{2t}\bigr)\mathrm{I}_d\right).
\]
This allows us to evaluate the coupling schemes in a setting where the transient distribution is available exactly, and hence to directly assess the tightness of the bound. As the discrepancy measure, we use the squared Hellinger distance.

The performance of the different estimation schemes is shown in Figure \ref{fig:wh_graph}. We observe that our approach consistently outperforms the baseline. We attribute this improvement to the order-invariant coupling property of the Poisson Matching  framework, which avoids the asymmetry of reference-based constructions and enables more effective matching across chains.


\begin{figure}[t]
\centering
\begin{tikzpicture}
\begin{axis}[
    width=0.9\linewidth,
    height=0.35\linewidth,
    xlabel={time (iteration)},
    ylabel={sq-Hellinger},
    grid=both,
    legend pos=north east,
    legend style={font=\footnotesize},
    thick,
    mark options={scale=0.8},
    ymin=0,
    ymax=1.05,
    xmax=200,
    xmin=-1
]

\addplot[
    color=blue,
]
coordinates {
(1,0.99) (2,0.99) (3,0.99) (4,0.99) (5,0.99) (6,0.99) (7,0.99) (8,0.99)
(9,0.99) (10,0.99) (11,0.99) (12,0.99) (13,0.99) (14,0.97551) (15,0.97551)
(16,0.97551) (17,0.97551) (18,0.97046) (19,0.97046) (20,0.97046) (21,0.97046)
(22,0.97046) (23,0.97046) (24,0.96747) (25,0.94995) (26,0.94247) (27,0.91841)
(28,0.9134) (29,0.88688) (30,0.88458) (31,0.87904) (32,0.8748) (33,0.86098)
(34,0.85054) (35,0.83662) (36,0.82529) (37,0.81434) (38,0.79342) (39,0.76556)
(40,0.74473) (41,0.7294) (42,0.71144) (43,0.67965) (44,0.65771) (45,0.6314)
(46,0.60336) (47,0.55907) (48,0.53411) (49,0.50723) (50,0.48358) (51,0.46905)
(52,0.44983) (53,0.41503) (54,0.38556) (55,0.36253) (56,0.34549) (57,0.33583)
(58,0.31443) (59,0.29579) (60,0.27389) (61,0.25609) (62,0.231) (63,0.21747)
(64,0.19613) (65,0.18612) (66,0.17521) (67,0.16332) (68,0.15266) (69,0.14077)
(70,0.13229) (71,0.11615) (72,0.10728) (73,0.10035) (74,0.094629) (75,0.087023)
(76,0.082419) (77,0.076702) (78,0.071591) (79,0.065693) (80,0.058395) (81,0.054483)
(82,0.049735) (83,0.046996) (84,0.0427) (85,0.038642) (86,0.036845) (87,0.033653)
(88,0.030007) (89,0.028037) (90,0.025727) (91,0.023035) (92,0.020805) (93,0.018554)
(94,0.017463) (95,0.016773) (96,0.015641) (97,0.014856) (98,0.013838) (99,0.012748)
(100,0.011322) (101,0.010933) (102,0.0099859) (103,0.0092588) (104,0.0084132)
(105,0.0077605) (106,0.0072311) (107,0.0068549) (108,0.0064847) (109,0.0059538)
(110,0.0054755) (111,0.0051029) (112,0.0046992) (113,0.0043335) (114,0.0041451)
(115,0.0028976) (116,0.0026784) (117,0.0024644) (118,0.0022622) (119,0.0020859)
(120,0.0019847) (121,0.0018392) (122,0.0015808) (123,0.0015101) (124,0.0014242)
(125,0.0013157) (126,0.0012188) (127,0.0011222) (128,0.0010165) (129,0.00091009)
(130,0.00085056) (131,0.00074647) (132,0.00065118) (133,0.00061384) (134,0.00057999)
(135,0.00050244) (136,0.00046321) (137,0.00043496) (138,0.00038481) (139,0.00037015)
(140,0.00034838) (141,0.00032539) (142,0.00030773) (143,0.00025948) (144,0.00023781)
(145,0.00023035) (146,0.00019619) (147,0.00018032) (148,0.00014138) (149,0.00013149)
(150,0.00011662) (151,9.9328e-05) (152,8.2188e-05) (153,7.7704e-05) (154,7.5178e-05)
(155,7.1118e-05) (156,6.7249e-05) (157,5.9782e-05) (158,5.6458e-05) (159,5.4385e-05)
(160,5.0802e-05) (161,4.9332e-05) (162,4.7816e-05) (163,4.3833e-05) (164,4.1939e-05)
(165,4.0039e-05) (166,3.6824e-05) (167,3.4408e-05) (168,3.3198e-05) (169,3.2214e-05)
(170,3.1354e-05) (171,2.9937e-05) (172,2.8937e-05) (173,2.7899e-05) (174,2.74e-05)
(175,2.2777e-05) (176,2.2142e-05) (177,1.6917e-05) (178,1.6158e-05) (179,1.5635e-05)
(180,1.5194e-05) (181,1.502e-05) (182,1.3422e-05) (183,1.0337e-05) (184,7.3026e-06)
(185,6.753e-06) (186,6.6251e-06) (187,6.4975e-06) (188,6.2983e-06) (189,6.2007e-06)
(190,5.5543e-06) (191,5.4642e-06) (192,5.3117e-06) (193,2.559e-06) (194,2.4973e-06)
(195,2.282e-06) (196,2.1468e-06) (197,1.8729e-06) (198,1.6354e-06) (199,1.4198e-06)
(200,1.3579e-06) (201,1.2494e-06) (202,1.1038e-06) (203,1.0805e-06) (204,7.3264e-07)
(205,7.0108e-07) (206,6.4622e-07) (207,6.1141e-07) (208,5.7767e-07) (209,5.2856e-07)
(210,5.087e-07) (211,2.8564e-07) (212,2.3461e-07) (213,2.1683e-07) (214,1.8555e-07)
(215,1.6008e-07) (216,1.4935e-07) (217,1.3219e-07) (218,1.2831e-07) (219,1.1315e-07)
(220,9.8764e-08) (221,9.0365e-08) (222,8.5889e-08) (223,8.1048e-08) (224,7.5558e-08)
(225,6.645e-08) (226,6.1371e-08) (227,5.7559e-08) (228,5.3916e-08) (229,4.8122e-08)
(230,4.5968e-08) (231,4.4357e-08) (232,4.1773e-08) (233,3.5754e-08) (234,3.2327e-08)
(235,3.0293e-08) (236,2.8596e-08) (237,2.727e-08) (238,2.598e-08) (239,2.4373e-08)
(240,2.3166e-08) (241,1.6978e-08) (242,1.6342e-08) (243,1.5795e-08) (244,1.3693e-08)
(245,1.2014e-08) (246,1.1505e-08) (247,1.0552e-08) (248,1.0135e-08) (249,9.8196e-09)
(250,9.5336e-09) (251,9.3243e-09) (252,8.9877e-09) (253,8.6679e-09) (254,8.3037e-09)
(255,3.7392e-09) (256,3.4472e-09) (257,3.2084e-09)
};
\addlegendentry{Poisson Matching ($m=40$)}

\addplot[
    color=red,
]
coordinates {
(1,0.99) (2,0.99) (3,0.99) (4,0.99) (5,0.99) (6,0.99) (7,0.99) (8,0.99)
(9,0.99) (10,0.99) (11,0.99) (12,0.98586) (13,0.98586) (14,0.98586) (15,0.98586)
(16,0.98586) (17,0.98586) (18,0.98293) (19,0.98293) (20,0.98293) (21,0.98) (22,0.98)
(23,0.98) (24,0.98) (25,0.98) (26,0.98) (27,0.98) (28,0.97793) (29,0.97793) (30,0.97646)
(31,0.97439) (32,0.97439) (33,0.97439) (34,0.97439) (35,0.97293) (36,0.97146) (37,0.97146)
(38,0.96629) (39,0.96629) (40,0.96348) (41,0.96275) (42,0.95982) (43,0.95836) (44,0.95763)
(45,0.95366) (46,0.95263) (47,0.95159) (48,0.94952) (49,0.94723) (50,0.94495) (51,0.94245)
(52,0.94047) (53,0.93906) (54,0.93678) (55,0.93126) (56,0.92654) (57,0.9232) (58,0.92231)
(59,0.91904) (60,0.91378) (61,0.91201) (62,0.90785) (63,0.90369) (64,0.90015) (65,0.89496)
(66,0.89071) (67,0.88627) (68,0.8809) (69,0.87692) (70,0.86956) (71,0.86494) (72,0.85906)
(73,0.8531) (74,0.84756) (75,0.83886) (76,0.83292) (77,0.82454) (78,0.81668) (79,0.80977)
(80,0.79899) (81,0.78809) (82,0.78133) (83,0.77274) (84,0.76438) (85,0.75426) (86,0.74296)
(87,0.73254) (88,0.72015) (89,0.70867) (90,0.69544) (91,0.68463) (92,0.67122) (93,0.65758)
(94,0.64258) (95,0.63046) (96,0.61558) (97,0.60068) (98,0.58559) (99,0.56964) (100,0.55514)
(101,0.53815) (102,0.52337) (103,0.50502) (104,0.48877) (105,0.47208) (106,0.45754) (107,0.44134)
(108,0.42431) (109,0.40589) (110,0.38819) (111,0.37084) (112,0.35508) (113,0.33719) (114,0.31969)
(115,0.30458) (116,0.28891) (117,0.27386) (118,0.2589) (119,0.24371) (120,0.2297) (121,0.21584)
(122,0.20331) (123,0.19089) (124,0.17825) (125,0.16732) (126,0.1569) (127,0.14746) (128,0.13897)
(129,0.13005) (130,0.12182) (131,0.11288) (132,0.10572) (133,0.099885) (134,0.093364) (135,0.08648)
(136,0.080136) (137,0.074786) (138,0.070036) (139,0.064902) (140,0.060183) (141,0.055375) (142,0.051857)
(143,0.048458) (144,0.045333) (145,0.0416) (146,0.038887) (147,0.036314) (148,0.034063) (149,0.031724)
(150,0.029297) (151,0.027239) (152,0.025758) (153,0.023841) (154,0.022149) (155,0.020828) (156,0.019623)
(157,0.018368) (158,0.017215) (159,0.016035) (160,0.014965) (161,0.013872) (162,0.012972) (163,0.012205)
(164,0.011334) (165,0.010525) (166,0.0097701) (167,0.0091438) (168,0.0085228) (169,0.007887) (170,0.0073994)
(171,0.006855) (172,0.0063291) (173,0.0059044) (174,0.0055412) (175,0.0051734) (176,0.0048186) (177,0.0044854)
(178,0.0041991) (179,0.0039194) (180,0.0036628) (181,0.0033984) (182,0.0031618) (183,0.0029458) (184,0.002755)
(185,0.0025344) (186,0.0023659) (187,0.002197) (188,0.0020671) (189,0.0019394) (190,0.0018034) (191,0.001694)
(192,0.0015977) (193,0.0014762) (194,0.0013778) (195,0.0012781) (196,0.0011925) (197,0.0011007) (198,0.0010259)
(199,0.00095874) (200,0.00089661) (201,0.00082999) (202,0.00076709) (203,0.00071255) (204,0.00066647)
(205,0.00061577) (206,0.00057485) (207,0.00053692) (208,0.00049797) (209,0.00046681) (210,0.00043803)
(211,0.00040533) (212,0.00037707) (213,0.00035346) (214,0.00032239) (215,0.0003002) (216,0.00027675)
(217,0.00025842) (218,0.00024183) (219,0.00022211) (220,0.00020606) (221,0.00019401) (222,0.00017992)
(223,0.0001689) (224,0.00015489) (225,0.00014429) (226,0.00013549) (227,0.00012722) (228,0.00011859)
(229,0.00010856) (230,0.00010056) (231,9.2885e-05) (232,8.569e-05) (233,8.0191e-05) (234,7.4659e-05)
(235,6.9749e-05) (236,6.5485e-05) (237,6.1265e-05) (238,5.6651e-05) (239,5.37e-05) (240,4.9767e-05)
(241,4.6505e-05) (242,4.3474e-05) (243,4.0257e-05) (244,3.7511e-05) (245,3.5018e-05) (246,3.2663e-05)
(247,3.0281e-05) (248,2.8248e-05) (249,2.6249e-05) (250,2.4482e-05) (251,2.2713e-05) (252,2.127e-05)
(253,1.9906e-05) (254,1.8583e-05) (255,1.7421e-05) (256,1.6086e-05) (257,1.5039e-05)
};
\addlegendentry{\citet{corenflos2025coupling}}

\addplot[
    color=orange,
]
coordinates {
(1,1.0) (2,1.0) (3,1.0) (4,1.0) (5,1.0) (6,1.0) (7,1.0) (8,1.0)
(9,0.9987) (10,0.9871) (11,0.9404) (12,0.8378) (13,0.6896) (14,0.5282)
(15,0.3823) (16,0.2656) (17,0.1794) (18,0.1190) (19,0.0779) (20,0.0506)
(21,0.0327) (22,0.0210) (23,0.0135) (24,0.0087) (25,0.0056) (26,0.0036)
(27,0.0023) (28,0.0015) (29,0.0009) (30,0.0006) (31,0.0004) (32,0.0002)
(33,0.0002) (34,0.0001) (35,6.4254e-05) (36,4.1127e-05) (37,2.6345e-05)
(38,1.6868e-05) (39,1.0788e-05) (40,6.9141e-06) (41,4.4107e-06) (42,2.8014e-06)
(43,1.7881e-06) (44,1.1325e-06) (45,7.1526e-07) (46,4.7684e-07) (47,2.9802e-07)
(48,1.7881e-07) (49,1.1921e-07) (50,5.9605e-08) (51,5.9605e-08) (52,5.9605e-08)
(53,0.0) (54,0.0) (55,0.0) (56,0.0) (57,0.0) (58,0.0) (59,0.0) (60,0.0)
(61,0.0) (62,0.0) (63,0.0) (64,0.0) (65,0.0) (66,0.0) (67,0.0) (68,0.0)
(69,0.0) (70,0.0) (71,0.0) (72,0.0) (73,0.0) (74,0.0) (75,0.0) (76,0.0)
(77,0.0) (78,0.0) (79,0.0) (80,0.0) (81,0.0) (82,0.0) (83,0.0) (84,0.0)
(85,0.0) (86,0.0) (87,0.0) (88,0.0) (89,0.0) (90,0.0) (91,0.0) (92,0.0)
(93,0.0) (94,0.0) (95,0.0) (96,0.0) (97,0.0) (98,0.0) (99,0.0) (100,0.0)
(101,0.0) (102,0.0) (103,0.0) (104,0.0) (105,0.0) (106,0.0) (107,0.0) (108,0.0)
(109,0.0) (110,0.0) (111,0.0) (112,0.0) (113,0.0) (114,0.0) (115,0.0) (116,0.0)
(117,0.0) (118,0.0) (119,0.0) (120,0.0) (121,0.0) (122,0.0) (123,0.0) (124,0.0)
(125,0.0) (126,0.0) (127,0.0) (128,0.0) (129,0.0) (130,0.0) (131,0.0) (132,0.0)
(133,0.0) (134,0.0) (135,0.0) (136,0.0) (137,0.0) (138,0.0) (139,0.0) (140,0.0)
(141,0.0) (142,0.0) (143,0.0) (144,0.0) (145,0.0) (146,0.0) (147,0.0) (148,0.0)
(149,0.0) (150,0.0) (151,0.0) (152,0.0) (153,0.0) (154,0.0) (155,0.0) (156,0.0)
(157,0.0) (158,0.0) (159,0.0) (160,0.0) (161,0.0) (162,0.0) (163,0.0) (164,0.0)
(165,0.0) (166,0.0) (167,0.0) (168,0.0) (169,0.0) (170,0.0) (171,0.0) (172,0.0)
(173,0.0) (174,0.0) (175,0.0) (176,0.0) (177,0.0) (178,0.0) (179,0.0) (180,0.0)
(181,0.0) (182,0.0) (183,0.0) (184,0.0) (185,0.0) (186,0.0) (187,0.0) (188,0.0)
(189,0.0) (190,0.0) (191,0.0) (192,0.0) (193,0.0) (194,0.0) (195,0.0) (196,0.0)
(197,0.0) (198,0.0) (199,0.0) (200,0.0) (201,0.0) (202,0.0) (203,0.0) (204,0.0)
(205,0.0) (206,0.0) (207,0.0) (208,0.0) (209,0.0) (210,0.0) (211,0.0) (212,0.0)
(213,0.0) (214,0.0) (215,0.0) (216,0.0) (217,0.0) (218,0.0) (219,0.0) (220,0.0)
(221,0.0) (222,0.0) (223,0.0) (224,0.0) (225,0.0) (226,0.0) (227,0.0) (228,0.0)
(229,0.0) (230,0.0) (231,0.0) (232,0.0) (233,0.0) (234,0.0) (235,0.0) (236,0.0)
(237,0.0) (238,0.0) (239,0.0) (240,0.0) (241,0.0) (242,0.0) (243,0.0) (244,0.0)
(245,0.0) (246,0.0) (247,0.0) (248,0.0) (249,0.0) (250,0.0) (251,0.0) (252,0.0)
(253,0.0) (254,0.0) (255,0.0) (256,0.0) (257,0.0)
};
\addlegendentry{Theoretical (ground-truth)}

\end{axis}
\end{tikzpicture}
\vspace{-10pt}
\caption{Squared Hellinger distance across the chains for the different estimation schemes.}
\label{fig:wh_graph}
\end{figure}
\begin{algorithm}[ht]
\SetAlgoLined
\DontPrintSemicolon
\SetKwInOut{Input}{Input}\SetKwInOut{Output}{Output}
\SetKwFunction{GrandCouple}{GrandCouple}
\SetKwFunction{Average}{Average}
\SetKwFunction{FindClusters}{FindClusters}
\SetKwFunction{Reshuffle}{Reshuffle}

\Input{Chain states $\{X_t^{(i)}\}_{i=1}^N$, weights $\{W_t^{(i)}\}_{i=1}^N$, group size $m$, and a grand-coupling transition rule}
\Output{Updated chain states $\{X_{t+1}^{(i)}\}_{i=1}^N$, updated weights $\{W_{t+1}^{(i)}\}_{i=1}^N$}
\;

Partition the chains into groups
\[
\mathcal{G}_t = \{G_t^{(1)},\dots,G_t^{(L)}\},
\qquad |G_t^{(\ell)}| = m \text{ for all } \ell
\text{ where $L = N/m$}\]
$\mathcal{A}_t \gets \emptyset$\tcp*{indices of groups with at least one coalescence}

\For{$\ell = 1,\dots,L$}{
    Let $G_t^{(\ell)} = \{i_1,\dots,i_m\}$\;
    
    Sample the joint transition:
    \(
    (X_{t+1}^{(i_1)},\dots,X_{t+1}^{(i_m)})
    \gets
    \GrandCouple(X_t^{(i_1)},\dots,X_t^{(i_m)})
    \)
    \;
    
    Find all coalesced clusters:
    \(
    \mathcal{C}_\ell \gets \FindClusters(X_{t+1}^{(i_1)},\dots,X_{t+1}^{(i_m)})
    \)
    \;
    
    \ForEach{cluster $C \in \mathcal{C}_\ell$ such that $|C| \geq 2$}{
        $\bar{W} \gets \Average\big(\{W_t^{(j)} : j \in C\}\big)$\;
        \ForEach{$j \in C$}{
            $W_{t+1}^{(j)} \gets \bar{W}$\;
        }
    }
    
    \ForEach{$j \in G_t^{(\ell)}$ not belonging to any cluster $C \in \mathcal{C}_\ell$ with $|C|\geq 2$}{
        $W_{t+1}^{(j)} \gets W_t^{(j)}$\;
    }
    
    \If{there exists a cluster $C \in \mathcal{C}_\ell$ with $|C|\geq 2$}{
        $\mathcal{A}_t \gets \mathcal{A}_t \cup \{\ell\}$\;
    }
}
\;

\If{$|\mathcal{A}_t| \geq 2$}{
    Reshuffle the groups indexed by $\mathcal{A}_t$ to form the next grouping $\mathcal{G}_{t+1}$\;
}
\Else{
    $\mathcal{G}_{t+1} \gets \mathcal{G}_t$\;
}
\;

\caption{Weight harmonization with groupwise grand coupling}
\label{alg:weight-harmonization}
\end{algorithm}


\end{document}